\newcommand{\comment}[1]{}
\newtheorem{theorem}{Theorem}
\newtheorem{claim}[theorem]{Claim}
\newcommand{\claimproof}[2]%
{\noindent{\em Proof of Claim \ref{#1}.}
#2\hspace*{\fill}$\Box$~~~~~\vspace{5mm} }
\newtheorem{corollary}[theorem]{Corollary}
\newtheorem{definition}[theorem]{Definition}
\newtheorem{lemma}[theorem]{Lemma}
\newcommand{\FF}{\mathbb{F}}
\newcommand{\CC}{\mathbb{C}}
\newcommand{\RR}{\mathbb{R}}
\newcommand{\PP}{\mathbb{P}}
\newcommand{\QQ}{\mathbb{Q}}
\newcommand{\ZZ}{\mathbb{Z}}
\newcommand{\NN}{\mathbb{N}}
\newcommand{\cU}{\mathcal{U}}
\newcommand{\cI}{\mathcal{I}}
\newcommand{\cH}{\mathcal{H}}
\newcommand{\cP}{\mathcal{P}}
\newcommand{\cC}{\mathcal{C}}
\newcommand{\bcP}{\mathfrak{P}}
\newcommand{\sps}{\Sigma\Pi\Sigma}
\newcommand{\rk}{\text{rk}}
\newcommand{\fanrk}{\text{ind-fanin}}
\newcommand{\simp}{\text{sim}}
\newcommand{\degr}{\text{deg}}
\newcommand{\lrsp}{\text{sp}}
\newcommand{\scal}{\text{sc}}
\newcommand{\piv}{\text{rep}}
\newcommand{\nod}{\text{nod}}
\renewcommand{\mod}{\text{mod~}}
\newcommand{\poly}{\text{poly}}
\newcommand{\matnucleus}{\text{mat-nucleus}}
\newcommand{\ideal}[1]{\langle {#1} \rangle}
\newcommand{\radsp}{\text{radsp}}
\newcommand{\cl}{\text{cl}}
\newcommand{\fami}{\text{fam}}
\newcommand{\trun}{\text{trun}}
\newcommand{\pa}{\text{Part}}
\newcommand{\sg}{\text{SG}}
\newcommand{\EX}{\hbox{\bf E}}
\newcommand{\pr}{{\rm Pr}}
\newcommand{\cE}{\mathcal{E}}
\newcommand{\sgkb}[2]{SG_k(#1 #2)}
\def\vec#1{\overline{#1}}
\def\({\left(}
\def\){\right)}
\def\<{\langle}
\def\>{\rangle}
\def\le{\leqslant}
\def\ge{\geqslant}
\begin{document}

\author{Nitin Saxena}
\thanks{Hausdorff Center for Mathematics, Bonn - 53115, Germany. {\tt ns@hcm.uni-bonn.de}}

\author{C. Seshadhri}
\thanks{IBM Almaden, San Jose - 95126, USA. {\tt csesha@gmail.com}}

\title[SG Configurations \& Rank Bounds]{From Sylvester-Gallai Configurations to Rank Bounds: 
Improved Black-box Identity Test for Depth-3 Circuits}

\begin{abstract}
We study the problem of identity testing for depth-$3$ circuits
of top fanin $k$ and degree $d$ (called $\sps(k,d)$ identities). We 
give a new structure theorem for such identities.
A direct application of our theorem improves the known 
deterministic $d^{k^{O(k)}}$-time black-box identity test over rationals (Kayal \& Saraf, FOCS 2009) 
to one that takes $d^{O(k^2)}$-time.
Our structure theorem
essentially says that the number of independent variables in a real depth-$3$ identity
is very small. This theorem settles 
affirmatively the stronger rank conjectures posed by Dvir \& Shpilka (STOC 2005) 
and Kayal \& Saraf (FOCS 2009). 
Our techniques provide a unified framework that actually beats all 
known rank bounds and hence gives the best running time (for \emph{every field})
for black-box identity tests.

Our main theorem (almost optimally) pins down the relation between higher dimensional Sylvester-Gallai
theorems and the rank of depth-$3$ identities in a very transparent
manner. The existence of this was hinted at by Dvir \& Shpilka (STOC 2005), but
first proven, for reals, by Kayal \& Saraf (FOCS 2009). 
We introduce the concept of \emph{Sylvester-Gallai rank bounds} for any field,
and show the intimate connection between this and depth-$3$ identity rank bounds.
We also prove the first ever theorem about high dimensional Sylvester-Gallai
configurations over \emph{any field}. Our proofs and techniques are very
different from previous results and devise a very interesting ensemble of combinatorics and algebra.
The latter concepts are ideal theoretic and involve a new Chinese remainder theorem.
Our proof methods explain the structure of \emph{any} depth-$3$
identity $C$: there is a \emph{nucleus} of $C$ that forms a low rank identity, while the remainder 
is a high dimensional Sylvester-Gallai configuration.   

\end{abstract}

\comment{ 
We study the problem of identity testing for depth-3 circuits
of top fanin k and degree d. We 
give a new structure theorem for such identities.
A direct application of our theorem improves the known 
deterministic d^{k^k}-time black-box identity test over rationals (Kayal-Saraf, FOCS 2009) 
to one that takes d^{k^2}-time.
Our structure theorem
essentially says that the number of independent variables in a real depth-3 identity
is very small. This theorem settles 
affirmatively the stronger rank conjectures posed by Dvir-Shpilka (STOC 2005) 
and Kayal-Saraf (FOCS 2009). 
Our techniques provide a unified framework that actually beats all 
known rank bounds and hence gives the best running time (for *every* field)
for black-box identity tests.

Our main theorem (almost optimally) pins down the relation between higher dimensional Sylvester-Gallai
theorems and the rank of depth-3 identities in a very transparent
manner. The existence of this was hinted at by Dvir-Shpilka (STOC 2005), but
first proven, for reals, by Kayal-Saraf (FOCS 2009). 
We introduce the concept of Sylvester-Gallai rank bounds for any field,
and show the intimate connection between this and depth-3 identity rank bounds.
We also prove the first ever theorem about high dimensional Sylvester-Gallai
configurations over *any* field. Our proofs and techniques are very
different from previous results and devise a very interesting ensemble of combinatorics and algebra.
The latter concepts are ideal theoretic and involve a new Chinese remainder theorem.
Our proof methods explain the structure of *any* depth-3
identity C: there is a nucleus of C that forms a low rank identity, while the remainder 
is a high dimensional Sylvester-Gallai configuration.  

Keywords: depth-3, identities, Sylvester-Gallai, incidence configuration, Chinese remaindering.
}

\date{}	
\maketitle

\section{Introduction}

Polynomial identity testing (PIT) ranks as one of the most important open problems 
in the intersection of algebra and computer science. 
We are provided an arithmetic circuit that computes a polynomial $p(x_1,x_2,\cdots,x_n)$
over a field $\FF$, and we wish to test if $p$ is identically zero
(in other words, if $p$ is the zero polynomial). 
In the black-box setting, the circuit
is provided as a black-box and we are only allowed to evaluate the polynomial
$p$ at various domain points. 
The main goal is to devise a deterministic polynomial time
algorithm for PIT.
Kabanets \& Impagliazzo~\cite{KI04} and Agrawal~\cite{A05, A06}
have shown connections between deterministic algorithms for identity testing and circuit lower bounds,
emphasizing the importance of this problem. To know more about the current state of the general identity 
testing problem see the surveys \cite{S09, AS09}.

The first randomized polynomial time PIT algorithm, 
which was a black-box algorithm,
was given (independently) by Schwartz~\cite{Sch80} and Zippel~\cite{Z79}.
Randomized algorithms that use less
randomness were given by Chen \& Kao~\cite{CK00}, Lewin \& Vadhan~\cite{LV98},
and Agrawal \& Biswas~\cite{AB03}.
Klivans \& Spielman~\cite{ks01} observed that 
even for depth-$3$ circuits for bounded top fanin,
deterministic identity testing was open. Progress
towards this was first made by Dvir \& Shpilka~\cite{DS06},
who gave a quasi-polynomial time algorithm, although with
a doubly-exponential dependence on the top fanin.
The problem was resolved by a polynomial time algorithm given by
Kayal and Saxena~\cite{KS07}, with a running time exponential
in the top fanin. As expected, the current understanding of depth-$4$ circuits is even 
more sparse. Identity tests are known only for rather special depth-$4$ circuits 
\cite{AM07, S08, SV09, KMSV09}. 
Why is progress restricted to such small depth circuits?
Agrawal and Vinay~\cite{AV08} showed that
an efficient black-box identity test for depth-$4$
circuits will actually give a quasi-polynomial black-box test, and subexponential lower bounds, 
for circuits of \emph{all depths} (that compute {\em low degree} polynomials).
Thus, understanding depth-$3$ identities seems to be a natural first step towards
the goal of proving more general lower bounds. 

For deterministic black-box testing, the first results
were given by Karnin \& Shpilka \cite{KSh08}. 
Based on results in~\cite{DS06},
they gave an algorithm for bounded top fanin depth-$3$ circuits having a quasi-polynomial
running time (with a doubly-exponential dependence
on the top fanin). The dependence on the top fanin was later improved (to singly-exponential) 
by the rank bound results of Saxena \& Seshadhri \cite{SS09} (for {\em any} $\FF$). But the
time complexity also had a quasi-polynomial dependence on the degree of the circuit. This 
dependence is inevitable in rank-based methods over finite fields (as shown by \cite{KS07}).
However, over the field of rationals, Kayal \& Saraf \cite{KSar09} showed how to remove this 
quasi-polynomial dependence on the degree at the cost of doubly-exponential dependence on
the top fanin, thus giving a polynomial time complexity for bounded top fanin.  
In this work we achieve the best of the two works \cite{SS09} and \cite{KSar09}, i.e. we
prove (for rationals) a time complexity that depends only {\em polynomially} on the degree and ``only'' 
{\em singly}-exponentially on the fanin. 

In a quite striking result, Kayal \& Saraf~\cite{KSar09} proved how Sylvester-Gallai theorems
can get better rank bounds over the reals.
We introduce the concept of \emph{Sylvester-Gallai rank bounds} that deals
with the rank of vectors (over some given field) that have some special
incidence properties. 
This is a very convenient way to express known Sylvester-Gallai results.
These are inspired by the famous
Sylvester-Gallai theorem about point-line incidences.
We show how this very interesting quantity is tightly connected
to depth-$3$ identities. Sylvester-Gallai rank bounds 
over high dimensions were known over the reals, and 
are used to prove depth-$3$ rank bounds over reals.
We prove the first ever theorem for high 
dimensional Sylvester-Gallai configurations 
over \emph{any field}.


\subsection{Definitions and Previous Work}

This work focuses on depth-$3$ circuits. A structural study of 
depth-$3$ identities was initiated in \cite{DS06} by defining a notion of {\em rank} 
of {\em simple} and {\em 
minimal} identities. A depth-$3$ circuit $C$ over a field $\FF$ is: 
$$C(x_1,\ldots,x_n) = \sum_{i=1}^k T_i$$ 
where, $T_i$ ({\em a multiplication term}) is a product of $d_i$ linear polynomials 
$\ell_{i,j}$ over $\FF$. Note that for the purposes of studying identities we can 
assume wlog (by {\em homogenization}) that $\ell_{i,j}$'s are linear {\em forms} (i.e. 
linear polynomials 
with a zero constant coefficient) and that $d_1=\cdots=d_k=:d$. 
Such a circuit is referred to as a $\sps(k,d)$ circuit (or $\sps(k,d,n)$ depending on the context),
where $k$ is the \emph{top fanin} of $C$ and $d$ is the {\em degree} of $C$. 
We give a few definitions from~\cite{DS06}.

\begin{definition} {\bf[Simple Circuit]}
$C$ is a {\em simple} circuit if there is no nonzero linear form dividing all 
the $T_i$'s. 

{\bf[Minimal Circuit]} $C$ is a {\em minimal} circuit if for every proper subset 
$S\subset[k]$, $\sum_{i\in S}T_i$ is nonzero. 

{\bf[Rank of a circuit]} Every $\ell_{i,j}$ can be seen as an $n$-dimensional vector over $\FF$.
The {\em rank} of the circuit, $\rk(C)$, is defined as 
the rank of the set of all linear forms $\ell_{i,j}$'s viewed as $n$-dimensional vectors.
\end{definition}

Can all the forms $\ell_{i,j}$ be independent, or must there
be relations between them?
The rank can be interpreted as the minimum number of variables
that are required to express $C$. There exists a linear
transformation converting the $n$ variables of the circuit
into $rank(C)$ independent variables. A trivial upper bound on the rank (for any $\sps$-circuit) 
is $kd$, since that is the total number of
linear forms involved in $C$. 
The rank is a fundamental property of a $\sps(k,d)$ circuit
and it is crucial to understand how large this can be 
for identities.
A substantially smaller rank bound than $kd$ shows that 
identities do not have as many ``degrees of freedom" as general
circuits, and leads to deterministic identity tests. 
Furthermore, the techniques used to prove rank bounds 
show us structural properties of identities that may 
suggest directions to resolve PIT for $\sps(k,d)$ circuits.

The rank bounds, in addition to being a natural property of identities, have found 
applications in black-box identity testing \cite{KSh08} and learning $\sps$ circuits
\cite{Sh09, KSh09}. The result of \cite{KSh08} showed rank bounds imply
black-box testers: if $R(\FF,k,d)$ is a rank bound for
simple minimal $\sps(k,d,n)$ identities over field $\FF$, then there is a deterministic 
black-box identity tester for such circuits, that runs in $\poly(n,d^{R(\FF,k,d)})$ 
$\FF$-operations. (For the time complexity over $\QQ$, we actually count the {\em bit 
operations}.)

Dvir \& Shpilka~\cite{DS06} proved that the
rank of a simple, minimal $\sps(k,d)$ identity is bounded by $2^{O(k^2)}(\log d)^{k-2}$.  
This rank bound was improved to $O(k^3\log d)$ by Saxena \& Seshadhri \cite{SS09}.
Fairly basic identity constructions show that
the rank is $\Omega(k)$ over the reals and $\Omega(k \log d)$ for
finite fields~\cite{DS06,KS07,SS09}.
Dvir \& Shpilka \cite{DS06} conjectured that $\rk(C)$ should be some $\poly(k)$ over the reals. 
Through a very insightful
use of Sylvester-Gallai theorems, Kayal \& Saraf \cite{KSar09} subsequently bounded
the rank of identities, over reals, by $k^{O(k)}$. This means that for a constant top fanin 
circuit,
the rank of identities is constant, independent of the degree. This also
leads to the first truly polynomial-time deterministic black-box identity testers for this case.

Unfortunately, as soon as $k$ becomes even $\Omega(\log n)$, this bound
becomes trivial. We improve this rank bound 
exponentially, to $O(k^2)$, which is almost optimal. 
This gives a major improvement
in the running time of the black-box testers.
We also improve the rank bounds for general fields
from $O(k^3\log d)$ to $O(k^2\log d)$. We emphasize that
we give a unified framework to prove all these results.
Table~\ref{tab} should make it easier to compare the various
bounds.


\begin{table}[h] 
\caption{Known rank bounds and black-box PIT}
\centering 
\begin{tabular}{l l l l l} 
\hline\hline 
Paper & Result & Asymptotics & over field \\ [0.5ex]
\hline 
& rank bound & $2^{k^2}\log^{k-2}d$ & any \\[-0.6ex]
\raisebox{1.5ex}{\cite{DS06, KSh08}} & time complexity & $n d^{(2^{k^2}\log^{k-2}d)}$ & any \\[1ex]
& rank bound & $k^3\log d$ & any \\[-1ex]
\raisebox{1.5ex}{\cite{SS09}} & time complexity & $n d^{k^3\log d}$ & any \\[1ex]
& rank bound & $k^k$ & $\RR$ \\[-1ex]
\raisebox{1.5ex}{\cite{KSar09}} & time complexity & $n d^{(k^k)}$ & $\QQ$ \\[1ex]
& rank bound & $k^2$ & $\RR$ \\[-1ex]
\raisebox{1.5ex}{Ours} & time complexity & $n d^{k^2}$ & $\QQ$ \\[1ex]
& rank bound & $k^2\log d$ & any \\
& time complexity & $n d^{k^2\log d}$ & any \\[1ex]
\hline 
\end{tabular}
\label{tab}
\end{table}

Kayal \& Saraf~\cite{KSar09} connect Sylvester-Gallai theorems to rank bounds. They need
advanced versions of these theorems that deal with colored points and have to prove
certain \emph{hyperplane decomposition theorems}. We make the connection much more
transparent (at the loss of some color from the theorems). 
We reiterate that our techniques are completely different, and employ
a very powerful algebraic framework to dissect identities.
This allows us to use as a ``black-box'' the most basic form 
of the higher dimensional Sylvester-Gallai theorems. 

\subsection{Our Results}

Before we state our results, it will be helpful to understand
Sylvester-Gallai configurations.
A set of points $S$ with the property 
that every line through two points of $S$ passes through
a third point in $S$ is called a \emph{Sylvester-Gallai configuration}.
The famous Sylvester-Gallai theorem states: for a set $S$ of points
in ${\RR}^2$, not all collinear, there exists
a line passing through exactly two points of $S$. 
In other words, the only Sylvester-Gallai configuration in ${\RR}^2$
is a set of collinear points.
This basic
theorem about point-line incidences was extended to 
higher dimensions~\cite{Han65,BE67}.
We introduce the notion of \emph{Sylvester-Gallai rank bounds}.
This is a clean and convenient way of expressing these theorems.

\begin{definition}
Let $S$ be a finite subset of the {\em projective space} $\FF\PP^{n}$.
Alternately, $S$ is a subset of vectors in $\FF^{n+1}$
without \emph{multiples}: no two vectors in $S$ are scalar
multiples of each other\footnote{When $|\FF|>|S|$, such an $S$ is, wlog, a subset of
distinct vectors with first coordinate $1$.}.
Suppose, for every set $V \subset S$ of $k$ linearly independent vectors, the linear
span of $V$ contains at least $k+1$ vectors of $S$. 
Then, the set $S$ is said to be \emph{$\sg_k$-closed}.

The largest possible rank of an $\sg_k$-closed set of at most $m$ vectors in $\FF^n$ (for any $n$)
is denoted by $\sg_k(\FF,m)$.
\end{definition}

The classic Sylvester-Gallai theorem essentially 
states\footnote{To see this, take an $SG_2$-closed set $S$ of vectors. Think of each
vector being represented by an infinite line through the origin, hence giving
a set $S$ in the projective space. Take a $2$-dimensional plane $P$ not passing through
the origin and take the set of intersection points $I$ of the lines in $S$ with $P$.
Observe that the coplanar points $I$ have the property that a line
passing through two points of $I$ passes through a third point of $I$.} that for all $m$, 
$\sg_2(\RR,m)\le2$.
Higher dimensional analogues~\cite{Han65,BE67} prove that $\sg_k(\RR,m) \leq 2(k-1)$.
One of our auxiliary theorems is such a statement for \emph{all fields}.

\begin{theorem}[$\sg_k$ for all fields] \label{thm-sgk} For any field $\FF$ and 
$k,m\in\NN^{>1}$, $\sg_k(\FF,m)\leq 9k\lg m$.
\end{theorem}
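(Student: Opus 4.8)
The plan is to establish the stronger bound $\sg_k(\FF,m)\le(k-1)(\lg m+1)$, which gives the theorem since $k,m\ge2$. The mechanism is a ``halving projection'': I would show that any $\sg_k$-closed set $T$ of positive rank can be projected onto an $\sg_k$-closed set $T'$ with at most half as many vectors and with rank smaller by between $1$ and $k-1$, and then iterate this $O(\lg m)$ times until the rank hits $0$.

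To build the reduction step, take $\sg_k$-closed $T\subseteq\FF^n$ (no two vectors scalar multiples of each other) with $\rk(T)\ge1$, and let $k'$ be the least $j\in\{2,\dots,k\}$ for which $T$ is $\sg_j$-closed. By minimality of $k'$, and since a nonempty no-multiples set is never $\sg_1$-closed (as $T\cap\langle v\rangle=\{v\}$ for each $v\in T$), the set $T$ is $\sg_{k'}$-closed but \emph{not} $\sg_{k'-1}$-closed. Failure of $\sg_{k'-1}$-closedness produces an independent $(k'-1)$-subset $V\subseteq T$ whose span $W=\langle V\rangle$ satisfies $T\cap W=V$ exactly (for $k'=2$, $W$ is just the line through some vector of $T$). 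I would then let $\pi\colon\FF^n\to\FF^n/W$ be the quotient map and set $T':=\pi(T)$, viewed as a subset of the projective space $\FF\PP(\FF^n/W)$ (equivalently, a no-multiples set of vectors in $\FF^n/W$). Since $W\subseteq\langle T\rangle$ has rank $k'-1$, we get $\rk(T')=\rk(T)-(k'-1)$.

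Two claims then finish the step. The first, $|T'|\le|T|/2$, should be easy: for $q\in T\setminus W$ the vectors $V\cup\{q\}$ form $k'$ independent members of $T$, so $\sg_{k'}$-closedness gives $|T\cap(W+\langle q\rangle)|\ge k'+1$, and removing the $k'-1$ vectors of $T\cap W$ leaves at least two vectors of $T$ with image $\pi(q)$ up to scalars; as these fibres partition $T\setminus W$, $|T\setminus W|\ge2|T'|$. The second claim --- that $T'$ is again $\sg_k$-closed --- is where I expect the real difficulty. Given $k$ independent vectors $\pi(q_1),\dots,\pi(q_k)$ of $T'$ with $q_i\in T$, the $q_i$ are independent modulo $W$, so $V\cup\{q_1,\dots,q_k\}$ is independent and $F:=\langle q_1,\dots,q_k\rangle$ is a rank-$k$ subspace with $F\cap W=0$. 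Applying $\sg_k$-closedness of $T$ to $\{q_1,\dots,q_k\}$ yields $p\in(T\cap F)\setminus\{q_1,\dots,q_k\}$; since $\pi$ is injective on $F$ (its kernel $W$ meets $F$ trivially), $\pi(p)$ is a vector of $T'$ lying in $\langle\pi(q_1),\dots,\pi(q_k)\rangle$, distinct from $0$ and from every $\pi(q_i)$. The delicate point is the choice of the lift $F=\langle q_1,\dots,q_k\rangle$ carrying no vector of $W$: the disjointness $F\cap W=0$ is precisely what makes the extra point $p$ persist and stay distinct under $\pi$. Securing both this $F$ and the flat $W$ is the reason one must first pass to the critical level $k'$; note $\sg_k$-closedness does not imply $\sg_{k-1}$-closedness (e.g.\ $PG(3,2)$ minus a point is $\sg_3$- but not $\sg_2$-closed), so projecting modulo an arbitrary rank-$(k-1)$ flat will not do.

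Finally I would iterate: from $S_0:=S$ obtain $S_0,S_1,\dots,S_t$, each $\sg_k$-closed, with $|S_{i+1}|\le|S_i|/2$, with each rank drop $\rk(S_i)-\rk(S_{i+1})$ in $\{1,\dots,k-1\}$, and with $\rk(S_t)=0$. Then $\rk(S)=\sum_{i<t}(\rk(S_i)-\rk(S_{i+1}))\le(k-1)t$, while $1\le|S_{t-1}|\le m/2^{t-1}$ forces $t\le\lg m+1$, giving $\rk(S)\le(k-1)(\lg m+1)\le9k\lg m$. The argument uses only linear algebra over $\FF$, and in fact yields a much smaller constant than $9$.
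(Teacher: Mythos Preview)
Your argument is correct and gives a cleaner route than the paper's. The paper (Section~\ref{sec-sg}) proves Theorem~\ref{thm-sgk} by a ``doubling up'' procedure: starting from a small subset $T\subset S$, it locates $k-1$ independent vectors in $S$ orthogonal to $T$ whose span meets $S$ only in themselves, then uses $\sg_k$-closure to assign each $v\in T$ a distinct ``image'' vector, thereby doubling $|T|$ while raising its rank by $k-1$. Finding those $k-1$ vectors is the hard part, and the paper handles it with a coordinate lemma (Claim~\ref{clm-coord}) plus a probabilistic argument (Claim~\ref{clm-subsets}). Your approach is the dual ``halving down'': you pass to the minimal level $k'$ of Sylvester--Gallai closure, use the failure of $\sg_{k'-1}$-closure to produce a $(k'-1)$-flat $W$ with $T\cap W$ exactly a basis of $W$, and quotient by $W$. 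The key observation---that any $k$ independent lifts $q_1,\dots,q_k$ span a flat $F$ with $F\cap W=0$, so the extra point from $\sg_k$-closure survives the projection and stays projectively distinct---is exactly right, and your fibre count for the halving is clean.

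What your approach buys: it is entirely elementary (no Chernoff bound, no probabilistic method), it makes transparent why one must first descend to the critical $k'$ before choosing the flat to quotient by, and it yields the sharper constant $(k-1)(\lg m+1)$ in place of $9k\lg m$. What the paper's approach buys: Claim~\ref{clm-coord} (some vector has $\ge r/(k-1)$ nonzero coordinates in any basis drawn from $S$) is an appealing standalone statement about $\sg_k$-closed sets that your route does not surface. Both arguments ultimately exploit the same phenomenon---that a ``clean'' $(k{-}1)$-flat inside $S$ forces a two-to-one correspondence on the rest---but your minimal-$k'$ device finds that flat deterministically and more cheaply.
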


Our main theorem is a simple, clean expression of how Sylvester-Gallai
influences identities.

\begin{theorem}[From $\sg_k$ to Rank]
Let $|\FF|>d$. The rank of 
a simple and minimal $\sps(k,d)$ identity over $\FF$ is at most $2k^2 + k\cdot\sg_k(\FF,d)$.
\end{theorem}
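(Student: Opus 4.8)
The plan is to prove the bound by exhibiting, for any simple minimal $\sps(k,d)$ identity $C=\sum_{i=1}^k T_i$ over $\FF$, an ideal $J$ generated by linear forms --- a \emph{nucleus} of $C$ --- with $\rk(J)\le 2k^2$ such that, after passing to the domain $R:=\FF[x_1,\dots,x_n]/J$, for every $i$ the images in $R$ of the linear forms dividing $T_i$ (one representative per proportionality class) form an $\sg_k$-closed set. Since each $T_i$ involves at most $d$ linear forms and $|\FF|>d$, each such set is (wlog) at most $d$ distinct projective points, so by definition of $\sg_k(\FF,\cdot)$ it has rank at most $\sg_k(\FF,d)$ in $R$. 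Writing $\rk(J)$ for the dimension of the $\FF$-span of the linear generators of $J$, one has $\rk(C)\le\rk(J)+\rk(\text{images of the forms of }C)$, hence by subadditivity of rank over the $k$ terms,
\[
\rk(C)\ \le\ \rk(J)\,+\,\sum_{i=1}^{k}\sg_k(\FF,d)\ \le\ 2k^2+k\cdot\sg_k(\FF,d),
\]
which is the claimed bound; so everything reduces to constructing the nucleus.

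I would build $J$ by recursion on $k$. If every term of $C$ already has $\sg_k$-closed forms, take $J=(0)$. Otherwise pick a term $T_{i^*}$ and $k$ linearly independent forms $\ell_1,\dots,\ell_k$ dividing $T_{i^*}$ whose span $\lrsp\{\ell_1,\dots,\ell_k\}$ contains no other form of $T_{i^*}$, add these $k$ forms to the nucleus, and pass to $\overline C:=C\bmod\ideal{\ell_1,\dots,\ell_k}$. Since $\ell_1\mid T_{i^*}$, the image $\overline{T_{i^*}}$ vanishes, so $\overline C\equiv 0$ is an identity with at most $k-1$ terms over the domain $\FF[x]/\ideal{\ell_1,\dots,\ell_k}$; I restore simplicity by extracting the linear g.c.d.\ of the surviving terms and minimality by splitting them into minimal sub-identities, recurse on each resulting simple minimal identity, and set $J$ to be the span of $\ell_1,\dots,\ell_k$, the extracted g.c.d.\ forms, and the (lifted) nuclei returned by the recursive calls. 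The bound $2k^2$ then follows from two observations: the recursive calls partition the at most $k-1$ surviving terms, so if they have $k_1,k_2,\dots$ terms then $\sum_j 2k_j^2\le 2\big(\sum_j k_j\big)^2\le 2(k-1)^2$, and $k+2(k-1)^2\le 2k^2$ for all $k\ge 1$, leaving a small surplus to pay for the $O(k)$ g.c.d.\ forms at that level once one shows the g.c.d.\ really has rank only $O(k)$. It also uses the easy closure fact that an $\sg_{k-1}$-closed set is automatically $\sg_k$-closed --- given $k$ independent members, the span of the first $k-1$ of them already contains $\ge k$ members, and the $k$-th is not among those --- so the guarantee delivered by the level-$(k-1)$ recursion is exactly what level $k$ needs; and forms of terms that collapsed to $0$ are harmless, since they already lie in $\lrsp\{\ell_1,\dots,\ell_k\}$ inside $J$.

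The hard part will be the clean-up, where I anticipate two real obstacles. First, the single nucleus $J$ must simultaneously $\sg_k$-flatten \emph{every} original term, not merely those met along one branch of the recursion: one must show that forms added deeper in the recursion can never re-create a bad $k$-flat in a term that was already flat, and conversely that a term never selected along the recursion path is nonetheless flattened. Fusing these local, per-quotient statements into one global statement about $R=\FF[x]/J$ is exactly where the ideal-theoretic framework and a Chinese-remainder-type argument enter. Second, one must keep the linear g.c.d.\ extracted after each quotient under control: a priori it could have rank $\Theta(d)$, which would wreck the bound, so the proof must use minimality of $C$ and the structure of the $k$-flat just removed to argue that the g.c.d.\ introduced at each level has rank only $O(k)$. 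A smaller wrinkle is that quotienting can turn two formerly non-proportional forms into scalar multiples; this is absorbed by the \emph{no multiples} convention in the definition of $\sg_k$ and does not change ranks, but has to be tracked. I expect the g.c.d.\ control --- packaged via the new Chinese remainder theorem advertised in the abstract --- to be the most delicate ingredient.
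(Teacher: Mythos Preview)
Your recursion has a genuine gap in its treatment of the term $T_{i^*}$ that you kill at each step. You write that ``forms of terms that collapsed to $0$ are harmless, since they already lie in $\lrsp\{\ell_1,\dots,\ell_k\}$,'' but this is false: the product $T_{i^*}=\ell_1\cdots\ell_d$ vanishes modulo $\ideal{\ell_1,\dots,\ell_k}$ merely because \emph{one} factor lies in the ideal, while its remaining $d-k$ forms need not lie in that span at all. Those leftover forms of $T_{i^*}$ are never visited by the recursion --- $T_{i^*}$ is gone at the next level --- so nothing forces them to be $\sg_k$-closed modulo the final $J$, and nothing bounds their contribution to $\rk(C)$. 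The same leak recurs at every depth, and removing \emph{one} bad $k$-flat from $T_{i^*}$ gives no reason the rest of $T_{i^*}$ is now $\sg_k$-closed. More broadly, your nucleus construction never uses the hypothesis that $C$ is an \emph{identity} in any essential way (beyond noting that the surviving terms still sum to zero); since a single multiplication term in a non-identity can carry arbitrarily many independent forms outside any prescribed $k$-flat, the identity structure has to enter somewhere to control the forms of the killed terms, and in your outline it never does.

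The paper's construction runs in the opposite direction. It does not chase bad $k$-flats; it first builds the nucleus $K$ purely from the identity structure, via ideal Chinese remaindering and a ``path'' argument through the terms (Theorems~\ref{thm-mat-nucleus} and~\ref{thm-nucleus}), arranged so that modulo $K$ every $T_i$ is \emph{$K$-matched} to $T_1$ --- the non-$K$ forms of each $T_i$ are, up to shifts in $K$, the same multiset as those of $T_1$. Only after this matching is in place does Sylvester-Gallai enter: $\sg_k$-closure of the truncated non-nucleus forms of $T_1$ is then proved as a \emph{consequence} of the identity and of the linear-independence property built into the nucleus (Theorem~\ref{thm-strong-rank}, via the combinatorial Lemma~\ref{lem-unbroken} on partitions of $[k]$), and the matching transports it to every other term for free. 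There is no recursion on $k$ and no g.c.d.\ extraction in the nucleus construction, so the issue you flag as ``most delicate'' simply does not arise; the Chinese remaindering is used to \emph{find} the matching, not to control a recursive g.c.d.
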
	

\noindent {\bf Remark.}
If $\FF$ is small, then we choose an extension $\FF'\supset\FF$ of size $> d$ and get a
rank bound with $\sg_k(\FF',d)$. \\

Plugging in $\sg_k$-rank bounds gives us the desired theorem for depth-$3$ identities.
We have a slightly stronger version of the above theorem that we use to
get better constants (refer to Theorem~\ref{thm-rank}). 

\begin{theorem}[Depth-$3$ Rank Bounds]\label{thm-main}
Let $C$ be a $\sps(k,d)$ circuit, over field $\FF$, that is simple, minimal and zero. Then,
\begin{itemize}
	\item For $\FF = \RR$, $\rk(C)<3k^2$. 
	\item For any $\FF$, $\rk(C)<3k^2(\lg 2d)$.
\end{itemize}
\end{theorem}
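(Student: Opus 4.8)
The plan is to derive Theorem~\ref{thm-main} as a direct corollary of the structural reduction ``From $\sg_k$ to Rank'' --- in its sharpened form, Theorem~\ref{thm-rank} --- by substituting the best available $\sg_k$-rank bound for the field at hand. Throughout I may assume $k\ge 2$: a simple $\sps(1,d)$ circuit is an empty product and cannot be a nonzero-term identity, so the statement is vacuous for $k=1$ (and $k\ge 2$ matches the hypotheses of Theorems~\ref{thm-sgk} and~\ref{thm-rank}). The two ingredients I feed in are the classical higher-dimensional Sylvester--Gallai theorems~\cite{Han65,BE67}, giving $\sg_k(\RR,m)\le 2(k-1)$ for all $m$, and Theorem~\ref{thm-sgk}, giving $\sg_k(\FF,m)\le 9k\lg m$ for \emph{every} field $\FF$. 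I treat three regimes: $\FF=\RR$; a general field with $|\FF|>d$; and a general field with $|\FF|\le d$.

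First I would handle $\FF=\RR$. Since $\RR$ is infinite, the hypothesis $|\FF|>d$ needed by the reduction holds automatically, so I substitute $\sg_k(\RR,d)\le 2(k-1)$ into Theorem~\ref{thm-rank} and simplify to obtain $\rk(C)<3k^2$. The delicate point is the constant: feeding $\sg_k(\RR,d)\le 2(k-1)$ into the crude bound $2k^2+k\cdot\sg_k(\FF,d)$ would give only $4k^2-2k$, which exceeds $3k^2$, so this step genuinely relies on the refined accounting of Theorem~\ref{thm-rank}, where the ``nucleus'' of $C$ is shown to carry rank appreciably below $2k^2$, so that the nucleus contribution and the Sylvester--Gallai contribution are not simply summed as two independent $\Theta(k^2)$ terms. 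Checking that the combined expression indeed falls under $3k^2$ for every $k\ge 2$ is the step that needs care.

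Next, a general field $\FF$ with $|\FF|>d$: here Theorem~\ref{thm-rank} applies directly, and substituting $\sg_k(\FF,d)\le 9k\lg d$ from Theorem~\ref{thm-sgk} and simplifying gives $\rk(C)<3k^2\lg 2d$, with the factor $\lg 2d=1+\lg d$ absorbing the lower-order additive terms. If instead $|\FF|\le d$, then $\FF$ is finite, and I would pass to a finite extension $\FF'\supseteq\FF$ with $|\FF'|>d$ and check that nothing is lost: $C$ is still the zero polynomial over $\FF'$; it is still minimal, since a subsum that is a nonzero polynomial over $\FF$ stays nonzero over $\FF'$; and it is still simple, because simplicity just says that no single direction among the vectors $\ell_{i,j}$ occurs in every $T_i$, and the proportionality pattern among vectors over $\FF$ is unchanged by extending scalars. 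Moreover $\rk_\FF(C)=\rk_{\FF'}(C)$, since the rank of a vector configuration is the size of its largest nonvanishing minor and hence is independent of the base field. Applying the previous case over $\FF'$, and noting that the bound $\sg_k(\FF',d)\le 9k\lg d$ of Theorem~\ref{thm-sgk} holds for $\FF'$ as for any field, I get $\rk(C)=\rk_{\FF'}(C)<3k^2\lg 2d$.

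The main obstacle is not structural --- both heavy results (the $\sg_k$-to-rank reduction and the $\sg_k$ bound over all fields) are used as black boxes --- but quantitative: pushing the leading constant down to $3$. That is precisely what forces the sharper analysis of the nucleus recorded in Theorem~\ref{thm-rank}, rather than a verbatim combination of the two headline theorems; everything else --- the field-extension reduction, the absorption of lower-order terms into $\lg 2d$ --- is routine.
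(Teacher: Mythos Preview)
Your overall strategy is exactly the paper's: Theorem~\ref{thm-main} is obtained by plugging the relevant Sylvester--Gallai rank bounds into Theorem~\ref{thm-rank}, with a passage to a field extension when $|\FF|\le d$. That part is fine.

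However, your explanation of \emph{how} Theorem~\ref{thm-rank} achieves the constant $3$ is incorrect, and this matters because you explicitly defer the arithmetic check. You write that in Theorem~\ref{thm-rank} ``the nucleus of $C$ is shown to carry rank appreciably below $2k^2$.'' It is not: the nucleus contribution in Theorem~\ref{thm-rank} is still the same $2k^2$. The refinement lies entirely in the Sylvester--Gallai term, which is $(k-k')\cdot\sg_{k'}(\FF,d)$ rather than $k\cdot\sg_k(\FF,d)$, where $k'<k$ is the independent-fanin of $C$. The gain comes from the product $(k-k')k'$, which is at most $k^2/4$. Concretely, over $\RR$ one gets
\[
2k^2+(k-k')\cdot 2(k'-1)\ \le\ 2k^2+2(k-k')k'\ \le\ 2k^2+\tfrac{k^2}{2}\ <\ 3k^2,
\]
and over any $\FF$ (after extending if necessary so that $|\FF|>d$),
\[
2k^2+(k-k')\cdot 9k'\lg d\ \le\ 2k^2+\tfrac{9k^2}{4}\lg d\ <\ 3k^2(1+\lg d)\ =\ 3k^2\lg 2d.
\]
So when you write ``substitute $\sg_k(\RR,d)\le 2(k-1)$'' and ``substituting $\sg_k(\FF,d)\le 9k\lg d$,'' the index should be $k'$, not $k$, and the multiplier is $(k-k')$, not $k$. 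Once you correct this, the verification you flagged as ``the step that needs care'' is the two-line computation above.
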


As discussed before, a direct application of this result to Lemma 4.10 of \cite{KSh08}
gives a deterministic black-box identity test for $\sps(k,d,n)$ circuits (we
will only discuss $\QQ$ here as the other statement is analogous). 
Formally, we get the following {\em hitting set generator} for $\sps$ circuits with real
coefficients.

\begin{corollary}[Black-box PIT over $\QQ$]
There is a deterministic algorithm that takes as input a triple
$(k,d,n)$ of natural numbers and in time $\poly(nd^{k^2})$,
outputs a hitting set $\cH\subset\ZZ^n$ with the following properties:
\begin{itemize}
\item[1)] Any $\sps(k,d,n)$ circuit $C$ over $\RR$
computes the zero polynomial iff $\forall a\in\cH$, $C(a) = 0$.
\item[2)] $\cH$ has at most $\poly(nd^{k^2})$ points.
\item[3)] The total bit-length of each point in $\cH$ is $\poly(kn\log d)$.
\end{itemize}
\end{corollary}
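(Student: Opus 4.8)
The plan is to combine our depth-3 rank bound with the rank-to-tester reduction of Karnin \& Shpilka and then account for the bit-complexity over $\QQ$. By Theorem~\ref{thm-main}, every simple, minimal, zero $\sps(k,d,n)$ circuit $C$ over $\RR$ satisfies $\rk(C)<3k^2$, so $R(\RR,k,d):=3k^2$ is a valid rank bound. Plugging this into Lemma~4.10 of~\cite{KSh08} gives a deterministic black-box identity test for \emph{all} $\sps(k,d,n)$ circuits over $\RR$ --- their reduction from a general circuit to the simple, minimal case being internal to the lemma --- that queries the circuit on a fixed set $\cH$ of points, using $\poly(n,d^{R})=\poly(nd^{k^2})$ operations. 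It therefore only remains to exhibit $\cH$, bound $|\cH|$, and, since over $\QQ$ we count bit operations, bound the bit-length of the points of $\cH$.

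Recall the shape of the tester of~\cite{KSh08}: it is built around an explicit linear rank extractor $G\colon\FF^{t}\to\FF^{n}$ with $t=O(R)=O(k^2)$, chosen so that $C\equiv 0$ iff $C\circ G\equiv 0$ for every $\sps(k,d,n)$ circuit $C$. Since $C\circ G$ has $t$ variables and degree $\le d$, the Schwartz--Zippel lemma shows $C\circ G\equiv 0$ iff $C\circ G$ vanishes on the grid $\Gamma=\{0,1,\dots,d\}^{t}$. Hence $\cH:=G(\Gamma)$ is the promised hitting set: property (1) is this equivalence, and $|\cH|\le|\Gamma|=(d+1)^{O(k^2)}=d^{O(k^2)}=\poly(nd^{k^2})$ is property (2). (If $C$ uses affine linear forms, one first homogenizes, passing from $n$ to $n+1$ variables and then fixing the new coordinate of each point at $1$; this is absorbed into $\poly(n,\cdot)$.) Over $\RR$ the algorithm never manipulates the real coefficients of $C$: it merely outputs the integer points of $\cH$, after which one evaluates the black box on $\cH$ and declares $C$ to be the zero polynomial iff all evaluations vanish, so the whole procedure is deterministic and works with rationals only.

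For the bit-length, realize $G$ over $\ZZ$ by a Vandermonde matrix on distinct integer nodes of size $\poly(k,d,n)$, taken large enough that $G$ has the rank-extraction property required in~\cite{KSh08}; its entries then have bit-length $O(k^2\log(kdn))$. Since the grid points have bit-length $O(\log d)$, each of the $n$ coordinates of a point of $\cH=G(\Gamma)$ is an integer of bit-length $\poly(k^2\log(kdn))\le\poly(kn\log d)$, which is property (3); and $\cH$, being $d^{O(k^2)}$ points, each obtained as a single matrix--vector product of $n$ such integers, is written down in time $\poly(nd^{k^2})$. We expect this routine bit-complexity accounting to be the only real work: Theorem~\ref{thm-main} is the one substantive ingredient, after which the corollary is essentially a call to Lemma~4.10 of~\cite{KSh08}, so the ``main obstacle'' is no more than checking that the Karnin--Shpilka construction --- phrased over a general field in terms of field operations --- can be carried out over $\QQ$ with every intermediate integer of bit-length $\poly(kn\log d)$.
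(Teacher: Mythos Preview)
Your proposal is correct and matches the paper's own treatment: the paper does not give a standalone proof of this corollary but simply states (in the sentence preceding it) that it is ``a direct application of this result to Lemma 4.10 of \cite{KSh08},'' and you have fleshed out exactly that application, including the bit-complexity bookkeeping the paper leaves implicit. One minor inaccuracy worth noting is that the Karnin--Shpilka construction uses a polynomial-size \emph{family} of rank extractors (indexed by a Gabizon--Raz seed) rather than a single map $G$, so $\cH$ is a union of grid images; this does not affect your size or bit-length estimates.
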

\noindent {\bf Remark.}
\begin{itemize}
\item[1)] 
Our black-box test 
has {\em quasi-polynomial} in $n$ time complexity (with polynomial-dependence on $d$) for top fanin as large as 
$k=\poly\log(n)$, 
and {\em sub-exponential} in $n$ time complexity (with polynomial-dependence on $d$) even for 
top fanin as large as $k=o(\sqrt{n})$. This is the first tester to achieve
such bounds.
\item[2)] The fact that the points in $\cH$ are integral and have ``small'' bit-length 
is important to estimate the time complexity of our algorithm in terms of {\em bit 
operations}. Thus, the hitting set generator takes at most $\poly(nd^{k^2})$ {\em bit} 
operations to compute $\cH$.
\end{itemize}

\section{Proof Outline, Ideas, and Organization}

Our proof of the rank bound comprises of several new ideas, both at the conceptual and the technical
levels. In this section we will give the basic intuition of the proof.
The three notions that are crucially used (or developed) in 
the proof are: ideal Chinese remaindering, matchings and Sylvester-Gallai rank bounds. These have
appeared (in some form) before in the works of Kayal \& Saxena \cite{KS07}, Saxena \& Seshadhri
\cite{SS09} and Kayal \& Saraf \cite{KSar09} respectively, to prove different kinds of results.
Here we use all three of them together to show quite a strong structure in $\sps$ identities. 
We will talk about them one by one in
the following three subsections outlining the three steps of the proof. Each step proves a new
property of identities which is interesting in its own right. The first two steps set
up the algebraic framework and prove theorems that hold for all fields.
The third step is where the Sylvester-Gallai theorems are brought in.
Some (new and crucial) algebraic lemmas and their proofs 
have been moved to the Appendix. The flow of the actual proof
will be identical to the overview that we now provide.

\subsection{Step 1: Matching the Gates in an Identity}

We will denote the set $\{1,\ldots,n\}$ by $[n]$. We fix 
the base field to be $\FF$, so the circuits
compute multivariate polynomials in the {\em polynomial ring} $R:=\FF[x_1,\ldots,x_n]$. 

A \emph{linear form} is a linear polynomial in $R$ with zero constant term. We will denote
the set of all linear forms by  
$L(R):= \left\{\sum_{i=1}^n a_ix_i \mid a_1,\ldots,a_n\in\FF\right\}$.
Clearly, $L(R)$ is a vector (or linear) space over $\FF$ and that will be quite useful.
Much of what we do shall deal with {\em multi}-sets of linear forms (sometimes polynomials in $R$ too), equivalence 
classes inside them, and various maps across them. A \emph{list} of linear
forms is a multi-set of forms with an arbitrary order
associated with them. The actual ordering is unimportant: 
we will heavily use maps between lists, and the ordering allows us to
define these maps unambiguously. The object, list, comes with all the usual set operations naturally defined.

\begin{definition} We collect some important definitions from \cite{SS09}:

{\bf[Multiplication term, $L(\cdot)$ \& $M(\cdot)$]} A {\em multiplication term} $f$ is an 
expression in $R$ given as (the product may have repeated $\ell$'s),
$f := c\cdot\prod_{\ell\in S}\ell$, where $c\in\FF^*$
and $S$ is a list of nonzero linear forms.
The {\em list of linear forms in $f$}, $L(f)$, is just the list $S$ of forms 
occurring in the product above. 
For a list $S$ of linear forms we define the {\em multiplication 
term of $S$}, $M(S)$, as $\prod_{\ell\in S}\ell$ or $1$ if $S=\phi$. 

{\bf[Forms in a Circuit]} We will represent a $\sps(k,d)$ circuit $C$ as a sum of $k$ multiplication terms
of degree $d$, $C = \sum_{i=1}^k T_i$. The list of {\em linear forms occurring in $C$} 
is $L(C):=$ $\bigcup_{i\in[k]}L(T_i)$. Note that $L(C)$ is a list of size exactly $kd$. 
The {\em rank of $C$}, $\rk(C)$, is just the number of linearly independent
linear forms in $L(C)$. (Remark: for the purposes of this paper $T_i$'s are given in circuit 
representation and thus the list $L(T_i)$ is unambiguously defined from $C$)

{\bf [Similar forms]} For any two polynomials $f,g\in R$ we call $f$ {\em similar to} $g$ if there 
exists $c\in\FF^*$ such that $f=cg$. We say $f$ {\em is similar to $g$ mod 
$I$}, for some ideal $I$ of $R$, if there is some $c\in\FF^*$ such that 
$f \equiv cg (\mod I)$. Note that ``similarity mod $I$" is an equivalence relation (reflexive, symmetric
and transitive) and partitions any list of polynomials into equivalence classes.

{\bf [Span $\lrsp(\cdot)$]} For any $S\subseteq L(R)$ we let $\lrsp(S)\subseteq L(R)$ be the 
{\em linear span} of the linear forms in $S$ over the field $\FF$. (Conventionally, 
$\lrsp(\emptyset)=\{0\}$.)

{\bf [Matchings]}
Let $U, V$ be lists of linear forms and $I$ be a subspace of $L(R)$. An {\em $I$-matching 
$\pi$ between $U, V$} is a bijection $\pi$ between lists $U, V$ such 
that: for all $\ell\in U$, $\pi(\ell)\in \FF^*\ell+I$.

When $f,g$ are multiplication terms, an {\em $I$-matching between $f,g$}
would mean an $I$-matching between $L(f), L(g)$.
\end{definition}
We will show that all the multiplication terms of a minimal $\sps$ identity can be matched by
a ``low'' rank space.
\begin{theorem}[Matching-Nucleus]\label{thm-mat-nucleus}
Let $C=T_1+\cdots+T_k$ be a $\sps(k,d)$ circuit that is minimal and zero. 
Then there exists a linear subspace $K$ of $L(R)$ such that:
\\\indent
1) $\rk(K)< k^2$.
\\\indent
2) $\forall i\in[k]$, there is a $K$-matching $\pi_i$ between $T_1, T_i$. 
\end{theorem}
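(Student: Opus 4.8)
The plan is to argue by induction on the number of gates $k$, after first reducing to the case where $C$ is simple; within that, the reduction to simple circuits is itself a sub-induction on the degree $d$. The bound I will maintain is $\rk(K)<k^2$, and the inductive step will drop the gate count from $k$ to some $k'<k$ at an additive cost of only $O(k)$ in rank, which telescopes to $k^2$. \emph{Reductions.} If $C$ is not simple, a common form $\ell_0$ divides every $T_i$; writing $T_i=\ell_0T_i'$ gives a minimal zero $\sps(k,d-1)$ circuit $\sum_iT_i'$, and a $K$-matching for it extends to one for $C$ by fixing the copies of $\ell_0$. So assume $C$ is simple. If $T_i\sim T_j$ for some $i\neq j$, then $T_i+T_j$ is a nonzero degree-$d$ multiplication term (nonzero by minimality), so merging gates $i,j$ yields a minimal zero $\sps(k-1,d)$ circuit whose $K$-matching pulls back to $C$ since $L(T_i+T_j)=L(T_i)$ up to scalars; hence assume no two gates are similar. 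The base case $k=2$ reads $T_1=-T_2$, and unique factorization in $R=\FF[x_1,\dots,x_n]$ gives a bijection $\pi_2:L(T_1)\to L(T_2)$ with $\pi_2(\ell)\in\FF^*\ell$; take $K=\{0\}$.

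\emph{The reduction step.} Since $T_1\not\sim T_2$ and both have degree $d$, some form $\ell\in L(T_2)$ does not divide $T_1$. Pass to $\bar R:=R/\langle\ell\rangle$: the gates divisible by $\ell$ (in particular $T_2$) vanish, the rest keep degree $d$, and one gets a zero $\sps(k',d)$ circuit on the survivors, $k'\le k-1$, which still contains $\bar T_1$ since $\ell\nmid T_1$. Iterating ``pass to a minimal sub-identity, and replace it by the complementary sub-identity whenever the distinguished gate is lost'' produces a minimal zero circuit $D$ on at most $k-1$ gates that contains $\bar T_1$. The induction hypothesis gives $\bar K\subseteq L(\bar R)$ with $\rk(\bar K)<(k-1)^2$ and $\bar K$-matchings from $\bar T_1$ to every gate of $D$. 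Lifting $\bar K$ to $L(R)$ and adjoining $\FF\ell$ yields $K_1$ with $\rk(K_1)\le(k-1)^2+1$ and $K_1$-matchings from $T_1$ to each gate occurring in $D$.

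\emph{The remaining gates and the rank count.} Let $S$ index the gates of $D$ (so $1\in S$, $|S|\ge2$) and $S'=[k]\setminus S$. The natural continuation is: modulo $\langle K_1\rangle$, a $K_1$-matching between two multiplication terms becomes a term-by-term scalar equality, so each $T_i$ ($i\in S$) reduces to a scalar multiple of the reduction of $T_1$; then $C=0$ collapses to a zero circuit $c\,\overline{T_1}+\sum_{i\in S'}\overline{T_i}=0$ on at most $1+|S'|\le k-1$ gates containing the distinguished gate, to which one recurses, lifting the resulting rank space back through $R\to R/\langle K_1\rangle$ and adding it to $K_1$. Since $|S|+|S'|=k$ with both parts $\ge2$, a one-line convexity estimate gives $\rk(K)\le|S|^2+(1+|S'|)^2+O(1)\le k^2-2k+O(1)<k^2$. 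The subtlety is that this collapse is valid only if the reduction of $T_1$ modulo $\langle K_1\rangle$ is nonzero (no form of $T_1$ may land in $K_1$), and gates such as $T_2$ that carry a form of $K_1$ get killed by the naive quotient and must still be matched to $T_1$; the correct remedy is to apply a Chinese-remainder decomposition to a single carefully chosen ideal generated by linear forms — rather than peeling off one form at a time — so that in each CRT component enough forms of $T_1$ survive to transport the matchings.

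\emph{Main obstacle.} The crux is exactly this algebra: a Chinese-remainder theorem for multiplication terms modulo ideals generated by several (possibly repeated or interacting) linear forms, strong enough to promote ``$K_1$-matchable'' to ``equal up to a scalar after reduction'' while keeping track of \emph{every} gate. Once that lemma is in place, the structural induction (reduce to simple, merge similar gates, quotient by a form, recurse, lift) and the telescoping rank bound $r(k)\le r(k-1)+O(k)<k^2$ are routine.
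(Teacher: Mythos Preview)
Your proposal has a genuine gap, and you identify it yourself: after the first recursion produces $K_1$ and matches the gates in $S$ to $T_1$, the second recursion in $R/\langle K_1\rangle$ \emph{cannot} reach any gate that carries a form of $K_1$. In particular $T_2$ (and every gate divisible by $\ell$) is annihilated modulo $\langle K_1\rangle$, so no identity in that quotient will ever witness a congruence between $T_1$ and $T_2$. Your suggested remedy --- ``apply a CRT decomposition to a carefully chosen ideal so that in each component enough forms of $T_1$ survive'' --- is not a proof sketch but a restatement of the problem: finding that ideal and proving the requisite congruence is exactly the content of the theorem. There is a secondary (fixable) issue: the collapse $\sum_{i\in S}T_i\equiv c\,T_1$ may have $c=0$, in which case $T_1$ disappears from the recursed identity and you again cannot connect $S'$ to $T_1$.

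The paper's argument is structurally different and avoids these problems. There is no induction on $k$ and no quotienting by single forms. Instead one first proves a ``certificate for non-zeroness'' (Theorem~\ref{thm-cert-non0}): for any nonzero $\sps$ circuit $D$ modulo an ideal $I$ generated by multiplication terms, there is a \emph{path} $\vec{p}$ (a sequence of nodes, each of rank one over the previous) such that $D\equiv\alpha T_j\not\equiv0\pmod{\vec{p}}$ for some gate $T_j$ of $D$. This is where the ideal CRT (Theorem~\ref{thm-crt}) is actually used. Given this tool, one maintains a graph on $[k]$ whose edges are pairs of already-matched gates; while it is disconnected, pick a component $S$, apply the certificate to the nonzero $C_S$ to get $C_S\equiv\alpha T_i\pmod{p_S}$, then apply it again to $C_{\bar S}$ modulo $\langle p_S\rangle$ to get $C_{\bar S}\equiv\beta T_j\pmod{p_{\bar S}}$. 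Now $\alpha T_i\equiv-\beta T_j\not\equiv0\pmod{p_{\bar S}}$, and Lemma~\ref{lem-I-match} converts this nontrivial congruence into a $\radsp(p_{\bar S})$-matching between $T_i$ and $T_j$, merging two components. Each round adds at most $k-1$ to the rank and there are at most $k-1$ rounds, giving $\rk(K)<k^2$. The point is that the path machinery produces, in one shot, an ideal of rank $<k$ modulo which two specific gates from different components are simultaneously nonzero and congruent; your inductive scheme never guarantees both conditions together.
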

The idea of matchings within identities was first introduced in~\cite{SS09},
but nothing as powerful as this theorem has been proven.
This theorem gives us a space of small rank, \emph{independent of $d$},
that contains most of the ``complexity" of $C$. All forms in $C$ outside $K$
are just mirrored in the various terms.
This starts connecting the algebra
of depth-$3$ identities to a combinatorial structure. Indeed, the graphical
picture (explained in detail below) that this theorem provides, really
gives an intuitive grasp on these identities. The proof
of this involves some interesting generalizations of the Chinese Remainder
Theorem to some special ideals.

\begin{definition}[mat-nucleus]
Let $C$ be a minimal $\sps(k,d)$ identity. The linear subspace $K$ given by Theorem
\ref{thm-mat-nucleus} is called {\em mat-nucleus of $C$}.
\end{definition}
The notion of mat-nucleus is easier to see in the following unusual representation of
the $\sps(4,d)$ circuit $C=\sum_{i\in[4]}T_i$. The four bubbles refer to the four 
multiplication terms of $C$ and the points inside the bubbles refer to the linear forms
in the terms. The proof of Theorem \ref{thm-mat-nucleus} gives mat-nucleus as the space 
generated by the linear forms in the dotted box. The linear forms that are not in 
mat-nucleus lie ``above" the mat-nucleus and are all (mat-nucleus)-matched, i.e. 
$\forall \ell\in (L(T_1)\setminus\matnucleus)$, there is a form similar to $\ell$ modulo 
$\matnucleus$ in each $(L(T_i)\setminus\matnucleus)$. Thus the essence of Theorem \ref{thm-mat-nucleus} 
is: the mat-nucleus part of the terms of $C$ has low rank $k^2$, while the part of the terms 
above mat-nucleus all look ``similar". 
\begin{center}\includegraphics[scale=0.43]{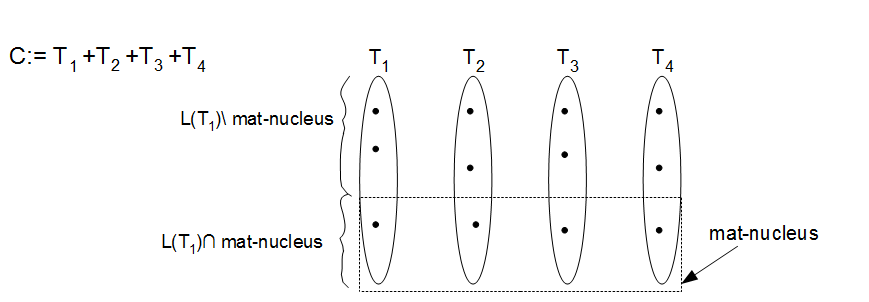}\end{center}

\subsubsection*{Proof Idea for Theorem \ref{thm-mat-nucleus}}

The key insight in the construction of mat-nucleus is a reinterpretation of the identity test 
of Kayal \& Saxena \cite{KS07} as a structural result for $\sps$ identities. Again, refer to
the following figure depicting a $\sps(4,d)$ circuit $C$ and think of each bubble having
$d$ linear forms. Roughly, \cite{KS07} showed that $C=0$ iff for every {\em path} 
$(v_1,v_2,v_3)$ (where $v_i\in L(T_i)$): $T_4\equiv0 (\mod v_1,v_2,v_3)$ or in ideal 
terms, $T_4\in\ideal{v_1,v_2,v_3}$. Thus, roughly, it is enough to go through all the $d^3$ paths
to certify the zeroness of $C$. This is why the time complexity of the identity test of 
\cite{KS07} is dominated by $d^k$.
\begin{center}\includegraphics[scale=0.35]{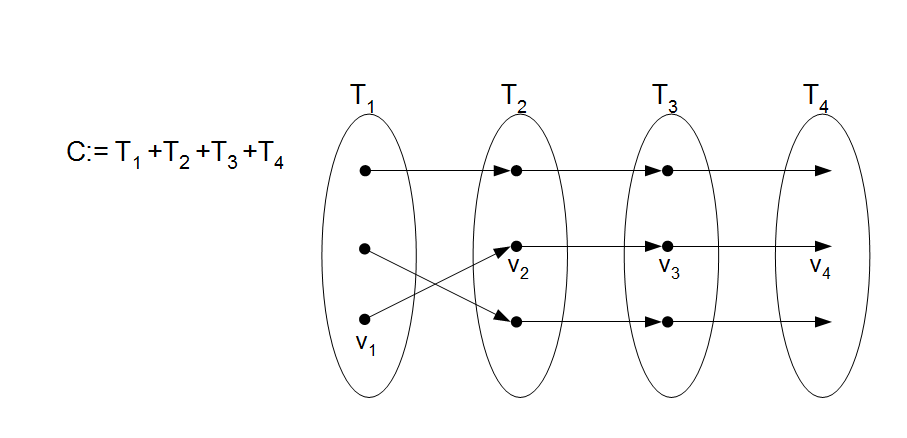}\end{center}
Now if we are given a $\sps(4,d)$ identity $C$ which is {\em minimal}, then we know that 
$T_1+T_2+T_3\ne0$. Thus, by applying the above interpretation of \cite{KS07} to $T_1+T_2+T_3$ 
we will get a path $(v_1,v_2)$ such that $T_3\notin\ideal{v_1,v_2}$. Since $C=0$ this means 
that $T_3+T_4\equiv0 (\mod v_1,v_2)$ but $T_3,T_4\not\equiv0 (\mod v_1,v_2)$ (if 
$T_4$ is in $\ideal{v_1,v_2}$ then so will be $T_3$). Thus, $T_3\equiv -T_4 (\mod v_1,v_2)$
is a nontrivial congruence and it immediately gives us a $\ideal{v_1,v_2}$-matching
between $T_3, T_4$ (see Lemma \ref{lem-I-match}). 
By repeating this argument with a different permutation of the terms we
could match different terms (by a different ideal), and finally we expect to match all the
terms (by the union of the various ideals).

This fantastic argument has numerous technical problems, but they can all be taken care
of by suitable algebraic generalizations. The main stumbling block is the presence
of \emph{repeating} forms. It could happen that $(\mod v_1)$, $v_2$ occurs in many
terms, or in the same term with a higher power.
The most important tool developed is an ideal version of Chinese
remaindering that forces us to consider not just linear forms $v_1, v_2$, but 
{\em multiplication terms} $v_1, v_2$ dividing $T_1, T_2$ respectively. We give the full 
proof in Section \ref{sec-mat-nucleus}.
(Interestingly, the {\em non}-blackbox identity test of
\cite{KS07} guides in devising a blackbox test of ``similar'' complexity over rationals.)

\subsection{Step 2: Certificate for Linear Independence of Gates}

Theorem \ref{thm-mat-nucleus} gives us a space $K$, of rank $<k^2$, that matches $T_1$ to each 
term $T_i$. In particular, this means that the list $L_K(T_i):=L(T_i)\cap K$ has the same cardinality
$d'$ for each $i\in[k]$. In fact, if we look at the corresponding
multiplication terms $K_i:=M(L_K(T_i))$, $i\in[k]$, then they again form a $\sps(k,d')$
identity! Precisely, $C'=\sum_{i\in[k]}\alpha_iK_i$ for some $\alpha_i$'s in $\FF^*$ (see Lemma
\ref{lem-nucleus}) is an identity. 
We would like $C'$ to somehow mimic the structure of $C$.
Of course $C'$ is simple but is it again minimal? Unfortunately, it may not be. 
For reasons that will be clear later,
minimality of $C'$ would have allowed us to go directly to Step 3. 
Now step 2 will involve increasing the space $K$ (but not by too much)
that gives us a $C'$ that ``behaves" like $C$.
Specifically,
if $T_1,\ldots,T_{k'}$ are {\em linearly independent} (i.e. $\nexists$ 
$\vec{\beta}\in\FF^{k'}\setminus\{\vec{0}\}$ s.t. $\sum_{i\in[k']}\beta_iT_i=0$), then so are
$K_1,\ldots,K_{k'}$. 

\begin{theorem}[Nucleus]\label{thm-nucleus} 
Let $C=\sum_{i\in[k]}T_i$ be a minimal $\sps(k,d)$ identity and let $\{T_i|i\in\cI\}$ be 
a maximal set of linearly
independent terms ($1\le k':=|\cI|<k$). Then there exists a linear subspace $K$ of $L(R)$ such that:
\begin{itemize}
\item[1)] $\rk(K)< 2k^2$.
\item[2)] $\forall i\in[k]$, there is a $K$-matching $\pi_i$ between $T_1, T_i$.
\item[3)] (Define $\forall i\in\cI$, $K_i:=M(L_K(T_i))$.) The terms $\{K_i|i\in\cI\}$ 
are linearly independent.
\end{itemize}
\end{theorem}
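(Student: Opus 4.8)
The plan is to start from the matching-nucleus space $K_0$ given by Theorem~\ref{thm-mat-nucleus}, which already satisfies $\rk(K_0)<k^2$ and supplies $K_0$-matchings $\pi_i$ between $T_1$ and each $T_i$. The issue is purely condition~(3): the terms $K_i=M(L_{K_0}(T_i))$ for $i\in\cI$ might satisfy a nontrivial linear dependence $\sum_{i\in\cI}\beta_iK_i=0$ even though the $T_i$ do not. So the strategy is to \emph{witness} the independence of the $T_i$ by a small number of extra linear forms, and then enlarge $K_0$ to include those forms, keeping the rank below $2k^2$.

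\medskip
\noindent\textbf{Step A (independence certificates via paths).} I would invoke the Kayal--Saxena-style path mechanism used to prove Theorem~\ref{thm-mat-nucleus}: linear independence of $T_1,\dots,T_{k'}$ (with $\{T_i\mid i\in\cI\}$ the chosen maximal independent set) is \emph{refuted} exactly when every length-$(k'{-}1)$ path $(v_1,\dots,v_{k'-1})$ fails to certify it, i.e. $T_{i_{k'}}\in\ideal{v_1,\dots,v_{k'-1}}$ for all paths. Since the $T_i$, $i\in\cI$, \emph{are} independent, for each such set there is at least one witnessing path; collecting one witnessing path per "rotation" of $\cI$ gives at most $k$ paths, hence $O(k\cdot d)$ ... no — better: each path has only $k'-1\le k-1$ forms (one multiplication term per gate, after the Chinese-remaindering fix from Step~1, but the relevant span is still bounded), so the \emph{span} of the forms on all these witnessing paths has rank at most $O(k^2)$. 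Let $W$ be that span; note $\rk(W)=O(k^2)$, and in fact a careful count (one path of $\le k-1$ forms suffices to separate, and we need it once) keeps $\rk(W)\le k^2$.

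\medskip
\noindent\textbf{Step B (enlarging the nucleus).} Set $K:=\lrsp(K_0\cup W)$, so $\rk(K)\le\rk(K_0)+\rk(W)<k^2+k^2=2k^2$, giving~(1). For~(2): a $K_0$-matching $\pi_i$ between $T_1,T_i$ is automatically a $K$-matching since $K\supseteq K_0$ (the condition $\pi_i(\ell)\in\FF^*\ell+K_0\subseteq\FF^*\ell+K$ only weakens). For~(3): by Lemma~\ref{lem-nucleus} (analogous to Step~2's discussion), $C'=\sum_{i\in[k]}\alpha_iK_i$ with $K_i=M(L_K(T_i))$ is still an identity, and now $L_K(T_i)\supseteq L_{K_0}(T_i)$ contains the witnessing-path forms. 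The point is that the congruence/ideal-membership statement "$T_{i_{k'}}\notin\ideal{v_1,\dots}$" that certified independence of the $T_i$ is preserved when we pass to the $K_i$: modulo the path ideal, $T_i$ and $K_i$ differ only by a factor coming from forms outside $K$, which (by the matching property and simplicity) does not lie in the path ideal, so $K_{i_{k'}}\notin\ideal{v_1,\dots}$ as well. Hence no dependence $\sum\beta_iK_i=0$ can hold, proving~(3).

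\medskip
\noindent\textbf{Main obstacle.} The delicate point is Step~B's transfer of the independence certificate from the $T_i$ to the $K_i$: a priori, when we strip off the forms of $T_i$ lying outside $K$, we might accidentally kill the very congruence that separated the gates, or introduce a new dependence. Controlling this requires exactly the repeating-forms / Chinese-remaindering machinery flagged in Step~1 — one has to argue at the level of multiplication terms $v_j$ dividing the $T_j$ rather than single linear forms, and show that the "outside" parts of the terms are units modulo the path ideal. This is where the real work (and the appendix lemmas) will go; the rank bookkeeping ($k^2+k^2=2k^2$) is routine once the algebraic transfer is in place.
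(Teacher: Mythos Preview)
Your high-level strategy matches the paper's: start from the mat-nucleus $K_0$ and enlarge it by a low-rank space of ``witness'' forms so that the restricted terms $K_i$ inherit the linear independence of the $T_i$. Your Step~B transfer (once the witnesses are in $K$, use $v_j\mid K_j$ to get $\ideal{K_1,\dots,K_{i-1}}\subseteq\ideal{v_1,\dots,v_{i-1}}$, then cancel the non-$K$ part of $T_i$ via Lemma~\ref{lem-non-zd}) is essentially correct. The real problem is Step~A.

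You assert that for each $i$ there is a path $(v_1,\dots,v_{i-1})$ of $C_{[i-1]}$ mod $\ideal{0}$ (so $v_j\in\nod_{\ideal{v_1,\dots,v_{j-1}}}(T_j)$, hence $\rk(\radsp)\le i-1$) with $T_i\notin\ideal{v_1,\dots,v_{i-1}}$. This is \emph{false}. Take $T_1=x(x+z)$, $T_2=z^2$, $T_3=xz$; these are linearly independent. The nodes of $T_1$ mod $\ideal{0}$ are $x$ and $x+z$. For $v_1=x$ one checks $xz\in\ideal{x,z^2}$; for $v_1=x+z$ one has $xz=z(x+z)-z^2\in\ideal{x+z,z^2}$. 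So \emph{every} rank-$2$ path kills $T_3$, yet $T_3\notin\ideal{T_1,T_2}$. The point is that $\bigcap_{v_1\in\nod_{\ideal{0}}(T_1)}\ideal{v_1,T_2}$ can strictly contain $\ideal{T_1,T_2}$: Corollary~\ref{cor-crt} only decomposes $\ideal{I,f}$ over $\nod_I(f)$, not over $\nod_{\ideal{0}}(f)$, so you cannot peel off $T_1$ using its mod-$\ideal{0}$ nodes while $T_2,\dots,T_{i-1}$ sit in the ideal.

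The paper avoids this by an \emph{incremental} construction rather than finding one certificate per $i$ in isolation. It maintains a growing space $U$ and the invariant $T_r\notin\ideal{U_1,\dots,U_{r-1}}$ for all $r<i$ simultaneously. In phase $i$, round $j$, it replaces $T_j$ by $U_j$ one index at a time; when this fails (i.e.\ $T_i\in\ideal{U_1,\dots,U_j,T_{j+1},\dots,T_{i-1}}$), the key Claim~\ref{clm-extend} shows there is a \emph{unique} combination $T_i+\sum_{r>j}\alpha_rT_r\in\ideal{U_1,\dots,U_j}$ (uniqueness uses the previously established invariants for $r<i$), and applies Corollary~\ref{cor-crt} to \emph{this combination} with $I=\ideal{U_1,\dots,U_{j-1}}$ --- so the new node $v$ is taken mod an ideal whose radical-span already sits inside $U$, giving a rank increase of exactly one. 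Your proposal is missing this uniqueness argument and the resulting control over which ideal the nodes are taken modulo; without it the rank bound on $W$ does not follow.
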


\begin{definition}[nucleus]
Let $C$ be a minimal $\sps(k,d)$ identity. The linear subspace $K$ given by Theorem
\ref{thm-nucleus} is called the {\em nucleus of $C$}. By Lemma~\ref{lem-nucleus}, the subspace $K$
induces an identity $C'=\sum_{i\in[k]} \alpha_i K_i$ which we call the {\em nucleus identity}.
\end{definition}

The notion of the nucleus is easier to grasp when $C$ is a $\sps(k,d)$ identity that is 
{\em strongly minimal}, i.e. $T_1,\ldots,T_{k-1}$ are linearly independent. Clearly, such a
$C$ is also minimal\footnote{If for some proper $S\subset[k]$, $\sum_{i\in S}T_i=$
$\sum_{i\in \vec{S}}T_i=0$ then linear independence of $T_1,\ldots,T_{k-1}$ is violated.}.
For such a $C$, Theorem \ref{thm-nucleus} gives a nucleus $K$ such that the corresponding 
nucleus identity is strongly minimal. The structure of $C$ is very strongly
represented by $C'$. As a bonus, we actually end
up greatly simplifying the polynomial-time PIT algorithm of Kayal \& Saxena~\cite{KS07}
(although we will not discuss this point in detail in this paper).
%

\subsubsection*{Proof Idea for Theorem \ref{thm-nucleus}}

The first two properties in the theorem statement are already satisfied by mat-nucleus of $C$.
So we incrementally add linear forms to the space mat-nucleus till it satisfies property (3) 
and becomes the nucleus. The addition of linear forms is guided by the ideal version of Chinese
remaindering. For convenience assume $T_1,T_2,T_3$ to be linearly independent. Then, by 
homogeneity and equal degree, we have an equivalent ideal statement: 
$T_2\notin\ideal{T_1}$ and $T_3\notin\ideal{T_1,T_2}$ (see Lemma \ref{lem-homo-ideal}). Even in
this general setting the path analogy (used in the last subsection) works and we essentially 
get linear forms $v_1\in L(T_1)$ and $v_2\in L(T_2)$ such that: $T_2\notin\ideal{v_1}$ and $T_3\notin\ideal{v_1,v_2}$. 
We now add these forms $v_1, v_2$ to the space mat-nucleus, and 
call the new space $K$. It is expected that the new $K_1, K_2, K_3$ are now linearly 
independent. 

Not surprisingly, the above argument has numerous technical problems. But it can be made 
to work by careful applications of the ideal version of Chinese
remaindering. We give the full proof in Section \ref{sec-nucleus}.

\subsection{Step 3: Invoking Sylvester-Gallai Theorems}
\label{sec-step-3}

We make a slight, but hopefully interesting, detour and leave
depth-$3$ circuits behind. We rephrase the standard Sylvester-Gallai theorems in terms of
\emph{Sylvester-Gallai closure} (or configuration) and \emph{rank bounds}.
This is far more appropriate
for our application, and seems to be very natural in itself.

\begin{definition}[$\sg_k$-closed]\label{def-sg-1} 
Let $k \in {\NN}^{>1}$.
Let $S$ be a subset of non-zero vectors in $\FF^{n}$
without \emph{multiples}: no two vectors in $S$ are scalar
multiples of each other\footnote{This is just a set of 
elements in the projective space $\FF\PP^{n-1}$, but this
formulation in terms of vectors is more convenient for our applications.}.
Suppose that for every set $V$ of $k$ linearly independent vectors in $S$,
the linear span of $V$ contains at least $(k+1)$ vectors of $S$.
Then, the set $S$ is said to be \emph{$\sg_k$-closed}.
\end{definition}

We would expect that if $S$ is finite then it will get harder to keep $S$ $\sg_k$-closed as
$\rk(S)$ is gradually increased. This intuition holds up when $\FF=\RR$. 
As we mentioned earlier, the famous Sylvester-Gallai Theorem states: if a finite $S\subset\RR^n$ is 
$\text{SG}_2$-closed, then $\rk(S)\le2$. It is optimal as the line 
$S:=\{(1,0), (1,1), (1,2)\}$ has rank $2$ and is $\text{SG}_2$-closed. 

In fact, there is
also a generalization of the Sylvester-Gallai theorem known (as stated in Theorem 2.1 of
\cite{BE67}) : {\em Let $S$ be a finite set in $\RR\PP^{2t}$ spanning that projective space.
Then, there exists a $t$-flat $H$ such that 
$|H\cap S|=t+1$, and $H$ is spanned by those points $H\cap S$.} 
 
\comment{
(as stated in Theorem 6.2 of
\cite{KSar09}) : {\em Let $S$ be a finite set of points  
spanning an affine space $V\subseteq\RR^n$ such that $\rk(V)>2t$.
Then, there exists a $t$ dimensional hyperplane $H$ such that
$|H\cap S|=t+1$, and $H$ is spanned by the points
of $S$, i.e $\text{affine-span}(H\cap S)=H$. } 
}

Let $S\subset\RR^n$ be a finite set of points with first coordinate being $1$ and let $k\geq2$. 
We claim that if $S$ is $\sg_k$-closed, then $\rk(S)\le2(k-1)$. Otherwise
the above theorem guarantees $k$ vectors $V$ in $S$ whose $(k-1)$-flat $H$ has only 
$k$ points of $S$. If $\lrsp(V)$ has a point $s\in S\setminus V$ then as $S$ has first 
coordinates $1$, it would mean that a {\em convex} linear combination of $V$ (i.e. sum of 
coefficients in the combination is $1$) is $s$. In other words, $s\in H$, which contradicts
$H$ having only $k$ points of $S$. Thus, $\lrsp(V)$ also 
has no point in $S\setminus V$, but this contradicts $\sg_k$-closure of $S$. This shows that
higher dimensional Sylvester-Gallai theorem implies that if $S$ is $\sg_k$-closed then 
$\rk(S)\le2(k-1)$. We prefer using this
rephrasal of the higher dimensional Sylvester-Gallai Theorem.
This motivates the following definitions.

\begin{definition}[SG operator]\label{def-sg-2}
Let $k, m \in {\NN}^{>1}$.

{\bf [$\sg_k(\cdot,\cdot)$]}
The largest possible rank of an $\sg_k$-closed set of at most $m$ points in $\FF^n$ is denoted by 
$\sg_k(\FF,m)$. For example, the above discussion entails $\sg_k(\RR,m)\le2(k-1)$ which
is, interestingly, independent of $m$. 
(Also verify that $\sg_{k}(\FF,m)\leq\sg_{k'}(\FF',m')$ for $k\le k'$, 
$m\le m'$ and $\FF\subseteq\FF'$.)

{\bf [$\sg_k(\cdot)$]}
Suppose a set $S\subseteq\FF^n$ has rank greater than $\sg_k(\FF,m)$ (where $\#S\le m$). 
Then, by definition,
$S$ is not $\sg_k$-closed. In this situation we say the 
\emph{$k$-dimensional Sylvester-Gallai operator} $\sg_k(S)$ (applied on $S$) returns
a set of $k$ linearly independent vectors $V$ in $S$ whose span has no point in $S\setminus V$. 	
\end{definition}  

The Sylvester-Gallai theorem in higher dimensions can now be expressed
succintly.

\begin{theorem}[High dimension Sylvester-Gallai for $\RR$]\cite{Han65,BE67}\label{thm-sg}
$\sg_k(\RR,m)\le2(k-1)$. 
\end{theorem}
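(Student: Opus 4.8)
The plan is to deduce the bound directly from the classical higher-dimensional Sylvester--Gallai theorem of Hansen and Bonnice--Edelstein, in the affine form quoted just above the statement (Theorem~2.1 of \cite{BE67}): every finite set $S$ spanning $\RR\PP^{2t}$ contains a $t$-flat $H$ with $|H\cap S|=t+1$ that is spanned by the points $H\cap S$. The entire content of Theorem~\ref{thm-sg} is then a translation between that affine/projective statement and the linear-algebraic $\sg_k$-formulation, so the argument is short; the only care needed is the affine-versus-linear span bookkeeping, which is precisely why the ``first coordinate $1$'' normalisation is built into the notion of an $\sg_k$-closed set.

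First I would argue by contradiction: suppose $S\subset\RR^n$ is a finite $\sg_k$-closed set, normalised (without loss of generality) so that every vector of $S$ has first coordinate $1$, and suppose $\rk(S)\ge 2k-1$. Pick $2k-1$ linearly independent vectors of $S$; their linear span has dimension $2k-1$, hence projectivises to a subspace $\Pi\cong\RR\PP^{2(k-1)}$, and $S':=S\cap\Pi$ is a finite subset of $S$ that still spans $\Pi$. Applying the quoted theorem with $t=k-1$ to $S'$ inside $\Pi$ yields a $(k-1)$-flat $H$ with $|H\cap S'|=k$ and $\text{affine-span}(H\cap S')=H$. Put $V:=H\cap S'$, a $k$-element subset of $S$.

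Next I would verify the two properties that place $V$ in conflict with $\sg_k$-closedness. Since $V$ consists of $k$ points spanning the $(k-1)$-flat $H$, these points are affinely independent; as they all have first coordinate $1$, affine independence of the points is equivalent to linear independence of the corresponding vectors, so $V$ is a set of $k$ linearly independent vectors. Moreover $\lrsp(V)$ contains no point of $S\setminus V$: if some $s\in S$ satisfied $s=\sum_{v\in V}c_v v\in\lrsp(V)$, then comparing first coordinates gives $\sum_{v\in V}c_v=1$, so $s$ is an affine combination of $V$ and hence $s\in\text{affine-span}(V)=H$; but then $V\cup\{s\}$ would be $k+1$ points of $S'\subseteq S$ lying in $H$, contradicting $|H\cap S'|=k$. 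Thus $V$ is a set of $k$ linearly independent vectors of $S$ whose linear span meets $S$ in exactly $V$, which contradicts the defining property of $\sg_k$-closedness (it demands at least $k+1$ vectors of $S$ in $\lrsp(V)$). Therefore $\rk(S)\le 2(k-1)$, and since $S$ was an arbitrary $\sg_k$-closed set, $\sg_k(\RR,m)\le 2(k-1)$.

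The only genuinely delicate point is the combination of dimension-parity bookkeeping with the affine $\leftrightarrow$ linear span correspondence: one must invoke the Bonnice--Edelstein theorem at exactly the \emph{even} projective dimension $2(k-1)$ obtained by trimming $S$ down to $2k-1$ independent vectors, and one must retain the normalisation that makes ``a point lies in $\lrsp(V)$'' equivalent to ``the point lies in the affine hull of $V$''. Everything else is immediate from the cited theorem, and the same reduction will be reused (with the general-field $\sg_k$ bound in place of Theorem~\ref{thm-sg}) when feeding Step~3 into the depth-$3$ rank bound.
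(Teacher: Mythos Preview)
Your proof is correct and follows essentially the same route as the paper's: the derivation appears in the paragraph immediately preceding the theorem, where the Bonnice--Edelstein $(k-1)$-flat is invoked and the first-coordinate-$1$ normalisation is used to convert a linear combination into an affine one (the paper calls it a ``convex'' combination, meaning sum-of-coefficients equal to $1$). The one thing you do more carefully is the explicit trimming to a $(2k-1)$-dimensional linear subspace $\Pi$ so that the cited theorem applies at exactly the even projective dimension $2(k-1)$; the paper glosses over this step, simply asserting that ``the above theorem guarantees $k$ vectors $V$ in $S$'' when $\rk(S)>2(k-1)$.
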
 
\noindent {\bf Remark.} This theorem is also optimal, for if we set $S$ to be a union of $(k-1)$
``skew lines'' then $S$ has rank $2(k-1)$ and is $\sg_k$-closed. For example, when 
$k=3$ define $S:=\{(1,1,0,0),(1,1,1,0),(1,1,2,0)\}$ $\cup$ 
$\{(1,0,1,0),(1,0,1,1),(1,0,1,2)\}$. It is
easy to verify that $\rk(S)=4$ and the span of every three linearly independent vectors in
$S$ contains a fourth vector!

Using some linear algebra and combinatorial tricks, we prove the 
first ever Sylvester-Gallai bound for all fields. The proof is in Section~\ref{sec-sg},
where there is a more detailed discussion of this (and the connection
with LDCs).

\smallskip\noindent
{\bf Theorem \ref{thm-sgk}} ($\sg_k$ for all fields). For any field $\FF$ and 
$k,m\in\NN^{>1}$, $\sg_k(\FF,m)\leq 9k\lg m$.

\subsubsection{Back to identities}

Let $C$ be a simple and strongly minimal $\sps(k,d)$ identity.
Theorem \ref{thm-nucleus} gives us a nucleus $K$, of rank $<2k^2$, that matches $T_1$ to each 
term $T_i$. As seen in Step 2, if we look at the corresponding
multiplication terms $K_i:=M(L_K(T_i))$, $i\in[k]$, then they again form a $\sps(k,d')$
``nucleus identity'' $C'=\sum_{i\in[k]}\alpha_iK_i$, for some $\alpha_i$'s in $\FF^*$, which is 
simple and strongly minimal. Define the {\em non-nucleus part} of $T_i$ as 
$L_K^c(T_i):=L(T_i)\setminus K$, for all $i\in[k]$ ($c$ in the exponent annotates
``complement", since $L(T_i)=L_K(T_i)\sqcup L_K^c(T_i)$). What can we say about the rank of 
$L_K^c(T_i)$ ? 

Define the {\em non-nucleus part of $C$} as $L^c_K(C):=$ $\bigcup_{i\in[k]}L_K^c(T_i)$.
Our goal in Step 3 is to bound $\rk(L^c_K(C)\ \mod K)$ by $2k$ 
when the field is $\RR$. This will give us a rank bound of $\rk(K)+$ $\rk(L^c_K(C) \mod K)$ 
$<(2k^2+2k)$ for simple and strongly 
minimal $\sps(k,d)$ identities over $\RR$. The proof is mainly combinatorial, based on higher
dimensional Sylvester-Gallai theorems and a property of set partitions, with a sprinkling
of algebra. 

We will finally apply $\sg_k$ operator not directly on the forms in $L(C)$ but on a suitable
truncation of those forms. So we need another definition.

\begin{definition}[Non-$K$ rank]\label{def-non-nucleus} 
Let $K$ be a linear subspace of $L(R)$. Then $L(R)/K$ is again a linear space (the {\em quotient 
space}). Let $S$ be a list of forms in $L(R)$. The {\em non-$K$ rank of $S$} is defined to
be $\rk(S \mod K)$ (i.e. the rank of $S$ when viewed as a subset of $L(R)/K$).  

Let $C$ be a $\sps(k,d)$ identity with nucleus $K$. The non-$K$ rank of the non-nucleus part 
$L^c_K(T_i)$ is called the {\em non-nucleus rank of $T_i$}. 
Similarly, the non-$K$ rank of the non-nucleus part $L^c_K(C):=$
$\bigcup_{i\in[k]}L^c_K(T_i)$ is called the {\em non-nucleus rank of $C$}.
\end{definition}

We give an example to explain the non-$K$ rank. 
Let $R = \FF[z_1,\cdots,z_n,y_1,\cdots,y_m]$. Suppose
$K = \lrsp(z_1,\cdots,z_n)$ and $S \subset L(R)$. We can
take any element $\ell$ in $S$ and simply drop all the $z_i$ terms, i.e. `truncate' $z$-part of 
$\ell$. This
gives a set of linear forms over the $y$ variables. The rank
of these is the non-$K$ rank of $S$.

We are now ready to state the theorem that is proved in Step 3. It basically shows a 
neat relationship between the non-nucleus part and Sylvester-Gallai.

\begin{theorem}[Bound for simple, strongly minimal identities]\label{thm-strong-rank} 
Let $|\FF|>d$. The non-nucleus rank of 
a simple and strongly minimal $\sps(k,d)$ identity over $\FF$ is at most $\sg_{k-1}(\FF,d)$.
\end{theorem}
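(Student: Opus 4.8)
The plan is to set up a correspondence between the non-nucleus forms and a colored point configuration, and then show that this configuration, after the right truncation, is forced to be $\sg_{k-1}$-closed; the Sylvester-Gallai rank bound $\sg_{k-1}(\FF,d)$ then finishes the job. First I would work modulo the nucleus $K$: pass to the quotient space $L(R)/K$ and view each non-nucleus form $\ell \in L^c_K(T_i)$ as a vector $\bar\ell$ there. The $K$-matchings $\pi_i$ of Theorem \ref{thm-nucleus} tell us that, modulo $K$, each term $T_i$ contributes the \emph{same} underlying multiset of directions (up to scalars); so every non-nucleus direction occurs in all $k$ terms. Collect the distinct such directions into a set $S$ of vectors in $L(R)/K$ without multiples (this is where $|\FF|>d$ is used — there are at most $d$ forms per term, so we can normalize). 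The target is exactly $\rk(S) \le \sg_{k-1}(\FF,d)$, and since $|S|\le d$ it suffices to prove $S$ is $\sg_{k-1}$-closed.

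Next I would establish the key closure property: given any $k-1$ linearly independent directions $v_1,\dots,v_{k-1}\in S$, their span (inside $L(R)/K$) must contain a $k$-th element of $S$. To see this, suppose not. Pick the non-nucleus linear forms realizing $v_1,\dots,v_{k-1}$ inside $T_1,\dots,T_{k-1}$ respectively (one from each term, using the matching to line them up), and look at $C \equiv 0$ modulo the ideal generated by $K$ together with these $k-1$ forms. By strong minimality, $T_1,\dots,T_{k-1}$ are linearly independent, and — as in the nucleus identity $C'=\sum \alpha_i K_i$ — so are $K_1,\dots,K_{k-1}$; so killing these $k-1$ chosen forms cannot kill more than the corresponding terms, and the congruence $C\equiv 0$ forces a nontrivial linear dependence among the remaining truncated terms. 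Tracking which linear forms can possibly vanish: a form of $T_j$ ($j$ outside our chosen index set, or a different form inside it) dies modulo this ideal only if its direction lies in $\lrsp(v_1,\dots,v_{k-1})$ mod $K$ — i.e. only if it is one of the forbidden extra points of $S$ in that span. If there is no such extra point, the surviving terms are genuinely nonzero multiplication terms that cannot form an identity (again by the independence carried over from the nucleus identity, essentially Lemma \ref{lem-homo-ideal} applied one term at a time), a contradiction. Hence some further element of $S$ lies in $\lrsp(v_1,\dots,v_{k-1}) \bmod K$, giving $k$ points in a $(k-1)$-dimensional span: $S$ is $\sg_{k-1}$-closed.

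Finally, assemble the pieces: $S$ has at most $d$ elements and is $\sg_{k-1}$-closed, so by Definition \ref{def-sg-2}, $\rk(S) \le \sg_{k-1}(\FF,d)$; but $\rk(S)$ is by construction the non-$K$ rank of $L^c_K(C)$, i.e. the non-nucleus rank of $C$, which is therefore at most $\sg_{k-1}(\FF,d)$.

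I expect the main obstacle to be the second step — controlling exactly which forms survive modulo the ideal $\langle K, v_1,\dots,v_{k-1}\rangle$. The subtlety is that a linear form of $T_j$ can become zero modulo this ideal even when its non-nucleus direction is \emph{not} among $v_1,\dots,v_{k-1}$, as long as that direction lands in their span once we quotient by $K$; and repeated forms (the recurring headache flagged in Step 1) mean a single such coincidence can wipe out a high power, potentially collapsing a whole term. Handling this cleanly is precisely why the truncation ("applying $\sg_k$ not on $L(C)$ but on a suitable truncation") is introduced: one works with the images in $L(R)/K$ throughout, so that "direction lies in the span mod $K$" becomes the \emph{only} way a non-nucleus form can vanish, and the combinatorics of set partitions (alluded to in the Step 3 overview) is what bounds how the vanishing patterns across the $k$ terms can interact. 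Getting that bookkeeping exactly right, while keeping the independence of the nucleus terms $K_i$ in play, is the delicate part; the Sylvester-Gallai input itself is used purely as a black box at the very end.
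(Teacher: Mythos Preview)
Your high-level strategy is the paper's: reduce to showing the set $S$ of non-nucleus directions (mod $K$) is $\sg_{k-1}$-closed, then invoke the rank bound. The gap is in your second step, and it is genuine.

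First, a slip: going modulo ``the ideal generated by $K$ together with these $k-1$ forms'' kills every term outright, since each $K_i=M(L_K(T_i))$ is a product of forms from $K$ and hence lies in $\ideal{K}$. You must work modulo the chosen non-nucleus forms (or rather, node polynomials) alone.

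The substantive gap is the sentence ``a form of $T_j$ \ldots dies modulo this ideal only if its direction lies in $\lrsp(v_1,\dots,v_{k-1})$ mod $K$ --- i.e.\ only if it is one of the forbidden extra points of $S$ in that span.'' The first clause is a correct necessary condition; the ``i.e.'' is not. A form of $T_k$ whose direction is one of the $v_j$'s \emph{themselves} also lies in that span, and by the matching property $T_k$ always contains such forms. If the specific lift $\ell_j'\in L^c_K(T_j)$ you chose happens also to divide $T_k$ (simplicity of $C$ does not preclude this), then $T_k$ dies for free and you extract no new direction. The linear independence of $K_1,\dots,K_{k-1}$ does not help here: it constrains the nucleus parts, not which non-nucleus lifts are shared across terms. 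The paper's ``Proof Idea'' for Step~3 flags exactly this failure mode.

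The paper's repair is the partition machinery you only gesture at. For each $v_i$ one records the partition $\pa(\ell_i')$ of $[k]$ by exact equality of the node polynomials in $\fami(\ell_i')$. A purely combinatorial statement (Lemma~\ref{lem-unbroken}) shows that any $k-1$ non-trivial partitions of $[k]$ admit an \emph{unbroken chain}: classes $A_i$ (from some subset of the partitions) whose complement $\overline{\bigcup A_i}$ is nonempty and preserved by each chosen $\pa(\ell_i')$. On the algebraic side (Lemma~\ref{lem-split}) one goes modulo the node polynomials $f_i$ indexing the $A_i$'s; the ``preserved'' condition lets one divide every surviving term by the common $g_i$-nodes, so the residual has \emph{no} form left with direction among $v_1,\dots,v_{k-1}$. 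Only at that point do the independence of the $K_i$'s (Claim~\ref{clm-kmin}) and a substitution $\ell_i\mapsto -\beta_i^{-1}u_i$ force a genuinely new direction in the span, contradicting the $\sg_{k-1}$ choice. Your outline has the right endpoints but is missing this middle, and the middle (especially Lemma~\ref{lem-unbroken}) is where the real work is.
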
	

Given a simple, minimal $\sps(k,d)$ identity $C$ that is not strongly minimal. 
Let $T_1,\ldots,T_{k'}$ be linearly independent and form a basis of $\{T_i|i\in[k]\}$. Then 
it is clear that $\exists\vec{a}\in\FF^{k'}\setminus\{\vec{0}\}$ such that 
$\sum_{i\in[k']}a_iT_i+T_{k'+1}$ is a strongly minimal $\sps(k'',d)$ identity (for some 
$1<k''\leq k'+1$). 
Hence, we could apply the above theorem on this identity and get a rank bound for the non-nucleus 
part. The only problem is this fanin-$k''$ identity may not be simple. Our solution
for this is to replace $T_{k'+1}$ by the suitable linear combination of $\{T_i|i\in[k']\}$
in $C$ and repeat the above argument on the new identity. In Section \ref{sec-gen-case}
we show this takes care of the
whole non-nucleus part and bounds its rank by $k\cdot\sg_k(\FF,d)$. To state the theorem formally,
we need a more refined notion than the fanin of a $\sps$ circuit.

\begin{definition}[Independent-fanin]
Let $C=\sum_{i\in[k]}T_i$ be a $\sps(k,d)$ circuit. The {\em independent-fanin} of $C$, $\fanrk(C)$, is defined
to be the size of the maximal $\cI\subseteq[k]$ such that $\{T_i|i\in\cI\}$ are linearly independent
polynomials. (Remark: If $\fanrk(C)=k$ then $C\ne0$. Also, for an identity $C$, $C$ is strongly minimal
iff $\fanrk(C)=k-1$.)
\end{definition}

We now state the following stronger version of the main theorem.

\begin{theorem}[Final bound]\label{thm-rank} Let $|\FF|>d$. The rank of a simple, minimal
$\sps(k,d)$, independent-fanin $k'$, identity is at most $2k^2 + (k-k')\cdot\sg_{k'}(\FF,d)$.
\end{theorem}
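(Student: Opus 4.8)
The plan is to reduce the general (simple, minimal, identity) case to the simple, strongly minimal case handled by Theorem~\ref{thm-strong-rank}, paying a factor of roughly $(k-k')$ for the reduction. First I would fix a simple, minimal $\sps(k,d)$ identity $C=\sum_{i\in[k]}T_i$ with $\fanrk(C)=k'$, and pick $\cI\subseteq[k]$ with $|\cI|=k'$ so that $\{T_i\mid i\in\cI\}$ is a basis of $\operatorname{span}_{\FF}\{T_1,\dots,T_k\}$; WLOG $\cI=[k']$. Apply Theorem~\ref{thm-nucleus} to get the nucleus $K$ with $\rk(K)<2k^2$, giving $K$-matchings $\pi_i$ between $T_1$ and each $T_i$, and with $\{K_i:=M(L_K(T_i))\mid i\in[k']\}$ linearly independent. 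It then suffices to bound the non-nucleus rank $\rk(L^c_K(C)\bmod K)$ by $(k-k')\cdot\sg_{k'}(\FF,d)$, since $\rk(C)\le\rk(K)+\rk(L^c_K(C)\bmod K)$, and by monotonicity of $\sg$ (Definition~\ref{def-sg-2}) we may replace $\sg_{k'}$-terms coming from smaller fanins by $\sg_{k'}(\FF,d)$.

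Next I would run the following peeling procedure. Since $T_{k'+1}\in\operatorname{span}\{T_1,\dots,T_{k'}\}$, there is $\vec a\in\FF^{k'}$ with $T_{k'+1}=-\sum_{i\in[k']}a_iT_i$, so $D_1:=\sum_{i\in[k']}a_iT_i+T_{k'+1}=0$ — and by discarding the zero $a_i$-terms this is a strongly minimal $\sps(k'',d)$ identity for some $1<k''\le k'+1$. The obstruction Saxena--Seshadhri flag is that $D_1$ need not be \emph{simple}: some linear form may divide all surviving terms. Their fix, which I would follow, is to \emph{not} merely extract $D_1$, but to \emph{modify} $C$: replace the gate $T_{k'+1}$ in $C$ by the linear combination $-\sum_{i\in[k']}a_iT_i$ (as a formal product reorganized into a multiplication term of degree $d$, after clearing the common factor), obtaining a new circuit $C^{(1)}$ that still computes $0$, still has independent-fanin $k'$ and nucleus contained in $K$, but whose $(k'+1)$-st gate now shares its non-nucleus forms with $T_1,\dots,T_{k'}$. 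Iterating over $j=k'+1,\dots,k$ peels off one gate per step; at step $j$ one produces a simple, strongly minimal identity on the gates indexed by (a subset of) $[k']\cup\{j\}$, applies Theorem~\ref{thm-strong-rank} to bound the non-$K$ rank of its non-nucleus part by $\sg_{k'}(\FF,d)$, and then folds $T_j$ back into the span of $T_1,\dots,T_{k'}$. After $k-k'$ steps every $L^c_K(T_j)$ with $j>k'$ has been absorbed, and the forms $L^c_K(T_i)$ for $i\in[k']$ lie in the $K$-matched image of $L^c_K(T_1)$, which was already counted. Summing the $k-k'$ contributions of $\sg_{k'}(\FF,d)$ gives the non-nucleus bound, hence $\rk(C)<2k^2+(k-k')\cdot\sg_{k'}(\FF,d)$.

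The main obstacle is making the "simple-ification" step rigorous: when $\vec a$ is chosen and the common factor of $\sum_{i\in[k']}a_iT_i+T_{k'+1}$ is pulled out, one must check that (i) the extracted common linear factor lies in the nucleus $K$ (otherwise the non-nucleus rank bookkeeping breaks), (ii) the resulting strongly minimal sub-identity still has its \emph{own} nucleus inside $K$ so that Theorem~\ref{thm-strong-rank} can be invoked with the same $K$, and (iii) the modification $C\mapsto C^{(1)}$ genuinely does not increase the non-nucleus rank of what remains to be bounded — i.e. the newly introduced non-nucleus forms in the rewritten $T_{k'+1}$-gate are already in $\operatorname{span}(L^c_K(T_1)\bmod K)$ up to the matchings. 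Each of these is plausible from the structure guaranteed by Theorems~\ref{thm-mat-nucleus} and~\ref{thm-nucleus} (all non-nucleus forms are mutually matched mod $K$), but the careful tracking of which forms get absorbed where, and why the $(k-k')$ applications of Theorem~\ref{thm-strong-rank} do not double-count, is the delicate part; I would handle it by an induction on $k-k'$ with the invariant that at each stage the circuit is simple, minimal, has independent-fanin $k'$, nucleus contained in $K$, and its non-nucleus part outside the already-bounded subspace has rank at most $(k-k'-j)\cdot\sg_{k'}(\FF,d)$ after $j$ peeling steps.
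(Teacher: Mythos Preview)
Your high-level plan is right and matches the paper: form the strongly minimal sub-identities $D_j=\sum_{i\in[k']}\alpha_{j,i}T_i+T_j=0$ for each $j\in[k'+1,k]$, simplify, and apply Theorem~\ref{thm-strong-rank}. But two aspects of your execution are off, and the paper's route is cleaner.

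First, your ``modify $C$'' step is ill-defined. Replacing the gate $T_{k'+1}$ by $-\sum_i a_iT_i$ does not give a $\sps$ circuit: a sum of products is not a product, and there is no way to ``reorganize it into a multiplication term'' in general. The paper never modifies $C$; it simply considers all the $D_j$'s in parallel (not sequentially), passes to $\simp(D_j)$, and observes that $K$ is still a nucleus for $\simp(D_j)$ because $\{K_i/g_j : i\in N_j\}$ inherit linear independence from $\{K_i : i\in N_j\}$ (here $g_j=M(L_K(\gcd(D_j)))$). This is exactly your point (ii), which you had right.

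Second, your obstacle (i) --- that the common factor of $D_j$ must lie in $K$ --- is the wrong track, and in fact false in general. The paper does not need this. Instead, after bounding the non-nucleus rank of each $\simp(D_j)$ by $\sg_{k'}(\FF,d)$, the paper shows a \emph{coverage} statement: every form $\ell\in L(C)$ appears in some $L(\simp(D_j))$. The argument is: if $\ell\mid\gcd(D_j)$ for \emph{all} $j\in[k'+1,k]$, then $\ell$ divides $T_i$ for every $i\in\bigcup_j N_j\cup[k'+1,k]$; one then checks $\bigcup_j N_j=[k']$ (else summing the $D_j=0$ relations produces a nontrivial dependence among $T_1,\dots,T_{k'}$), so $\ell$ divides every term of $C$, contradicting simplicity. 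Hence the non-nucleus rank of $C$ is at most $\sum_{j}\sg_{k'}(\FF,d)=(k-k')\sg_{k'}(\FF,d)$. This coverage-via-simplicity argument is the missing idea that replaces your sequential peeling and your attempted invariant on $C^{(j)}$.
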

\noindent {\bf Remark:}
In particular, the rank of a simple, minimal $\sps(k,d)$ identity over reals is at most 
$2k^2 + (k-k')\cdot\sg_{k'}(\RR,d)$ $\leq$ $2k^2 + (k-k')2(k'-1)$ $<3k^2$, proving the main theorem
over reals. Likewise, for any $\FF$, we get the rank bound of 
$2k^2 + (k-k')\cdot\sg_{k'}(\FF,d)$ $\leq$ $2k^2 + (k-k')9k'\lg d\le$ $2k^2 + \frac{9k^2}{4}\lg d$
$<3k^2\lg 2d$, proving the main theorem.

\subsubsection*{Proof Idea for Theorem \ref{thm-strong-rank}}

Basically, we apply the $\sg_k(\cdot)$ operator on the non-nucleus part of the term $T_1$, 
i.e. we treat a linear form $\sum_ia_ix_i$ as the point 
$(1,\frac{a_2}{a_1},\ldots,\frac{a_n}{a_1})\in\FF^n$ for the purposes of Sylvester-Gallai and then 
we consider $\sg_k(L_K^c(T_1))$ assuming that the non-nucleus rank of $T_1$ is more
than $\sg_k(\FF,d)$. 
This application of Sylvester-Gallai is much more direct compared
to the methods used in \cite{KSar09}. There, they needed versions
of Sylvester-Gallai that dealt with colored points and had to
prove a {\em hyperplane decomposition} property after applying essentially a
$\sg_{k^{O(k)}}(\cdot)$ operator on $L(C)$. 
Since, modulo the nucleus, all multiplication terms look essentially the same,
it suffices to focus attention on just one of them.
Hence, we apply the $\sg_k$-operator on a single multiplication term.

To continue with the proof idea, assume $C$ is a simple, strongly minimal $\sps(k,d)$
identity with terms $\{T_i|i\in[k]\}$ and let $K$ be its nucleus given by Step 2. It will
be convenient for us to fix a linear form $y_0\in L(R)^*$ and a subspace $U$ of $L(R)$ 
such that we have the following {\em orthogonal} vector space decomposition 
$L(R)=\FF y_0\oplus U\oplus K$ (i.e. $\ell\in \FF y_0\cap U$ implies $\ell=0$ and
$\ell\in (\FF y_0\oplus U)\cap K$ implies $\ell=0$). This means for any form $\ell\in L(R)$, 
there is a unique way to express $\ell=\alpha y_0+u+v$, where $\alpha\in\FF$, $u\in U$ and 
$v\in K$.
Furthermore, we will assume wlog that for every form $\ell\in L^c_K(T_1)$ the corresponding 
$\alpha$ is nonzero, i.e. each form in $L^c_K(T_1)$ is {\em monic} wrt $y_0$ (see Lemma
\ref{lem-monic-forms}).

\begin{definition}[$\trun(\cdot)$]\label{def-trun} 
Fix a decomposition $L(R)=\FF y_0\oplus U\oplus K$. For any form $\ell\in L^c_K(T_1)$, 
there is a unique way to express $\ell=\alpha y_0+u+v$, where $\alpha\in\FF^*$, $u\in U$ and 
$v\in K$.
 
The {\em truncated form $\trun(\ell)$} is the linear form obtained by dropping the $K$ part and 
normalizing, i.e. $\trun(\ell):=y_0+\alpha^{-1}u$. 

Given a list of forms $S$ we define $\trun(S)$ to be the corresponding {\em set} (thus no 
repetitions) of truncated forms. 
\end{definition}

To be precise, we fix a basis $\{y_1,\ldots,y_{\rk(U)}\}$ of $U$ so that each form in $\trun(L_K^c(T_1))$
has representation $y_0+\sum_{i\geq1}a_iy_i$ ($a_i\text{'s}\in\FF$). We view each such form as the 
{\em point}
$(1,a_1,\ldots,a_{\rk(U)})$ while applying Sylvester-Gallai on $\trun(L_K^c(T_1))$.
Assume, for the sake of 
contradiction, that the non-nucleus rank of $T_1$, $\rk(\trun(L_K^c(T_1)))>\sg_k(\FF,d)$ then 
(by definition) 
$\sg_k(\trun(L_K^c(T_1)))$ gives $k$ linearly independent forms $\ell_1,\ldots,\ell_k\in(y_0+U)$
whose span contains no {\em other} linear form of $\trun(L_K^c(T_1))$.

For simplicity of exposition, let us fix $k=4$, $K$ spanned by $z$'s, $U$ spanned by $y$'s
and $\ell_i=y_0+y_i$ $(i\in[4])$. Note that (by definition)
$\trun(\alpha y_0+\sum_i \alpha_i z_i + \sum_i \beta_i y_i) = y_0+\sum_i\frac{\beta_i}{\alpha} y_i$. 
We want to derive a
contradiction by using the $\sg_4$-tuple $(y_0+y_1,y_0+y_2,y_0+y_3,y_0+y_4)$ {\em and} the fact 
that $C$ is a simple, strongly minimal $\sps(4,d)$ identity. The contradiction is easy to see
in the following configuration: Suppose the linear forms in $C$ that are similar to a form in 
$\bigcup_{i\in[4]}(y_0+y_i+K)$ are {\em exactly} those depicted in the figure. Let us consider
$C$ modulo the ideal $I:=\ideal{y_0+y_1+z_1, y_0+y_2+z_2, -y_0-y_4+z_2}$. It is easy to see
that these forms (call them $\ell_1',\ell_2',\ell_4'$) ``kill" the first three gates, leaving 
$C\equiv T_4 (\mod I)$. As $C$ is an identity this means $T_4\in I$, thus there is a form 
$\ell\in L(T_4)$ such that $\ell\in\lrsp(\ell_1',\ell_2',\ell_4')$. Now none of the forms 
$\ell_1',\ell_2',\ell_4'$ divide $T_4$. 
Also, their non-trivial combination, say 
$\alpha\ell_1'+\beta\ell_2'$ for $\alpha\beta\ne0$, cannot occur in $L(T_4)$.
Otherwise, by the matching property $\trun(\alpha\ell_1'+\beta\ell_2')=$ 
$(\alpha+\beta)^{-1}(\alpha\ell_1+\beta\ell_2)$ will be 
in $\trun(L^c_K(T_1))$. This contradicts the $\ell_i$'s being a
$\sg_4$-tuple. Thus, $T_4$ cannot be in $I$, a contradiction. This means
that the non-nucleus rank of $T_1$ is $\leq\sg_4(\FF,d)$, which by matching properties implies
the non-nucleus rank of $C$ is $\leq\sg_4(\FF,d)$.

\begin{center}\includegraphics[scale=0.48]{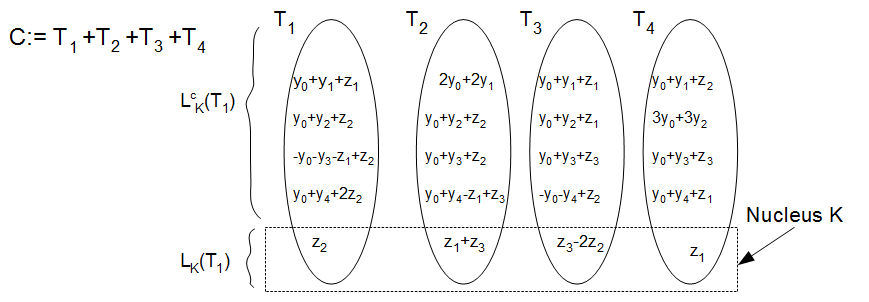}\end{center}

We were able to force a contradiction because we used a set of forms in
an SG-tuple that killed three terms and ``preserved" the last term.
Can we always do this? This is not at all obvious, and that is because
of repeating forms. Suppose, after going modulo form $\ell$,
the circuit looks like $x^3y + 2x^2y^2 + xy^3 = 0$. This is not simple,
but \emph{it does not have to be}. We are only guaranteed that the original
circuit is simple. Once we go modulo $\ell$, that property is lost.
Now, the choice of \emph{any} form kills all terms. In the figure
above, $\<y_0+y_1+z_1, y_0+y_2+z_2$, $y_0+y_3+z_3\>$ does not yield
a contradiction. We will use our more powerful Chinese remaindering
tools and the nucleus properties to deal with this.
We have to prove 
a special theorem about partitions of $[k]$ and use strong minimality (which we did not use in the 
above sketch). The full proof is given in Section \ref{sec-strong}. 


\section{Matching the Terms in an Identity: Construction of mat-nucleus}
\label{sec-mat-nucleus}

\subsection{Chinese Remaindering for Multiplication Terms}

Traditionally, Chinese remaindering is the fact: if two coprime polynomials (resp. integers) $f, g$ 
divide a polynomial (resp. integer) $h$ then $fg$ divides $h$. The key tool in constructing mat-nucleus 
is a version of Chinese 
remaindering specialized for multiplication terms but generalized to ideals. Similar methods appeared 
first in \cite{KS07} but we turn those on their head and give a ``simpler'' proof. In particular, we avoid the 
use of local rings and Hensel lifting.

\begin{definition}[Radical-span]
Let $S:=\{f_1,\ldots,f_m\}$ be multiplication terms generating an ideal $I$. 
Define linear space $\radsp(S):=$ $sp(L(f_1)\cup\ldots\cup L(f_m))$. 

When the set of generators $S$ are clear from the context we will also use the notation $\radsp(I)$.
Similarly, $\radsp(I,f)$ would be a shorthand for $\radsp(S\cup\{f\})$.
\end{definition}
\noindent {\bf Remark.} Radical-span is motivated by the {\em radical} of an ideal but it is not quite that, for example, 
$\text{radical}(x_1^2,x_1x_2)=$ $\ideal{x_1}$ but $\radsp(x_1^2,x_1x_2)=sp(x_1,x_2)$. It is easy to see that 
the ideal generated by $\radsp$ always contains the radical ideal. 

Now we can neatly state Chinese remaindering as an {\em ideal decomposition} statement.

\begin{theorem}[Ideal Chinese remaindering] \label{thm-crt}
Let $f_1,\ldots,f_m, z,f,g$ be multiplication terms. Define the ideal $I:=\ideal{f_1,\ldots,f_m}$. 
Assume $L(z)\subseteq \radsp(I)$ while, $L(f)\cap\radsp(I)=\emptyset$ and $L(g)\cap\radsp(I,f)=\emptyset$. Then, 
$\ideal{I, zfg}=\ideal{I, z} \cap \ideal{I, f} \cap \ideal{I, g}$.
\end{theorem}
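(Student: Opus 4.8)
\textbf{Proof plan for Theorem~\ref{thm-crt} (Ideal Chinese remaindering).}
The plan is to prove the two inclusions separately, with the forward inclusion $\ideal{I,zfg}\subseteq\ideal{I,z}\cap\ideal{I,f}\cap\ideal{I,g}$ being the trivial one: since $zfg$ is a multiple of each of $z$, $f$, $g$, the ideal $\ideal{I,zfg}$ is contained in each of $\ideal{I,z}$, $\ideal{I,f}$, $\ideal{I,g}$, hence in their intersection. The content of the theorem is the reverse inclusion. So suppose $h\in\ideal{I,z}\cap\ideal{I,f}\cap\ideal{I,g}$; I want to show $h\in\ideal{I,zfg}$.

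The main idea is to work modulo $I$ in the quotient ring $R/I$ and exploit that, by the hypotheses $L(f)\cap\radsp(I)=\emptyset$ and $L(g)\cap\radsp(I,f)=\emptyset$, the images of the linear forms dividing $f$ (and $g$) are in some sense ``transversal'' to the forms generating $I$. Concretely, I would first reduce to the case where $I$ is generated by linearly independent linear forms: by a linear change of coordinates we may assume $\radsp(I)=\lrsp(x_1,\ldots,x_r)$ for some $r$, and since each $f_i$ is a product of forms drawn from $\radsp(I)$, modulo a standard argument (this is where a lemma about $\ideal{I}$ versus the monomial ideal in $x_1,\ldots,x_r$ enters) it suffices to treat $I$ as a monomial-type ideal in the first $r$ variables. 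Then $f$ and $g$ are products of forms involving the remaining variables $x_{r+1},\ldots,x_n$ only — more precisely, $f$ uses variables disjoint from $x_1,\ldots,x_r$, and $g$ uses variables disjoint from those appearing in $I$ and $f$ together. Working in $R/I$, which is now a quotient of a polynomial ring by a monomial ideal, $f$ and $g$ become coprime non-zero-divisors (their leading forms share no variables), so the classical Chinese remaindering fact applies: if $f\mid (h \bmod I)$ up to the ideal and $g\mid(h\bmod I)$, then $fg\mid (h\bmod I)$; iterating with $z$ (whose forms lie in $\radsp(I)$, so that $z$ behaves well relative to the quotient) gives $zfg\mid (h\bmod I)$, i.e. $h\in\ideal{I,zfg}$.

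The step I expect to be the main obstacle is making rigorous the passage ``$f,g$ are coprime non-zero-divisors in $R/I$'' — i.e. controlling the structure of $R/I$ when $I$ is generated by multiplication terms (products of linear forms) rather than by monomials or by a regular sequence. The subtlety is exactly the one flagged in the paper's remark: $\radsp$ is not the radical, and $I$ itself may be far from radical (e.g.\ $I=\ideal{x_1^2}$), so one cannot simply pass to $R/\sqrt I$; the repeated-form phenomenon must be handled directly. I would deal with this by proving (or invoking, if it is among the appendix lemmas) that under the disjointness-of-supports hypotheses, $R/I$ decomposes compatibly with the variable split, so that multiplication by $f$ and by $g$ are injective modulo $I$ and the containments $h\in\ideal{I,f}$, $h\in\ideal{I,g}$ literally say $f$ and $g$ divide the normal form of $h$ in the appropriate sense. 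Once that local-structure statement is in hand, the remaining bookkeeping — combining the divisibilities by $z$, $f$, $g$ one at a time, using $L(z)\subseteq\radsp(I)$ to absorb $z$ correctly — is routine and I would not belabor it.
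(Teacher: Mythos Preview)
Your high-level instinct---that the key is showing the linear factors of $f$ (and then of $g$) act as non-zerodivisors modulo the relevant ideals---is correct, and that is exactly what drives the paper's proof. But the concrete mechanism you propose is based on a false reading of the hypotheses. The condition $L(f)\cap\radsp(I)=\emptyset$ says only that no linear factor of $f$ \emph{lies in} the span $\radsp(I)$; it does \emph{not} say those factors are orthogonal to $\radsp(I)$. After your change of coordinates putting $\radsp(I)=\lrsp(x_1,\ldots,x_r)$, a form $\ell\in L(f)$ could perfectly well be $x_1+x_{r+1}$: it is not in $\radsp(I)$, but it certainly involves $x_1$. So your claim that ``$f$ uses variables disjoint from $x_1,\ldots,x_r$'' is simply false, and with it the reduction to a monomial ideal and to classical coprimality in disjoint variable sets collapses. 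The same objection applies to $g$: the condition $L(g)\cap\radsp(I,f)=\emptyset$ does not give variable-disjointness from $I$ and $f$.

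The paper's proof avoids this entirely by working sequentially and invoking Lemma~\ref{lem-non-zd} (which is precisely the ``non-zerodivisor'' statement you anticipated might be in the appendix): if $\ell\notin\radsp(J)$ for an ideal $J$ generated by multiplication terms, then $\ell p\in J$ iff $p\in J$. Concretely, from $h=i_2+bf$ and $h\in\ideal{I,z}$ one gets $bf\in\ideal{I,z}$; since $L(z)\subseteq\radsp(I)$ forces $\radsp(I,z)=\radsp(I)$, each factor of $f$ is outside $\radsp(I,z)$, so Lemma~\ref{lem-non-zd} applied form by form gives $b\in\ideal{I,z}$, whence $bf\in\ideal{I,z}f\subseteq\ideal{I,zf}$ and $h\in\ideal{I,zf}$. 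One then repeats the argument with $g$ against $\ideal{I,zf}$ (using $\radsp(I,zf)=\radsp(I,f)$), obtaining $h\in\ideal{I,zfg}$. Note in particular that $z$ is handled \emph{first}, not last: its forms lie in $\radsp(I)$, so $z$ may well be a zerodivisor (or even zero) modulo $I$, and your ``iterate with $z$ at the end'' step would not go through. The paper's ordering---absorb $z$ into the ideal before peeling off $f$, then $g$---is what makes the non-zerodivisor hypothesis available at each stage.
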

\begin{proof}
If $h$ is a polynomial in $\ideal{I, zfg}$ then clearly it is in each of the ideals $\ideal{I, z}$, $\ideal{I, f}$ and 
$\ideal{I,g}$.

Suppose $h$ is a polynomial in $\ideal{I, z} \cap \ideal{I, f} \cap \ideal{I, g}$. Then by definition there exist 
$i_1,i_2,i_3\in I$ and $a,b,c\in R$ such that,
$$h=i_1+az=i_2+bf=i_3+cg.$$

The second equation gives $bf\in \ideal{I,z}$. Since $L(f)\cap\radsp(I,z)=L(f)\cap\radsp(I)=\emptyset$, 
repeated applications of Lemma \ref{lem-non-zd} give us, $b\in\ideal{I,z}$.
Implying $bf \in \ideal{I,z} f \subseteq \ideal{I, zf}$, hence $h=i_2+bf\in \ideal{I,zf}$.
This ensures the existence of $i_2'\in I$ and a polynomial $b'$ such that,
$$h=i_2'+b'zf=i_3+cg.$$

Again this system says that $cg\in \ideal{I,zf}$. Since $L(g)\cap\radsp(I,zf)=L(g)\cap\radsp(I,f)=\emptyset$, 
repeated applications of Lemma \ref{lem-non-zd} give us $c\in\ideal{I,zf}$.
Implying $cg \in \ideal{I,zf} g\subseteq \ideal{I, zfg}$, 
hence $h=i_3+cg\in \ideal{I,zfg}$. This finishes the proof.
\end{proof}

The conditions in this theorem suggest that factoring a multiplication term $f$ into parts corresponding
to the equivalence classes of ``similarity mod $\radsp(I)$" would be useful. 

\begin{definition}[Nodes]
Let $f$ be a multiplication term and let $I$ be an ideal generated by some multiplication terms.
As the relation ``similarity mod $\radsp(I)$" is an equivalence relation on $L(R)$, it partitions, in 
particular, the list $L(f)$ into equivalence classes. 

{\bf[$\piv_I(f)$]}
For each such class pick a representative $\ell_i$
and define $\piv_I(f):=$ $\{\ell_1,\ldots,\ell_r\}$.
(Note that form $0$ can
also appear in this set, it represents the class $L(f)\cap\radsp(I)$.)

{\bf[$\nod_I(f)$]}
For each $\ell_i \in \piv_I(f)$, we multiply the forms in $f$ that are similar to 
$\ell_i$ 
mod $\radsp(I)$. We define {\em nodes of $f$ mod $I$} as the set of polynomials $\nod_I(f):=$ 
$\{M(L(f)\cap(\FF^*\ell+\radsp(I)))\ |\ \ell\in\piv_I(f)\}$. (Remark: When $I=\{0\}$, nodes of $f$ 
are just the coprime powers-of-forms dividing $f$.)

{\bf [...wrt a subspace]}
Let $K$ be a linear subspace of $L(R)$. Clearly, the relation 
``similarity mod $K$" is an equivalence relation on $L(R)$. It will be convenient for us to also 
use notations $\piv_K(f)$ and $\nod_K(f)$. They are defined by replacing $\radsp(I)$ in the
above definitions by $K$.
\end{definition}

Observe that the product of polynomials in $\nod_I(f)$ just gives $f$.
Also, modulo $\radsp(I)$, each node is just a form-power $\ell^r$. In other words,
modulo $\radsp(I)$, a node is rank-one term. The choice of the word ``node"
might seem a bit mysterious, but we will eventually construct paths through these.
To pictorially see what is going on, think of each term $T_i$ as a set of 
its constituent nodes.

We prove a corollary of the ideal Chinese remaindering
theorem that will be very helpful in both Steps 1 and 2.

\begin{corollary}\label{cor-crt}
Let $h\in R$, $f$ be a multiplication term, and let $I$ be an ideal generated by some multiplication 
terms. Then, $h\notin\ideal{I,f}$ iff $\exists g\in\nod_I(f)$ such that $h\notin\ideal{I,g}$.
\end{corollary}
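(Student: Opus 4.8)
\textbf{Proof plan for Corollary \ref{cor-crt}.}

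The plan is to deduce the corollary directly from the ideal Chinese remaindering theorem (Theorem \ref{thm-crt}) by unwinding the node decomposition of $f$. First I would recall that if $\nod_I(f) = \{g_1, \ldots, g_r\}$, then by construction $f = g_1 \cdots g_r$ (the product of the nodes recovers $f$), and moreover the nodes are pairwise ``coprime mod $\radsp(I)$'' in exactly the sense required by Theorem \ref{thm-crt}: each $g_j$ is a multiplication term all of whose forms are similar mod $\radsp(I)$, and forms from distinct nodes are non-similar mod $\radsp(I)$. I should separate out the (at most one) node $g_0$ all of whose forms lie in $\radsp(I)$; call it the ``trivial'' node. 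Since $L(g_0) \subseteq \radsp(I)$, we have $\ideal{I, g_0} = \ideal{I}$ up to the usual caveats, so $g_0$ can be absorbed into $I$ and plays the role of $z$ in Theorem \ref{thm-crt}.

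The main step is an induction on the number of nontrivial nodes, repeatedly applying Theorem \ref{thm-crt}. Writing $f = g_0 g_1 \cdots g_s$ where $g_1, \ldots, g_s$ are the nontrivial nodes, I would show by induction that
$$
\ideal{I, f} \;=\; \ideal{I, g_0 g_1 \cdots g_s} \;=\; \bigcap_{j=1}^{s} \ideal{I, g_j},
$$
where the last intersection uses that $\ideal{I, g_0} \supseteq I$ contributes nothing new. For the inductive step, one groups $f = (g_0 g_1 \cdots g_{s-1}) \cdot g_s$ and checks the hypotheses of Theorem \ref{thm-crt}: $L(g_0 g_1 \cdots g_{s-1})$ lies in $\radsp(\ideal{I, \text{(earlier nodes)}})$, while $L(g_s) \cap \radsp(\ideal{I}, \text{earlier nodes}) = \emptyset$ precisely because $g_s$ is a distinct equivalence class mod $\radsp(I)$ (and adding earlier nodes does not enlarge $\radsp$ beyond $\radsp(I)$ combined with forms already non-similar to those in $g_s$ — here I would need the small observation that $\radsp$ of the earlier nodes together with $I$ equals $\radsp(I)$, since all earlier nodes' forms are either in $\radsp(I)$ or similar mod $\radsp(I)$ to pivots, which does not affect non-membership of $g_s$'s forms). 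Once the decomposition $\ideal{I,f} = \bigcap_j \ideal{I, g_j}$ is established, the corollary is immediate: $h \notin \ideal{I, f}$ iff $h$ fails to lie in the intersection, iff $h \notin \ideal{I, g_j}$ for some $j$. Finally, for the trivial node $g_0$ itself (if it exists) the statement $h \notin \ideal{I, g_0}$ is equivalent to $h \notin \ideal{I}$, which is subsumed since $\ideal{I,f} \subseteq \ideal{I}$ forces $h \notin \ideal{I, f}$ trivially in that case — so including or excluding $g_0$ from $\nod_I(f)$ does not change the ``$\exists g$'' side.

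The step I expect to require the most care is verifying the coprimality hypotheses of Theorem \ref{thm-crt} cleanly at each stage of the induction, specifically the claim that $\radsp$ of a product of nodes together with $I$ does not grow beyond what is needed: one must be sure that $L(g_s) \cap \radsp(I, g_0 g_1 \cdots g_{s-1}) = \emptyset$ and not merely $L(g_s) \cap \radsp(I) = \emptyset$. This follows because every form appearing in $g_0, \ldots, g_{s-1}$ is, modulo $\radsp(I)$, a scalar multiple of a pivot form, so $\radsp(I, g_0 \cdots g_{s-1})$ is spanned by $\radsp(I)$ together with the pivots $\ell_1, \ldots, \ell_{s-1}$; since $g_s$'s forms are non-similar mod $\radsp(I)$ to all of these and (being a single equivalence class) cannot be produced as a combination hitting $\radsp(I)$ augmented by finitely many inequivalent pivots, the intersection stays empty. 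This is intuitively clear but deserves a careful one-paragraph argument, and it is the only place where the structure of ``nodes'' (rather than arbitrary factorizations) is genuinely used.
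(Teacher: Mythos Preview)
Your overall strategy---establish the ideal decomposition $\ideal{I,f} = \bigcap_j \ideal{I,g_j}$ by repeated application of Theorem~\ref{thm-crt}, then read off the corollary---is exactly the paper's. However, your inductive step as written has a genuine gap.

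You propose to group $f = (g_0 g_1 \cdots g_{s-1}) \cdot g_s$ and verify $L(g_s) \cap \radsp(I, g_0 g_1 \cdots g_{s-1}) = \emptyset$, arguing that this holds because $\radsp(I, g_0 \cdots g_{s-1}) = \radsp(I) + \lrsp(\ell_1,\ldots,\ell_{s-1})$ and the forms in $g_s$ are ``non-similar mod $\radsp(I)$'' to each $\ell_j$. But pairwise non-similarity of the pivots modulo $\radsp(I)$ does \emph{not} prevent $\ell_s$ from lying in $\radsp(I) + \lrsp(\ell_1,\ldots,\ell_{s-1})$; it only prevents $\ell_s$ from lying in $\radsp(I) + \FF\ell_j$ for a single $j$. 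Concretely, take $I = \ideal{0}$ and $f = x \cdot y \cdot (x+y)$. The three nodes are $x$, $y$, $x+y$, pairwise non-similar mod $\{0\}$. Yet $(x+y) \in \lrsp(x,y) = \radsp(I, xy)$, so your hypothesis check fails at $s=3$. Your sentence ``cannot be produced as a combination hitting $\radsp(I)$ augmented by finitely many inequivalent pivots'' is precisely the false step.

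The paper sidesteps this by reversing the peeling order: it sets $G_i := \prod_{j>i} g_j$ and applies Theorem~\ref{thm-crt} to split $g_i G_i$ with $g_i$ in the role of the theorem's $f$ and $G_i$ in the role of $g$, checking $L(G_i) \cap \radsp(I, g_i) = \emptyset$. Now $\radsp(I,g_i) = \radsp(I) + \FF\ell_i$ extends $\radsp(I)$ by only \emph{one} direction, so the check reduces exactly to: no $\ell_j$ with $j>i$ lies in $\radsp(I)$ or is similar to $\ell_i$ mod $\radsp(I)$---and that is precisely what pairwise non-similarity (plus ordering the trivial node as $g_1$) guarantees. Your argument is repaired by simply swapping which factor plays the role of $f$ and which plays the role of $g$ at each stage.
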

\begin{proof}
If $h\notin\ideal{I,g}$, for some $g\in\nod_I(f)$, then clearly $h\notin\ideal{I,f}$.

Conversely, assume $h\notin\ideal{I,f}$.
Let $\piv_I(f)=$ $\{\ell_1,\ldots,\ell_r\}$ and correspondingly, $\nod_I(f)=$ $\{g_1,\ldots,g_r\}$.
If $r=1$ then $f$ is similar to $g_1$, hence $h\notin\ideal{I,g_1}$ and we are done. 
So assume
$r\geq2$. Also, in case $L(f)$ has a form in $\radsp(I)$, assume wlog $\ell_1$ is the representative of
the class $L(f)\cap\radsp(I)$. Define $G_i:=\prod_{i<j\le r}g_i$, for all $i\in[r-1]$.

We claim that for all $i\in[r-1]$, $L(G_i)\cap\radsp(I,g_i)=\emptyset$. Otherwise 
$\exists\ell\in L(G_i)$ such that either $\ell\in\radsp(I)$ or $\ell\in(\FF^*\ell_i+\radsp(I))$.
Former case contradicts $\ell_1$ being the representative of the class $L(f)\cap\radsp(I)$, while
the latter case contradicts $\ell_{i+1},\ldots,\ell_r$ being {\em non}-similar to $\ell_i$
mod $\radsp(I)$. Thus, for all $i\in[r-1]$, $L(G_i)\cap\radsp(I,g_i)=\emptyset$, and by 
applying Theorem \ref{thm-crt} on $\ideal{I,g_iG_i}$ for each $i\in[r-1]$, we deduce:
$$\ideal{I,f}=\left\<I,\prod_{i\in[r]}g_i\right\>=\bigcap_{i\in[r]}\ideal{I,g_i}.$$
Thus, $h\notin\ideal{I,f}$ implies the existence of some $i\in[r]$ such that $h\notin\ideal{I,g_i}$. 
\end{proof}

\subsection{Applying Chinese Remaindering to $\sps$ Circuits}

We showed the effect of ideal Chinese remaindering on a single multiplication
term $f$ in Corollary \ref{cor-crt}. Now we show the effect on a {\em tuple} 
of multiplication terms, for example, appearing in a $\sps$ circuit. We then
need, quite naturally, a notion of {\em path} of nodes.
 
\begin{definition}[Paths]
Let $I$ be an ideal generated by some multiplication terms.
Let $C=\sum_{i\in[k]}T_i$ be a $\sps(k,d)$ circuit. Let $v_i$ be a {\em sub-term} of 
$T_i$ (i.e. $L(v_i)\subseteq L(T_i)$), for all $i\in[k]$. 
We call the tuple $(I,v_1,\ldots,v_k)$ a {\em path of $C$ mod $I$} if, for all $i\in[k]$,
$v_i\in\nod_{\ideal{I,v_1,\ldots,v_{i-1}}}(T_i)$. It is of {\em length} $k$.
(Remark: We have defined path $\vec{p}$ as a tuple but, for convenience, we will sometimes
treat it as a {\em set} of multiplication terms, eg. when operated upon by $\lrsp(\cdot)$, 
$\ideal{\cdot}$, $\radsp(\cdot)$, etc.)

Conventionally, when $k=0$ the circuit $C$ has just ``one'' gate: $0$. In that case,
the only path $C$ has is $(I)$, which is of length $0$.

We also define, for any subset $S\subseteq[k]$, the {\em sub-circuit} $C_S:=\sum_{s\in S}T_s$. 

For an $i\in\{0,\ldots,k-1\}$, define $[i]':=[k]\setminus[i]$. 
We set $[0]:=\emptyset$ and $C_\emptyset:=0$. 
\end{definition}

We now show that if $C$ is a nonzero $\sps(k,d)$ circuit then $\exists i\in\{0,\ldots,k-1\}$,
such that $C_{[i]}$ has a path $\vec{p}$ for which, $C\ (\mod \ideal{\vec{p}})$ is nonzero {\em and} 
similar to some multiplication term. This rather special path inside $C$ can be seen as a
certificate for the nonzeroness. 
The rank of the linear forms
appearing in this path can be at most $i+\rk(\radsp(I))$, since
the rank of each node is one, modulo the radical-span of the previous nodes in the path.
Hence, it is a {\em low-rank certificate for the 
nonzeroness of $C$}.  

\begin{theorem}[Certificate for a Non-identity]\label{thm-cert-non0}
Let $I$ be an ideal generated by some multiplication terms. Let $C=\sum_{i\in[k]}T_i$ be a 
$\sps(k,d)$ circuit that is nonzero modulo $I$. Then $\exists i\in\{0,\ldots,k-1\}$ such that 
$C_{[i]}$ mod $I$ has a path $\vec{p}$ satisfying: 
$C_{[i]'}\equiv \alpha\cdot T_{i+1}\not\equiv0$ $(\mod \vec{p})$ for some 
$\alpha\in\FF^*$.
\end{theorem}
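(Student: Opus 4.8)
The plan is to induct on $k$, the top fanin of $C$. The base case $k=0$ is vacuous (the circuit is $0$, which is zero mod $I$, so the hypothesis fails) and $k=1$ is immediate: if $T_1 \not\equiv 0 \pmod I$ then the empty path $\vec{p}=(I)$ works with $i=0$, $C_{[0]'} = T_1 = \alpha T_1$. For the inductive step, suppose the claim holds for all circuits of fanin $< k$, and let $C=\sum_{i\in[k]} T_i$ be nonzero mod $I$. First I would dispose of the easy case: if $C_{[1]'} = T_2 + \cdots + T_k$ is already nonzero mod $\ideal{I}$ — wait, more precisely the right split is to look at whether $C_{[k-1]} = T_1 + \cdots + T_{k-1}$ is zero or nonzero modulo $I$, since that is the sub-circuit whose path we want to extend.

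So the dichotomy I would actually use is: either (a) $C_{[k-1]}$ is nonzero mod $I$, or (b) $C_{[k-1]} \equiv 0 \pmod I$. In case (b), since $C = C_{[k-1]} + T_k$ is nonzero mod $I$, we get $T_k \not\equiv 0 \pmod I$ but $C_{[k-1]} \equiv 0$; take $i = k-1$ and $\vec{p}=(I)$ (the empty path of $C_{[k-1]}$, length $k-1$), so $C_{[k-1]'} = T_k \not\equiv 0$, and indeed $C_{[k-1]'} \equiv 1 \cdot T_k$. Done. In case (a), apply the inductive hypothesis to $C_{[k-1]}$ (fanin $k-1$), which is nonzero mod $I$: this yields some $i \in \{0,\ldots,k-2\}$ and a path $\vec{p}$ of $(C_{[k-1]})_{[i]} = C_{[i]}$ mod $I$ with $(C_{[k-1]})_{[i]'} \equiv \alpha T_{i+1} \not\equiv 0 \pmod{\vec{p}}$, where here $[i]'$ is taken inside $[k-1]$. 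Writing $J := \ideal{\vec{p}}$ (which contains $I$), we have $T_{i+1} + T_{i+2} + \cdots + T_{k-1} \equiv \alpha T_{i+1} \pmod J$, equivalently $\sum_{j=i+1}^{k-1} T_j \equiv \alpha T_{i+1}$, i.e. modulo $J$ the sub-circuit $C_{[i]'}$ of $C$ (now $[i]' = [k]\setminus[i]$) equals $\alpha T_{i+1} + T_k$.

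Now the work is to resolve this two-term circuit $\alpha T_{i+1} + T_k$ modulo $J$. Here is where Corollary~\ref{cor-crt} and the node structure enter. If $\alpha T_{i+1} + T_k \not\equiv 0 \pmod J$, then in particular $\alpha T_{i+1} + T_k \not\equiv 0 \pmod J$; consider whether $T_{i+1} \equiv 0 \pmod J$. If $T_{i+1} \equiv 0 \pmod J$ then $T_k \not\equiv 0 \pmod J$ and we can extend: we need a node $v_{i+1} \in \nod_J(T_{i+1})$ — but $T_{i+1} \equiv \alpha^{-1} C_{[i]'} \not\equiv 0$, contradiction, so in fact $T_{i+1} \not\equiv 0 \pmod J$ here. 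The genuinely substantive sub-case is when $\alpha T_{i+1} + T_k \equiv 0 \pmod J$: then $T_k \equiv -\alpha T_{i+1} \pmod J$, and since $T_{i+1} \not\equiv 0$ we get $T_k \not\equiv 0 \pmod J$ as well. I want to push the path one step further into $T_{i+1}$: by Corollary~\ref{cor-crt} applied with $h = T_k$ (or rather, I want a node $g$ of $T_{i+1}$ mod $J$ with $C_{[i]'}$ still nonzero mod $\ideal{J,g}$) — since $C_{[i]'} = \alpha T_{i+1} + T_k \not\equiv 0$ is false here, I instead observe $T_{i+1} \not\equiv 0 \pmod J$ and... the cleaner move: apply the non-identity certificate idea recursively to the two-gate circuit. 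Actually the right statement to invoke is Corollary~\ref{cor-crt} to find $v_{i+1} \in \nod_J(T_{i+1})$ such that $T_k \not\equiv 0 \pmod{\ideal{J, v_{i+1}}}$: this is possible because $T_k \not\equiv 0 \pmod J$ and $T_k \not\equiv 0 \pmod J$ forces, via the corollary with $f$ replaced appropriately, the existence of such a node; then $(\vec{p}, v_{i+1})$ is a path of $C_{[i+1]}$ mod $I$, and modulo $\ideal{\vec{p}, v_{i+1}}$ we have $C_{[i+1]'} = T_{i+2} + \cdots + T_k \equiv T_k \not\equiv 0$ (since $T_{i+2},\ldots,T_{k-1}$ are all $\equiv 0$ mod $J \subseteq \ideal{\vec p, v_{i+1}}$... here I need $C_{[i]'} = \alpha T_{i+1} + T_k$ mod $J$, so $T_{i+2}+\cdots+T_{k-1} \equiv 0 \pmod J$). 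That gives $C_{[i+1]'} \equiv 1 \cdot T_k \not\equiv 0$, completing the step with index $i+1 \le k-1$.

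The main obstacle I anticipate is exactly the bookkeeping in this last case — ensuring that when I extend the path by a node $v_{i+1}$ of $T_{i+1}$, the congruence $C_{[i+1]'} \equiv \text{(unit)}\cdot T_{i+2} \not\equiv 0$ holds with the correct leftover term (it is $T_k$, which is $T_{k}$, not necessarily $T_{i+2}$ — so one must reorder the terms $T_{i+2},\ldots,T_k$ so that the surviving gate is indexed $i+2$). This relabeling is harmless since the theorem statement only asks for \emph{some} index and \emph{some} surviving gate up to ordering, but it needs to be stated carefully. The other delicate point is verifying the node-extension is legitimate, i.e. that $v_{i+1} \in \nod_{\ideal{\vec p}}(T_{i+1})$ really does make $(\vec p, v_{i+1})$ a path per the definition — this is precisely what Corollary~\ref{cor-crt} is built to deliver, applied with $h := T_k$ and $f := T_{i+1}$ and $I := \ideal{\vec{p}}$: from $T_k \not\equiv 0 \pmod{\ideal{\vec p, T_{i+1}}}$ (which holds because $T_k \equiv -\alpha T_{i+1} \pmod{\ideal{\vec p}}$ is a nontrivial congruence, so $T_k - (-\alpha T_{i+1}) \in \ideal{\vec p} \subseteq \ideal{\vec p, T_{i+1}}$ yet $T_k \notin \ideal{\vec p, T_{i+1}}$ since otherwise $\alpha T_{i+1} \in \ideal{\vec p, T_{i+1}}$ trivially — hmm, I need to re-examine: $T_{i+1} \in \ideal{\vec p, T_{i+1}}$ always, so $T_k \equiv -\alpha T_{i+1} \equiv 0$, contradicting $T_k \not\equiv 0 \pmod{\ideal{\vec p}}$; wait that shows $T_k \in \ideal{\vec p, T_{i+1}}$, the opposite!). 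Let me restate the correct extraction: since $T_k \not\equiv 0 \pmod{\ideal{\vec p}}$, Corollary~\ref{cor-crt} (with $f := T_{i+1}$, $h := T_k$) gives a node $g \in \nod_{\ideal{\vec p}}(T_{i+1})$ with $T_k \not\equiv 0 \pmod{\ideal{\vec p, g}}$ — \emph{provided} $T_k \not\equiv 0 \pmod{\ideal{\vec p, T_{i+1}}}$, which fails as just noted. So instead one applies the corollary to keep $C_{[i]'} = \alpha T_{i+1} + T_k$ nonzero: but that is $0$ mod $\ideal{\vec p}$ in this case. The actual resolution is to \emph{not} try to extend here, but rather to recognize that $\alpha T_{i+1} + T_k \equiv 0 \pmod{\ideal{\vec p}}$ is only possible if the inductive analysis already handed us the wrong split — meaning one should instead choose, at the inductive step, whether to peel $T_k$ from the front or the back, and a more symmetric induction (on the number of gates, choosing greedily which gate keeps the residual circuit nonzero) sidesteps the problem entirely. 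I expect the published proof organizes the induction precisely to avoid this degenerate case, and calibrating that organization is the crux of the argument.
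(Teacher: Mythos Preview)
Your inductive plan founders at exactly the point you identify, and the fix you gesture at (``a more symmetric induction\ldots choosing greedily which gate keeps the residual circuit nonzero'') is really a different proof, not a repair of this one. Concretely: after applying the hypothesis to $C_{[k-1]}$ you obtain $\vec p$ with $(C_{[k-1]})_{[i]'}\equiv\alpha T_{i+1}\not\equiv 0\pmod{\vec p}$, hence $C_{[i]'}\equiv\alpha T_{i+1}+T_k$. When $\alpha T_{i+1}+T_k\equiv 0\pmod{\vec p}$ you are genuinely stuck: $T_k\in\ideal{\vec p,T_{i+1}}$, so Corollary~\ref{cor-crt} gives no node of $T_{i+1}$ keeping $T_k$ alive, and relabelling does not help since every node of $T_{i+1}$ kills \emph{all} remaining gates. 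There is no local repair; the path $\vec p$ handed to you by induction is simply the wrong witness for the larger circuit, and nothing in your setup lets you revise it.

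The paper avoids this entirely by replacing induction with an extremal choice. It picks, among \emph{all} pairs $(i,\vec p)$ with $\vec p$ a path of $C_{[i]}$ mod $I$ and $C_{[i]'}\notin\ideal{\vec p}$, one that minimises the set $J_i=\{j\in[i]':T_j\notin\ideal{\vec p}\}$ of surviving gates. Let $j^*=\min J_i$; the gates strictly between $i$ and $j^*$ are already dead and are absorbed into the path for free. Now one tests whether $C_{[j^*-1]'}\in\ideal{\vec q,T_{j^*}}$. If not, Corollary~\ref{cor-crt} yields a node $v_{j^*}$ extending the path while keeping $C_{[j^*]'}$ alive, and the new survivor set is strictly contained in $J_i\setminus\{j^*\}$ --- contradicting minimality. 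Hence $C_{[j^*-1]'}\in\ideal{\vec q,T_{j^*}}$, and Lemma~\ref{lem-homo-ideal} then produces the scalar $\alpha$. The minimality hypothesis is precisely what rules out the dead-end your induction reaches: it guarantees that the path you are looking at cannot be extended to kill more gates while keeping the residual sum nonzero, forcing the residual to already be a single term.
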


Before we prove the theorem, we make an aside observation.
If the reader has kept the mental picture of the
terms as consisting of rank-one (modulo $\radsp(I)$) nodes, then the notion of a path
has some meaning. A path $\vec{p}$ kills the terms that is passed through, and
collapses remaining circuit into a single term. 
This is very reminiscent of the poly-time algorithm of Kayal \& Saxena~\cite{KS07}.
Indeed, this theorem is a (shorter) \emph{proof} of the correctness
of the algorithm. Why? Consider the path $\vec{p}$ given by the theorem
when $I$ is the zero ideal.
The path $\vec{p}$ can be represented by a list of at most $k$ `forms' in $L(C)$.
This path comes from some $C_{[i]}$, which means that $C_{[i]} = 0 (\mod \vec{p})$.
So, we get that $C \equiv \alpha \cdot T_{i+1} \not\equiv 0 (\mod \vec{p})$.
Since $T_{i+1}$ is a product of linear
forms, it is easy to algorithmically check if $C \equiv 0 (\mod \vec{p})$.
If $C$ is identically zero, such a path cannot exist. Since there are at most
$d^k$ different paths, we can exhaustively check all of them. That yields
an alternative view of \cite{KS07}'s test.

\begin{proof}
Fix an $i\in\{0,\ldots,k-1\}$ and a path $\vec{p}$ of $C_{[i]}$ mod $I$ such that: 

1) $C_{[i]'}\notin\ideal{\vec{p}}$ and, 

2) the set $J_i:=\{j\in[i]'\ |\ T_j\notin\ideal{\vec{p}}\}\ne\emptyset$ is the smallest possible
(over all $i$). 

\smallskip\noindent  
Note that for values $i=0$, $\vec{p}=(I)$, the condition (1) is satisfied and the corresponding
$J_i\ne\emptyset$. Thus, there also exist $i$ and $\vec{p}$ satisfying both the conditions (1) and (2).

Let $j^*$ be the smallest element in $J_i$. This means that for every $m$, $i<m<j^*$, 
$T_m\in\ideal{\vec{p}}$. This means, by repeated applications of Lemma \ref{lem-non-zd}, 
$v_m:=M(L_{\radsp(\vec{p})}(T_m))$ $\in\ideal{\vec{p}}$. Thus, $\ideal{\vec{p}}=$
$\ideal{\vec{p}\cup\{v_m|i<m<j^*\}}$. This makes $\vec{q}:=(\vec{p},(v_m|i<m<j^*))$ also a path
of $C_{[j^*-1]}$ mod $I$. We now claim that $\vec{q}$ is the path promised in the theorem statement.

Note that $C_{[j^*-1]'}\equiv C_{[i]'}$ $(\mod \vec{p})$ and $C_{[i]'}\notin\ideal{\vec{p}}=$ 
$\ideal{\vec{q}}$, in other words,
path $\vec{q}$ also satisfies: 

1) $C_{[j^*-1]'}\notin\ideal{\vec{q}}$ and, 

2) the set $J_{j^*-1}$ $=\{j\in[j^*-1]'\ |\ T_j\notin\ideal{\vec{q}}\}$ $=J_i$ is still 
the smallest possible. 

\smallskip\noindent  
If $C_{[j^*-1]'}$ 
$\notin\ideal{\vec{q},T_{j^*}}$ then, by Corollary \ref{cor-crt}, there exists 
$v_{j^*}\in\nod_{\ideal{\vec{q}}}(T_{j^*})$ such that $C_{[j^*-1]'}$ $\notin\ideal{\vec{q},v_{j^*}}$,
hence $C_{[j^*]'}=$ $C_{[j^*-1]'}-T_{j^*}$ $\notin\ideal{\vec{q},v_{j^*}}$.   
Define $\vec{q}':=(\vec{q},v_{j^*})$, clearly it is a path of $C_{[j^*]}$ mod $I$. Wrt this path $\vec{q}'$, 
$J_{j^*}\subseteq J_i\setminus\{j^*\}$ $\subsetneq J_i$ together with $C_{[j^*]'}$ 
$\notin\ideal{\vec{q}'}$, contradicting the minimality assumption on $i$.
Thus, we assume $C_{[j^*-1]'}$ $\in\ideal{\vec{q},T_{j^*}}$. By Lemma \ref{lem-homo-ideal}, this
guarantees the existence of an $\alpha\in\FF$ such that,
$$(C_{[j^*-1]'}-\alpha T_{j^*})\in\ideal{\vec{q}}=\ideal{\vec{p}}.$$
Since $C_{[j^*-1]'}\equiv C_{[i]'}$ $\not\equiv0$ $(\mod \vec{p})$,
the above equation can be rewritten as:
$$C_{[j^*-1]'}\equiv\alpha T_{j^*}\not\equiv0\ (\mod \ideal{\vec{q}}).$$
Thus, finishing the proof ($\alpha$ nonzero is implied).
\end{proof}

\noindent {\bf Remark.} The above theorem is quite powerful, for instance, it only needs the {\em non}-zeroness of 
$C$ mod $I$ without referring to any simplicity or minimality requirements. 

\subsection{Using Minimality to get mat-nucleus}

If we are given a circuit that is zero \& minimal (may not be simple) then a repeated application 
of Theorem \ref{thm-cert-non0} gives us a space {\em mat-nucleus} that matches all the 
multiplication terms of $C$. 

\smallskip\noindent
{\bf Theorem \ref{thm-mat-nucleus}} (Matching-Nucleus). 
Let $C=T_1+\cdots+T_k$ be a $\sps(k,d)$ circuit that is minimal and zero. 
Then there exists a linear subspace $K$ of $L(R)$ such that:
\begin{enumerate}
\item[1)] $\rk(K)< k^2$.
\item[2)] $\forall i\in[k]$, there is a $K$-matching $\pi_i$ between $T_1, T_i$. 
\end{enumerate}
\begin{proof}
The proof is an iterative process with at most $k$ rounds. We maintain a set $\cP$,
containing paths of some sub-circuits of $C$, and an undirected graph $G = ([k],E)$.
For convenience, define $U:=\radsp(p|p\in\cP)$ (i.e. consider each path $p$ as a set of 
multiplication terms, take the union of all these sets, and compute its radical-span). 
The {\em invariant} at the end of each round is: $(i,j)\in E$ iff $T_i, T_j$ are
$U$-matched. At the end of round $0$ we assume, $\cP:=\{(0)\}$ and $E:=$
$\{(i,j)\in[k]^2\ |\ T_i, T_j\text{ are similar}\}$. We want to eventually make $G$ a 
connected graph (infact a {\em $k$-clique}) by keeping $\rk(U)$ as small as possible.

Suppose the invariant holds till the end of some round $(r-1)\geq0$. If $G$ is connected
then the process stops at round $(r-1)$. Otherwise, we will show how
to {\em decrease} the number of connected components of $G$ in round $r$. 
Say, $G$ has a maximal connected component on vertices $S\subsetneq[k]$. 
Since $C_S\ne0$ (by minimality), we can apply Theorem \ref{thm-cert-non0} on $C_S$ mod $\ideal{0}$ 
to get a path $p_S$ inside $C_S$ mod $\ideal{0}$ such that $\exists i\in S$,
$C_S\equiv \alpha T_i \not\equiv0\ (\mod p_S)$ for some $\alpha\in\FF^*$. 

Define $S':=[k]\setminus S$.
Now,
\begin{equation}\label{eqn-pf-mat-nucleus} 
C \equiv C_{S'}+\alpha T_i \equiv 0\ (\mod p_S). 
\end{equation}
This means $C_{S'}\notin\ideal{p_S}$ (otherwise $\alpha T_i\in\ideal{p_S}$, a contradiction).
Thus, we can apply Theorem \ref{thm-cert-non0} on $C_{S'}$ mod $\ideal{p_S}$ to get a path 
$p_{S'}$ inside $C_{S'}$ mod $\ideal{p_S}$ such that, $C_{S'}\equiv\beta T_j\not\equiv0 (\mod p_{S'})$, 
for some $\beta\in\FF^*$. This allows us to rewrite Equation (\ref{eqn-pf-mat-nucleus}) as: 
$$\alpha T_i\equiv -\beta T_j\not\equiv0\ (\mod p_{S'})$$
Define $K':=\radsp(p_{S'})$.
As $p_{S'}$ is, after all, a path of some sub-circuit of $C\ \mod\ideal{0}$, of
length at most $|S|-1+|S'|-1=k-2$, we deduce that
$\rk(K')<(k-1)$. Also, by Lemma \ref{lem-I-match}, the above congruence implies a $K'$-matching  
between $T_i$ and $T_j$. We append the path $p_{S'}$ to $\cP$ and update $U$. Note that for 
any edge $(i,i')$ in the connected component $S$, and for any edge $(j,j')$ in the connected 
component $\tilde{S}$ (of vertex $j$): since $T_i, T_{i'}$ are still $U$-matched; 
$T_j, T_{j'}$ are still $U$-matched; $T_i,T_j$ are newly $K'$-matched; gives us that 
$T_{i'}, T_{j'}$ are newly $U$-matched. In other words, the two different connected components
$S$ and $\tilde{S}$ of $G$ will now form a bigger connected component (infact a {\em clique}) 
when we update the graph as,
$E:=\{(a,b)\in[k]^2\ |\ T_a, T_b\text{ are }U\text{-matched}\}$.

So in every round we are increasing $\rk(U)$ by at most $(k-1)$, maintaining the invariant, and
decreasing the number of connected components in $G$ by at least one. Thus, after at most
$(k-1)$ repetitions we get a $U$ that matches $T_1, T_i$, for all $i\in[k]$, and $\rk(U)<k^2$.
We define this $U$ as $K$, finishing the proof.  
\end{proof}

\section{Certificate for Linear Independence of terms: Constructing nucleus}
\label{sec-nucleus}

Suppose we have multiplication gates $T_1,\ldots,T_{k'}$ and a space $K'$ of $L(R)$
such that $T_1, T_i$ is $K'$-matched, for all $i\in[k']$. We show in this section that 
if $T_1,\ldots,T_{k'}$ are linearly independent (i.e. $\nexists$ 
$\vec{\beta}\in\FF^{k'}\setminus\{\vec{0}\}$ s.t. $\sum_{i\in[k']}\beta_iT_i=0$) then 
$K'$ can be extended to a linear space $K$ of rank at most $(\rk(K')+k'^2)$ such that: 
$M(L_K(T_1)),\ldots,M(L_K(T_{k'}))$ are also linearly independent. This will prove 
Theorem \ref{thm-nucleus}.

\smallskip\noindent
{\bf Theorem \ref{thm-nucleus}} (Nucleus).
Let $C=\sum_{i\in[k]}T_i$ be a minimal $\sps(k,d)$ identity and let $\{T_i|i\in\cI\}$ be 
a maximal set of linearly
independent terms ($1\le k':=|\cI|<k$). Then there exists a linear subspace $K$ of $L(R)$ such that:
\begin{itemize}
\item[1)] $\rk(K)< 2k^2$.
\item[2)] $\forall i\in[k]$, there is a $K$-matching $\pi_i$ between $T_1, T_i$.
\item[3)] (Define $\forall i\in\cI$, $K_i:=M(L_K(T_i))$.) The terms $\{K_i|i\in\cI\}$ 
are linearly independent.
\end{itemize}
\begin{proof}
For convenience, and wlog, assume $\cI=[k']$. 
The proof is an iterative process with at most $k'^2$ iterations, and gradually builds the 
promised space $K$. Each iteration of the process maintains a space $U$ of $L(R)$ which
is intended to grow at each step and bring us closer to $K$. For convenience, define 
$U_i:=M(L_U(T_i))$, for all $i\in[k']$. Also for each $i\in\{2,\ldots,k'\}$, define
ideal $\cI_i:=\ideal{U_1,\ldots,U_{i-1}}$.

The process has two nested iterations, or phrased 
differently,
a double induction. We will call the outer ``loop" a {\em phase}, and the inner loop a 
{\em round}.
In each round the rank of $U$ increases by at most $1$, and the $i$-th phase has at most 
$i$ rounds.
At the end of the $i$-th phase ($i\geq2$), we will ensure $T_i\notin\cI_i$.
(Remark: By Lemma \ref{lem-non-zd} this is equivalent to ensuring $U_i\notin\cI_i$,
which by Lemma \ref{lem-homo-ideal} means that $U_i$ is linearly independent of 
$U_1,\ldots,U_{i-1}$.)

In the first phase we set $U:=K'$, where $K'$ is the matching-nucleus obtained by applying
Theorem \ref{thm-mat-nucleus} on $C$. This immediately gives us property (2) promised
in the theorem statement, i.e. the matching property. Also, $\rk(U)<k^2$ at the end of
the first phase.

Now the second phase. As $T_1, T_2$ are linearly independent, we get, by Lemma
\ref{lem-homo-ideal}, that $T_2\notin\ideal{ T_1 }$. By an application of Corollary 
\ref{cor-crt}, $\exists v\in\nod_{\ideal{0}}(T_1)$ such that $T_2\notin\ideal{v}$. 
We update $U\leftarrow(U+\radsp(v))$. Note that after updation 
$T_2\notin\ideal{U_1}=\cI_2$ (otherwise $T_2\in\ideal{U_1}\subseteq\ideal{v}$, since $v|U_1$).   

Now, for the $i>2$ phase. Inductively, we assume that $\forall r<i$, $T_r\notin\cI_r$
(remember that all these ideals are wrt the {\em current} $U$).
The phase consists of various rounds. At the end of the $j$-th round ($1\le j<i$),
we just want to ensure $T_i\notin\ideal{U_1,\ldots,U_j,T_{j+1},\cdots,T_{i-1}}$.
So we do nothing in the $j$-th round unless this is violated. What do we do when it is violated?

\begin{claim}\label{clm-extend} 
Let $i>2$ and $1\le j<i$. Suppose $\forall r<i$, 
$T_r\notin\ideal{U_1,\ldots,U_{r-1}}$.
Suppose $T_i\in\ideal{U_1,\cdots,U_j,T_{j+1},\cdots,T_{i-1}}$ 
but $T_i\notin \ideal{U_1,\cdots,U_{j-1},T_j,\cdots,T_{i-1}}$. There exists a 
$v\in\nod_{\ideal{U_1,\cdots,U_{j-1}}}(T_j)$
such that for the updated $U'\leftarrow(U+\radsp(v))$ we have
$T_i\notin\ideal{U_1',\cdots,U_j',T_{j+1},\cdots,T_{i-1}}$.
\end{claim}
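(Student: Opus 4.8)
The plan is to use the ideal Chinese remaindering corollary (Corollary \ref{cor-crt}) to "localize" the obstruction to a single node of $T_j$. We are given that $T_i \in \ideal{U_1,\ldots,U_j,T_{j+1},\ldots,T_{i-1}}$ but $T_i \notin \ideal{U_1,\ldots,U_{j-1},T_j,\ldots,T_{i-1}}$. The first step is to set $I := \ideal{U_1,\ldots,U_{j-1}}$ (noting $U_j = M(L_U(T_j))$ is a sub-term of $T_j$, and modulo $\radsp(I)$ the term $U_j$ is exactly the part of $T_j$ that already lies in $\radsp(I)$ together with... actually one must be slightly careful: $U_j$ collects the $U$-matched part, not the $\radsp(I)$-part, so I would instead work with the decomposition of $T_j$ into $\nod_I(T_j)$-pieces directly). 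Writing the hypothesis as $T_i \notin \ideal{I, T_j, T_{j+1},\ldots,T_{i-1}}$, I apply Corollary \ref{cor-crt} with the ideal $I' := \ideal{I, T_{j+1},\ldots,T_{i-1}}$ (which is again generated by multiplication terms) and the multiplication term $f := T_j$: since $T_i \notin \ideal{I', T_j}$, there exists $v \in \nod_{I'}(T_j)$ with $T_i \notin \ideal{I', v} = \ideal{I, v, T_{j+1},\ldots,T_{i-1}}$.

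The second step is to check that the node $v$ produced this way is in fact a node of $T_j$ with respect to the \emph{smaller} ideal $\ideal{U_1,\ldots,U_{j-1}} = I$ as the claim demands, i.e. $v \in \nod_{\ideal{U_1,\ldots,U_{j-1}}}(T_j)$. This requires comparing $\radsp(I')$ with $\radsp(I)$; since $T_{j+1},\ldots,T_{i-1}$ contribute extra forms, the equivalence classes of "similarity mod $\radsp(I')$" refine those mod $\radsp(I)$, so a node mod $I'$ divides a node mod $I$. One then argues that the obstruction can be pushed to the coarser node — if $v$ divides $w \in \nod_I(T_j)$, then $\ideal{I,w,T_{j+1},\ldots,T_{i-1}} \subseteq \ideal{I,v,T_{j+1},\ldots,T_{i-1}}$, so $T_i \notin \ideal{I, w, T_{j+1},\ldots, T_{i-1}}$ as well, and we may take the node $w$ in place of $v$. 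This gives the desired $v \in \nod_{\ideal{U_1,\ldots,U_{j-1}}}(T_j)$.

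The third step is the verification after updating $U' \leftarrow U + \radsp(v)$. We must confirm $T_i \notin \ideal{U_1',\ldots,U_j',T_{j+1},\ldots,T_{i-1}}$ where $U_m' = M(L_{U'}(T_m))$. Since $U' \supseteq U$, each $U_m'$ is a multiple of $U_m$, so $\ideal{U_1',\ldots,U_j'} \supseteq \ideal{U_1,\ldots,U_j}$ — this pushes in the wrong direction, so the point is that adding $\radsp(v)$ to $U$ changes $U_j$ to (a multiple of) $v$ times its old value, hence $v \mid U_j'$, giving $\ideal{U_1',\ldots,U_j',T_{j+1},\ldots,T_{i-1}} \supseteq \ideal{U_1,\ldots,U_{j-1},v,T_{j+1},\ldots,T_{i-1}}$; but we want the reverse containment to conclude non-membership. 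The fix is that for $m < j$, $U_m'$ and $U_m$ generate the same ideal modulo the new forms because $\radsp(v)$-forms are either already in $\radsp(\ideal{U_1,\ldots,U_{j-1}})$ or are the single new form $v$ itself, and one invokes Lemma \ref{lem-non-zd} (non-zero-divisor argument) to show $\ideal{U_1',\ldots,U_j',T_{j+1},\ldots,T_{i-1}} = \ideal{U_1,\ldots,U_{j-1},v,T_{j+1},\ldots,T_{i-1}}$ exactly. Then $T_i \notin$ the latter, established in step one, gives the claim.

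The main obstacle I anticipate is precisely this bookkeeping in step three: tracking how $U_m = M(L_U(T_m))$ changes when a single form $v$ is adjoined to $U$, and confirming that no \emph{spurious} new forms get absorbed into the $U_m'$ that would enlarge the ideal beyond $\ideal{U_1,\ldots,U_{j-1},v,T_{j+1},\ldots,T_{i-1}}$. The inductive hypothesis $T_r \notin \ideal{U_1,\ldots,U_{r-1}}$ for $r < i$ should guarantee the $U_r$'s are "independent enough" that adjoining $v$ (which is a node of $T_j$, hence rank-one modulo $\radsp(\ideal{U_1,\ldots,U_{j-1}})$) increases $\rk(U)$ by at most one and affects only $U_j$ in an essential way; the non-zero-divisor Lemma \ref{lem-non-zd} is the workhorse that converts "$v$ is not similar mod $\radsp(I)$ to the other nodes" into the clean ideal equality needed.
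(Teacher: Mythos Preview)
Your step 2 has the refinement backwards, and this is a genuine gap. Since $\radsp(I)\subseteq\radsp(I')$, similarity mod $\radsp(I')$ is \emph{coarser}: forms that are not similar mod $\radsp(I)$ can become similar mod $\radsp(I')$. Hence a node $v\in\nod_{I'}(T_j)$ is a \emph{product} of (possibly several) nodes from $\nod_I(T_j)$, not a divisor of one. You therefore cannot pass from $v$ to a single node $w\in\nod_I(T_j)$ as you propose; and if you keep $v\in\nod_{I'}(T_j)$, the update $U'\leftarrow U+\radsp(v)$ can increase $\rk(U)$ by much more than one (since $\radsp(I')$ contains all forms of $T_{j+1},\ldots,T_{i-1}$, not just those in $U$), defeating the purpose of the claim. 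A red flag is that your argument never uses the hypothesis $\forall r<i,\ T_r\notin\ideal{U_1,\ldots,U_{r-1}}$.

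The paper's fix is to first eliminate $T_{j+1},\ldots,T_{i-1}$ by a linear combination. From $T_i\in\ideal{U_1,\ldots,U_j,T_{j+1},\ldots,T_{i-1}}$ and Lemma~\ref{lem-homo-ideal} one gets $T_i+\sum_{r=j+1}^{i-1}\alpha_rT_r\in\ideal{U_1,\ldots,U_j}$; the inductive hypothesis is then used to show these $\alpha_r$ are \emph{unique}. The second hypothesis gives $T_i+\sum\alpha_rT_r\notin\ideal{U_1,\ldots,U_{j-1},T_j}$, so Corollary~\ref{cor-crt} applies with the \emph{small} ideal $\ideal{U_1,\ldots,U_{j-1}}$ and yields $v\in\nod_{\ideal{U_1,\ldots,U_{j-1}}}(T_j)$ with $T_i+\sum\alpha_rT_r\notin\ideal{U_1,\ldots,U_{j-1},v}$. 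For the verification, if $T_i\in\ideal{U'_1,\ldots,U'_j,T_{j+1},\ldots,T_{i-1}}$ then (again by Lemma~\ref{lem-homo-ideal}) $T_i+\sum\beta_rT_r\in\ideal{U'_1,\ldots,U'_j}\subseteq\ideal{U_1,\ldots,U_j}$; uniqueness forces $\beta_r=\alpha_r$, whence $T_i+\sum\alpha_rT_r\in\ideal{U'_1,\ldots,U'_j}\subseteq\ideal{U_1,\ldots,U_{j-1},v}$, a contradiction. (Incidentally, in your step 3 the containment you actually need and that holds is $\ideal{U'_1,\ldots,U'_j}\subseteq\ideal{U_1,\ldots,U_{j-1},v}$, the reverse of what you first wrote.)
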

\claimproof{clm-extend}{
Since $T_i \in \ideal{U_1,\cdots,U_j,T_{j+1},\cdots,T_{i-1}}$, by Lemma \ref{lem-homo-ideal},
we get 
$T_i + \sum_{r=j+1}^{i-1} \alpha_r T_r \in\ideal{U_1,\cdots,U_j}$ for some $\alpha_r$-s in $\FF$. 
Suppose there are two
distinct choices for $\alpha_r$-s (we will call them $\alpha_r$ and $\alpha'_r$).
Then, 
$$\left(T_i + \sum_{r=j+1}^{i-1} \alpha_r T_r\right),\  
\left(T_i + \sum_{r=j+1}^{i-1} \alpha'_r T_r\right) \in \ideal{U_1,\cdots,U_j}.$$
Subtracting,
we get $\sum_{r=j+1}^{i-1} (\alpha - \alpha'_r) T_r \in \ideal{U_1,\cdots,U_j}$.
Let $s$ be the largest index such that $\alpha_s - \alpha'_s \neq 0$.
(By the distinctness of the sequences, such an index exists.)
We get that 
$T_{s} \in \ideal{U_1,\cdots,U_j,T_{j+1},\cdots,T_{s-1}}$ $\subseteq \ideal{U_1,\cdots,U_{s-1}}$.
Since $s \leq i-1$, this contradicts the hypothesis. Hence, the sequence $\{\alpha_r\}$ is unique.

The claim hypothesis says that $T_i\notin \ideal{U_1,\cdots,U_{j-1},T_j,\cdots,T_{i-1}}$. 
That implies 
$T_i + \sum_{r=j+1}^{i-1} \alpha_r T_r$ $\notin \ideal{U_1,\cdots,U_{j-1},T_j}$.
Thus, by Corollary \ref{cor-crt}, $\exists v\in\nod_{\ideal{U_1,\cdots,U_{j-1}}}(T_j)$
such that
$T_i + \sum_{r=j+1}^{i-1} \alpha_r T_r$ $\notin \ideal{U_1,\cdots,U_{j-1},v}$.
Let us update $U$ to $U'\leftarrow(U+\radsp(v))$. (This updates $U_r$-s to $U_r'$-s.)

We now argue that $T_i \notin \ideal{U'_1,\ldots,U'_j,T_{j+1},\cdots,T_{i-1}}$.
Suppose not. Then, by Lemma \ref{lem-homo-ideal}, for some sequence $\beta_r$, 
$T_i + \sum_{r=j+1}^{i-1} \beta_r T_r \in \ideal{U'_1,\ldots,U'_j}$ 
$\subseteq\ideal{U_1,\ldots,U_j}$ (since for all $r$, $U_r|U'_r$).
By the uniqueness of $\{\alpha_r\}$, we have $\beta_r = \alpha_r$, for all $r$.
But that implies $T_i + \sum_{r=j+1}^{i-1} \alpha_r T_r$ $\in \ideal{U'_1,\ldots,U'_j}$ 
$\subseteq \ideal{U_1,\cdots,U_{j-1},v}$.
This is a contradiction and hence completes the proof.
}

Let us look at the first round (i.e. $j=1$).
Suppose $T_i \notin \ideal{U_1,T_2,\cdots,T_{i-1}}$. Then, we move directly to the 
second round, since we have already satisfied the round invariant.
Otherwise, $T_i\in\ideal{U_1,T_2,\cdots,T_{i-1}}$. Furthermore, by linear independence 
and Lemma \ref{lem-homo-ideal}, we have $T_i \notin \ideal{T_1,\cdots,T_{i-1}}$, so  
we can invoke Claim \ref{clm-extend} to get a $v\in\nod_{\ideal{0}}(T_1)$. This allows us 
to update $U\leftarrow(U+\radsp(v))$ such that
$T_i \notin \ideal{U_1,T_2,\cdots,T_{i-1}}$. 

Now for the induction step. 
We assume that, by the end of the $(j-1)$th round, 
$T_i \notin \ideal{U_1,\cdots,U_{j-1},T_{j},\cdots,T_{i-1}}$.
For the $j$-th round, either we would have to do nothing or have to apply Claim 
\ref{clm-extend} and update $U$. In either case, $\rk(U)$ increases by at most $1$.
At the end of the round, $T_i \notin \ideal{U_1,\cdots,U_{j},T_{j+1},\cdots,T_{i-1}}$.

This continues till $j = i-1$. We finally
have $T_i \notin \ideal{U_1,\cdots,U_{i-1}}$ $=\cI_i$,
giving us the required invariant for the $i$-th phase.
This completes the proof.
\end{proof}

\section{Invoking Sylvester-Gallai Theorems: The Final Rank Bound}
\label{sec-final-rank}

In this section we will bound the non-nucleus rank of a simple, minimal $\sps(k,d)$, independent-fanin $k'$, 
identity $C$ by $(k-k')\cdot\sg_{k'}(\FF,d)$. Thus, proving Theorem \ref{thm-rank}. We 
divide the proof into two subsections. First, we bound the non-nucleus rank of a 
simple, {\em strongly} minimal $\sps(k,d)$ identity $C$ by $\sg_{k-1}(\FF,d)$, finishing
the proof of Theorem \ref{thm-strong-rank}. Second, we show how to repeatedly use this 
result on a simple, minimal but not strongly-minimal identity.

\subsection{The strongly minimal case}\label{sec-strong}

Assume that $C:=\sum_{i\in[k]}T_i$ is a simple, strongly minimal $\sps(k,d)$ identity
(recall: then $T_1,\ldots,T_{k-1}$ are linearly independent polynomials).
Let $K$ be its nucleus given by Theorem \ref{thm-nucleus}. 
There are two important properties of this nucleus that we restate (and elaborate upon) for emphasis.

The first is the \emph{matching property}.
For any $i\in[k]$, $L^c_K(T_1)$ $(= L(T_1)\setminus K)$ is $K$-matched to $L^c_K(T_i)$ 
$( = L(T_i)\setminus K)$.
In other words for any $\ell\in L^c_K(T_1)$, the degrees of 
$M(L^c_K(T_1)\cap(\FF^*\ell+K))$
and $M(L^c_K(T_i)\cap(\FF^*\ell+K))$ are equal (remark: they are polynomials in $\nod_K(T_1)$ and 
$\nod_K(T_i)$ respectively). This observation motivates the following definition. 

\begin{definition}[Family]
Let $C$ be a $\sps(k,d)$ identity and $K$ be its nucleus. Let $\ell\in L^c_K(C)$. 
The {\em family of $\ell$} is defined to be the list, $\fami(\ell):=$
$\{M(L^c_K(T_i)\cap(\FF^*\ell+K))\ |\ i\in[k]\}$. Note that $\fami(\ell)$ is a multiset of size
exactly $k$, having equal degree polynomials corresponding to each term $T_i$, we fix this
ordering on the list (i.e. $i$-th element in $\fami(\ell)$ corresponds \& divides the multiplication 
term $T_i$).
\end{definition}

Verify that any two forms in $L^c_K(C)$ that are ``similar mod $K$" have the same families.

\smallskip\noindent{\bf [Partition, Class, Split \& Preserve]}
Let us focus on a list $\fami(\ell)$. The equivalence relation of similarity (i.e. mod 
$\ideal{0}$) on $\fami(\ell)$, induces a {\em partition} of $[k]$ (i.e. if $f_i,f_j\in\fami(\ell)$ 
are similar then place
$i$ and $j$ in the same partition-class). Denote this partition induced on $[k]$, by $\pa(\ell)$. 
Observe that $\pa(\ell)$ must contain at least $2$ classes (otherwise simplicity of $C$ is violated).

Each set in this partition is called a \emph{class}, and we naturally have 
a class $\cl(f)$ associated with each member of $f \in \fami(\ell)$.

We say that $\pa(\ell)$ \emph{splits} a subset $S \subseteq [k]$
if there is some class $X\in\pa(\ell)$ such that $X \cap S \neq \emptyset, S$.
Otherwise, we say that $\pa(\ell)$ {\em preserves} $S$. Note that a singleton is always
preserved.

For classes $A_1\in\pa(\ell_1)$ and $A_2\in\pa(\ell_2)$, the \emph{complement} $\vec{A_1\cup A_2}$ is just 
the set
$[k] \setminus(A_1\cup A_2)$. We will be later interested in the properties of this complement set wrt the two 
partitions.

\smallskip\noindent
The second property of the nucleus, the \emph{linear independence}, says something technical
about the nucleus identity. By definition $K_i=M(L_K(T_i))$, for all $i\in[k]$, and by Lemma
\ref{lem-nucleus} : $\sum_{i\in[k]}\alpha_iK_i=0$ for some $\alpha_i\text{-s}\in\FF^*$. 
Furthermore,
\begin{claim} \label{clm-kmin} For $1<r<k$, let $\{s_1,\cdots,s_r\}$ be
a subset $S \subsetneq [k]$, where $s_1 < s_2 < \cdots < s_r$. Then $K_{s_r}\notin\ideal{K_{s_1},\cdots,K_{s_{r-1}}}$.
\end{claim}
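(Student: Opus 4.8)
\noindent\textbf{Proof plan for Claim~\ref{clm-kmin}.}
The plan is to reduce the stated ideal non-membership to plain $\FF$-linear independence of the multiplication terms $K_i$, and then to observe that strong minimality of $C$ pins the space of $\FF$-linear dependences among $K_1,\ldots,K_k$ down to exactly one dimension. First I would note that all the $K_i=M(L_K(T_i))$ share a common degree $d'$: this is just the matching property (Theorem~\ref{thm-nucleus}(2)), which forces $|L_K(T_i)|$ to be independent of $i$. Since the $K_i$ are homogeneous of equal degree, the degree-$d'$ graded piece of the ideal $\ideal{K_{s_1},\ldots,K_{s_{r-1}}}$ is precisely the $\FF$-span of $K_{s_1},\ldots,K_{s_{r-1}}$; hence (Lemma~\ref{lem-homo-ideal}) $K_{s_r}\in\ideal{K_{s_1},\ldots,K_{s_{r-1}}}$ if and only if $\{K_{s_1},\ldots,K_{s_r}\}$ is $\FF$-linearly dependent. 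So it suffices to prove that for every proper subset $S\subsetneq[k]$ with $|S|>1$, the terms $\{K_i\mid i\in S\}$ are $\FF$-linearly independent.

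Next I would show that the kernel of the $\FF$-linear map $\FF^k\to R$, $(c_i)_{i\in[k]}\mapsto\sum_{i\in[k]}c_iK_i$, is exactly one-dimensional. On one side, since $C$ is strongly minimal we have $\fanrk(C)=k-1$, so Theorem~\ref{thm-nucleus}(3) (applied with the maximal linearly independent index set $\cI$, which here has size $k-1$) supplies a $(k-1)$-element subfamily of the $K_i$ that is $\FF$-linearly independent; thus the $K_i$ span an $\FF$-space of dimension at least $k-1$, and the kernel above has dimension at most $1$. On the other side, Lemma~\ref{lem-nucleus} gives the nucleus identity $\sum_{i\in[k]}\alpha_iK_i=0$ with every $\alpha_i\in\FF^*$, i.e. a nonzero kernel element. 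Hence the kernel is exactly the line $\FF\cdot(\alpha_1,\ldots,\alpha_k)$.

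Finally, I would fix a proper $S\subsetneq[k]$ and suppose $\sum_{i\in S}c_iK_i=0$. Extending $(c_i)_{i\in S}$ by zeros outside $S$ produces a kernel element, so it equals $\lambda(\alpha_1,\ldots,\alpha_k)$ for some $\lambda\in\FF$; choosing $j\in[k]\setminus S$ (possible because $S$ is proper) and comparing $j$-th coordinates gives $0=\lambda\alpha_j$, and since $\alpha_j\ne 0$ we get $\lambda=0$, whence all $c_i=0$. So $\{K_i\mid i\in S\}$ is $\FF$-linearly independent; taking $S=\{s_1,\ldots,s_r\}$ and invoking the reduction of the first paragraph yields $K_{s_r}\notin\ideal{K_{s_1},\ldots,K_{s_{r-1}}}$, as claimed. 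I do not anticipate a genuine obstacle here: the steps I would watch most carefully are the equal-degree passage between ideal membership and linear-combination membership, and the observation that it is precisely the non-vanishing of all the coefficients $\alpha_i$ of the nucleus identity that collapses the dependence space to a single line.
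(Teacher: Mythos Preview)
Your proposal is correct and essentially the same approach as the paper's: both reduce ideal membership to $\FF$-linear dependence via Lemma~\ref{lem-homo-ideal}, then use the linear independence of $K_1,\ldots,K_{k-1}$ (Theorem~\ref{thm-nucleus}(3) with $\cI=[k-1]$) together with the nucleus identity $\sum_i\alpha_iK_i=0$ (all $\alpha_i\ne0$) to derive a contradiction. The only cosmetic difference is that the paper splits into the cases $s_r<k$ and $s_r=k$, whereas you uniformly characterize the kernel of $(c_i)\mapsto\sum_ic_iK_i$ as the single line $\FF\cdot(\alpha_1,\ldots,\alpha_k)$ and observe that a vector supported on a proper subset cannot lie on this line; the ingredients and logic are the same.
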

\begin{proof} If $s_r < k$, then this just holds from the linear independence of 
$\{K_1,\ldots,K_{k-1}\}$ and Lemma \ref{lem-homo-ideal}. 
So, we can assume $s_r = k$ and $K_k\in\ideal{K_{s_1},\cdots,K_{s_{r-1}}}$. 
By Lemma \ref{lem-homo-ideal}, this
means $K_k = \sum_{i\in[r-1]} \beta_{s_i} K_{s_i}$ for some $\beta\text{-s}\in\FF$. 
The nucleus identity gives us $K_k = -\sum_{i\in[k-1]}\frac{\alpha_i}{\alpha_k}K_i$ 
$=\sum_{i\in[r-1]}\beta_{s_i}K_{s_i}$. 
Since $r<k$, this implies that for some $\gamma$-s in $\FF$, not all
zero, $\sum_{i\in[k-1]}\gamma_i K_i = 0$. This contradicts the linear independence of 
$\{K_1,\ldots,K_{k-1}\}$, finishing the proof. 
\end{proof}

Before applying Sylvester-Gallai-type theorems (i.e. the $\sg_{k-1}$ operator) we emphasize
that, as discussed in Section \ref{sec-step-3}, there is a distinguished linear form 
$y_0\in L(R)^*$ and a subspace $U$ of $L(R)$ such that $L(R)=\FF y_0\oplus U\oplus K$  
and every form in $L^c_K(C)$ is monic wrt $y_0$. Thus, for every $\ell\in L^c_K(C)$
there exists a unique way to express : $\ell=\alpha y_0+u+v$ ($\alpha\in\FF^*$, $u\in U$
and $v\in K$). This allows us to define the truncation operator : $\trun(\ell)=y_0+$
$\alpha^{-1}u$. 

\begin{lemma}[Partitions from $\sg_{k-1}$-tuple]\label{lem-split} 
Suppose $\rk(\trun(L^c_K(T_1)))$ $>\sg_{k-1}(\FF,d)$, and 
$\sg_{k-1}(\trun(L^c_K(T_1)))$ gives the set $\{\ell_1, \ell_2,\cdots,\ell_{k-1}\}$.
For all $i\in[k-1]$, let $\ell_i'\in L^c_K(T_1)$ be a form satisfying 
$\trun(\ell_i')=\ell_i$. 

Let $\cI\subseteq[k-1]$ be nonempty, and $A_i$ be any class in $\pa(\ell_i'	)$ for all $i\in\cI$. Suppose 
$S:=\overline{\bigcup_{i\in\cI}A_i}$ $\ne\emptyset$. Then $S$ is split
by $\pa(\ell_c')$, for {\em some} $c\in\cI$. 
\end{lemma}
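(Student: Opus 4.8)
The plan is to argue by contradiction: suppose $S=\overline{\bigcup_{i\in\cI}A_i}$ is nonempty and is \emph{preserved} by $\pa(\ell_c')$ for every $c\in\cI$. I want to derive a violation of the fact that $\{\ell_1,\dots,\ell_{k-1}\}$ is an $\sg_{k-1}$-tuple, i.e.\ that no \emph{other} form of $\trun(L^c_K(T_1))$ lies in their span. The engine is the certificate-for-nonidentity machinery (Theorem~\ref{thm-cert-non0}) combined with the matching property of the nucleus. Concretely, I would pick, for each $i\in\cI$, the node $f_i\in\fami(\ell_i')$ corresponding to the term $T_1$ (a polynomial in $\nod_K(T_1)$ that is similar mod $K$ to a power of $\ell_i'$), and use the forms $\ell_i'$ (or the $\ell_i'$ themselves together with the class data) to build an ideal $I$ modulo which the terms indexed \emph{outside} $S$ collapse while the terms inside $S$ survive coherently. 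Because each $A_i\subseteq \overline{S}$ and the $A_i$ are classes of $\pa(\ell_i')$, going modulo (a representative of) $\ell_i'$ kills exactly the terms $T_j$ for $j\in A_i$; taking all $i\in\cI$ together kills all of $\overline S=\bigcup_i A_i$, leaving only $C_S\pmod{\text{path}}$.

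The next step is to run Theorem~\ref{thm-cert-non0} on this surviving sub-circuit $C_S$ (nonzero by minimality of $C$, since $S\subsetneq[k]$ is proper) modulo the ideal generated by the chosen forms. This produces a path $\vec p$ extending our forms, of controlled rank, such that $C_S$ — and hence, since $C=0$, the whole circuit — reduces modulo $\vec p$ to a nonzero scalar multiple of a single term $T_{m}$ with $m\in S$. Since $C\equiv 0$ overall, this forces $T_m\in\ideal{\vec p}$, so some linear form $\ell\in L(T_m)$ lies in the linear span $\lrsp(\vec p)$, which (after truncation) is spanned by $\ell_1,\dots$ together with the nucleus — but modulo $K$ it is spanned by a sub-collection of the $\ell_i$. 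The preservation hypothesis is what guarantees the construction is \emph{consistent}: because $S$ is preserved by each $\pa(\ell_c')$, the class structure does not force the surviving term $T_m$ to also be killed, so the path genuinely isolates one term of $C_S$ rather than annihilating everything (this is exactly the ``repeating forms'' pitfall flagged in the Step~3 outline). Finally, by the matching property, $\trun(\ell)$ is forced into $\trun(L^c_K(T_1))$ and lies in the span of $\{\ell_i : i\in\cI\}\subseteq\{\ell_1,\dots,\ell_{k-1}\}$; checking it is genuinely a \emph{new} form (it comes from $T_m$, not from the nodes we started with, and a nontrivial combination of the starting nodes cannot occur in $L(T_m)$ without again contradicting the $\sg_{k-1}$-tuple property) gives the contradiction.

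The main obstacle I anticipate is the bookkeeping in the second paragraph: ensuring that "kill the terms in $\overline S$, preserve $S$" can be realized by an actual \emph{path} in the sense of the Paths definition — one must choose, for each index, a node (multiplication sub-term) rather than a bare linear form, so that Corollary~\ref{cor-crt} and the ideal Chinese remaindering (Theorem~\ref{thm-crt}) apply and the rank of the accumulated path stays controlled; and one must verify that the forms coming from distinct $\ell_i'$ interact correctly mod $K$, i.e.\ that their radical-spans behave as the matching property predicts. The preservation hypothesis should be precisely the combinatorial condition that makes all these node choices mutually compatible, but pinning down the induction/ordering on $\cI$ (which class to peel off first, how $\pa(\ell_c')$ preserving $S$ interacts with the already-reduced circuit) is where the real work lies; I would handle it by ordering $\cI$ and applying Claim~\ref{clm-extend}-style incremental extension, appealing to Claim~\ref{clm-kmin} to keep the relevant sub-terms linearly independent at each stage.
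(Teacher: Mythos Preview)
The proposal has a genuine gap at its central step. You plan to apply Theorem~\ref{thm-cert-non0} to $C_S$ modulo the ideal $I_0:=\ideal{f_i:i\in\cI}$ built from representatives of the classes $A_i$. But note: since all elements of a class in $\fami(\ell_i')$ are \emph{similar} (hence scalar multiples of one another), the chosen $f_i$ divides every $T_j$ with $j\in A_i$. Thus $I_0$ kills $C_{\overline S}$, and from $C=0$ you get $C_S=-C_{\overline S}\in I_0$. So $C_S\equiv 0\pmod{I_0}$ \emph{automatically}, and the hypothesis of Theorem~\ref{thm-cert-non0} (nonzeroness mod the ideal) fails outright --- there is no path giving $C_S\equiv\alpha T_m\not\equiv 0$. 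Your sentence ``since $C\equiv 0$ overall, this forces $T_m\in\ideal{\vec p}$'' is in direct tension with the very conclusion of the theorem you invoke.

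The paper's argument has a different architecture and does not use Theorem~\ref{thm-cert-non0} at all. It \emph{starts} from $C_S\in I_0$. The preservation hypothesis is used not to ``keep node choices compatible'' but to extract, for each $i\in\cI$, a single $g_i\in\fami(\ell_i')$ with $S\subseteq\cl(g_i)$, so that $\prod_{i\in\cI}g_i$ divides every $T_j$ for $j\in S$. Writing $C_S=(\prod_i g_i)\sum_{j\in S}T_j'\in I_0$ and cancelling the $g_i$ via Lemma~\ref{lem-cancel} lands $\sum_{j\in S}T_j'$ in the smaller ideal $\ideal{f_i/\gcd(f_i,g_i):i\in\cI}$. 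Now one enlarges by $K_{s_1},\dots,K_{s_{r-1}}$ (where $S=\{s_1<\cdots<s_r\}$) to isolate $T'_{s_r}$; Claim~\ref{clm-kmin} plus an evaluation $\ell_i\mapsto K$ rules out the possibility that only $K$-forms of $T'_{s_r}$ lie in the radical-span. The remaining case yields a non-$K$ form of $T'_{s_r}$ in $\lrsp(\ell_i:i\in\cI)+K$, which the matching property transports back to $L^c_K(T_1)$ and truncates to a genuinely new element of $\lrsp(\ell_1,\dots,\ell_{k-1})$, contradicting the $\sg_{k-1}$-tuple. The missing idea in your plan is precisely this use of preservation to factor out the common $g_i$'s and then cancel via Lemma~\ref{lem-cancel}.
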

\begin{proof} 
We prove by contradiction.
Suppose $S$ is preserved by $\pa(\ell_i')$, for all $i\in\cI$. 
Since for all $i\in\cI$, $A_i\in\pa(\ell_i')$, by definition there exists an
$f_i\in\fami(\ell_i')$ such that $A_i=\cl(f_i)$. Similarly, for all $i\in\cI$,
there exists a $g_i\in\fami(\ell_i')$ such that $S\subseteq\cl(g_i)$. Note that, 
by definition, sets $A_i$ and $S$ are disjoint, hence the classes $\cl(f_i)$ and
$\cl(g_i)$ are different, implying $f_i, g_i$ are {\em not} similar, for all $i\in\cI$. 

Define ideal $I:=\ideal{f_i | i\in\cI}$. 
Let us focus on the sub-circuit $C_S = \sum_{j \in S} T_j$.  
Since $C=0$ and $S=\overline{\bigcup_{i\in\cI}\cl(f_i)}$, we deduce $C_S\in I$ (as
$f_i$ ``kills" the term $T_r$ for all $r\in\cl(f_i)$, and ``spares" the other terms).
For all $i\in\cI$, $S\subseteq\cl(g_i)$ we deduce that :  
$\prod_{i\in\cI}g_i$ divides $T_j$, for all $j\in S$. So $T_j':=T_j/(\prod_{i\in\cI}$ $g_i)$ 
is again a multiplication term
with none of its form in $\bigcup_{i\in\cI}(\FF^*\ell_i'+K)=$ $\bigcup_{i\in\cI}(\FF^*\ell_i+K)$. 
Thus, we get an important equation:
$$C_S\ =\ \(\prod_{i\in\cI}g_i\)\cdot\(\sum_{j\in S}T_j'\)\ \in\ \ideal{f_i\ |\ i\in\cI}. $$
By a repeated application of Lemma \ref{lem-cancel} on the above system, we get :
\begin{equation}\label{eqn-lem-split}
\sum_{j\in S}T_j' \in \ideal{f_i' \ |\ i\in\cI}=:I',
\text{ where, } f_i':=\frac{f_i}{\gcd(f_i,g_i)}, \forall i\in\cI.
\end{equation}
Since $f_i, g_i$ are not similar, $f_i'$ has degree $\geq1$, for all $i\in\cI$. 
Let the elements of $S$ be $s_1 < s_2 < \cdots < s_r$, for some $r\in[k-1]$. Since we have only
changed the non-nucleus part of $T_j$ to get $T_j'$, we deduce $K_{s_i}|T_{s_i}'$, for all $i\in[r]$.
Thus, modulo the ideal $I'':=\ideal{I',K_{s_1},\cdots,K_{s_{r-1}}}$, Equation (\ref{eqn-lem-split})
becomes : $T_{s_r}'\in I''$. 
We have $\radsp(I'')\subseteq\lrsp(\ell_i\ |\ i\in\cI)+K$.
Let us factor $T_{s_r}' = B_0B_1$, where $B_0$ is the product of all forms in $\radsp(I'')$
and $B_1$ is the remaining product. Thus, $B_0B_1\in I''$. By Lemma \ref{lem-non-zd}, $B_1$ can be 
cancelled out and we get $B_0\in I''$.

Suppose all forms of $B_0$ are in $K$, so $B_0 = K_{s_r}$. This means $K_{s_r} \in I''$ implying,
\begin{equation}\label{eqn-lem-split-2}
K_{s_r}\in\ideal{K_{s_1},\cdots,K_{s_{r-1}}, \{f_i' \ |\ i\in\cI\}}.
\end{equation}
Recall that each form in $f_i'$ is similar to some form in $(\FF^*\ell_i+K)$, for all $i\in\cI$.
Suppose form $(\beta_i\ell_i+u_i)|f_i'$, for all $i\in\cI$, for some $\beta$-s in $\FF^*$ and
$u$-s in $K$. In Equation (\ref{eqn-lem-split-2}) make the {\em evaluation} : $\ell_i\leftarrow$ 
$-\beta_i^{-1}u_i$, for all $i\in\cI$. This is a valid evaluation since 
$\{\ell_i \ |\ i\in\cI\}$ are linearly independent mod $K$, and values substituted are from $K$.
Clearly, this evaluation leaves the polynomial $K_s$ ($s\in S$) unchanged. Thus, we get 
$K_{s_r}\in\ideal{K_{s_1},\cdots,K_{s_{r-1}}}$, contradicting Claim \ref{clm-kmin}.

As a result, we have a form $\ell | B_0$ such that $\ell \notin K$.
We have $\ell \in \radsp(I'')$ $\subseteq\lrsp(\ell_i \ |\ i\in\cI)+K$, and 
by the way $T_{s_r}'$ was defined,
$\ell \notin \bigcup_{i\in\cI}(\FF^*\ell_i + K)$.
By the matching property of the nucleus, this gives us an $\ell' \in L^c_K(T_1)$ such that :
$\ell'\in(\lrsp(\ell_i  | i\in\cI)+K)\setminus K$ and 
$\ell'\notin \bigcup_{i\in\cI}(\FF^*\ell_i + K)$.
This means that there exist constants $\beta_i$-s in $\FF$, not all zero, such that
$\ell'\in \sum_{i\in\cI}\beta_i\ell_i+K$. As the coefficient of $y_0$ in $\ell'$ is nonzero while 
that in $\ell_i$ ($i\in\cI$) is $1$, we deduce :
$\trun(\ell') =(\sum_{i\in\cI} \beta_i)^{-1}(\sum_{i\in\cI}\beta_i\ell_i)$. 
If exactly one $\beta_i$ is nonzero, then $\ell'\in(\FF^*\ell_i + K)$, which is a contradiction.
So at least two $\beta_i$-s are nonzero, implying that $\trun(\ell')\in\trun(L^c_K(T_1))$
is a non-trivial combination of the $\ell_i$-s, contradicting the 
fact that $\{\ell_1,\cdots,\ell_{k-1}\}$ were obtained from $\sg_{k-1}(\trun(L^c_K(T_1)))$.

This contradiction proves that $S$ is split by $\pa(\ell_i')$, for some $i\in\cI$.
\end{proof}

To prove Theorem \ref{thm-strong-rank}, we need a combinatorial lemma
about general partitions. It is helpful to abstract out some of the 
details specific to identities and frame this as a purely combinatorial problem.
Since the proof is fairly involved, we present
that in the next subsection. For now, we give the necessary
definitions and claims. 
We have a universe $\cU := [k]$ of elements. We deal with set systems
with special properties.

\begin{definition}[Unbroken chain]\label{def-unbroken} 
A partition of $\cU$ is \emph{trivial} if it contains the single set $\cU$. 

Let $\bcP$ be a collection of non-trivial partitions of $\cU$ (here a collection refers to a {\em multiset}, 
i.e. $\bcP$ can have partitions repeated). A {\em chain in $\bcP$} is
a sequence of sets $A_1,A_2,\cdots,A_s$ (for some $s$) such that each set comes
from a different element of $\bcP$ (say $A_i \in \cP_i \in \bcP$).

The chain $A_1,A_2,\cdots,A_s$ is an \emph{unbroken chain}, if
$\overline{\bigcup_{i \in[s]} A_i}$ is non-empty and preserved in $\cP_i$, for each $i\in[s]$.
\end{definition}

Note that if $\overline{\bigcup_{i \leq s} A_i}$ is a singleton then
it is trivially preserved in any partition, therefore, such a chain would be unbroken.
By Lemma~\ref{lem-split}, the collection $\{\pa(\ell_i')|i\in[k-1]\}$ has
no unbroken chain. By purely studying partitions, we will show that such a phenomenon is absurd.  
The following combinatorial lemma
implies Theorem~\ref{thm-strong-rank}.

\begin{lemma}[Partitions have unbroken chain]\label{lem-unbroken} 
Let $\bcP$ be a collection of 
non-trivial partitions of $\cU$. If $\bcP$ contains at least $|\cU|-1$ partitions then $\bcP$ 
contains an unbroken chain.
\end{lemma}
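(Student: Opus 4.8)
The plan is to prove the contrapositive-flavored statement by a careful induction on $|\cU|$: assuming $\bcP$ has no unbroken chain, I will show $|\bcP| \le |\cU|-2$. First I would dispose of the base cases: when $|\cU|=1$ there are no non-trivial partitions at all, and when $|\cU|=2$ the only non-trivial partition is $\{\{1\},\{2\}\}$, and a single such partition $A_1=\{1\}$ already yields $\overline{A_1}=\{2\}$, a singleton, hence an unbroken chain; so $\bcP$ empty, $|\bcP|=0=|\cU|-2$. For the inductive step, pick any partition $\cP_1\in\bcP$ and any one of its classes, say $A_1$. Since $A_1$ is a class of a non-trivial partition, $A_1\subsetneq\cU$. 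The idea is that no chain can ``start'' with $A_1$ and survive — more precisely, every chain $A_1,A_2,\dots,A_s$ in $\bcP\setminus\{\cP_1\}$ with the $A_i$ from distinct partitions must already be broken, i.e. $\overline{\bigcup_{i\in[s]}A_i}$ is empty or split in some $\cP_i$. This forces structure on how the remaining partitions, restricted to $\cU\setminus A_1$, behave.

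The key step is a restriction/quotient argument. Given the class $A_1$ of $\cP_1$, consider the universe $\cU' := \cU\setminus A_1$ and, for every other partition $\cP\in\bcP\setminus\{\cP_1\}$, define its restriction $\cP|_{\cU'}$ by intersecting each class of $\cP$ with $\cU'$ and discarding empties. I would argue that the collection $\bcP' := \{\cP|_{\cU'} : \cP\in\bcP\setminus\{\cP_1\}\}$, after throwing away those restrictions that become trivial (i.e. have $\cU'$ as a single class), still has no unbroken chain over $\cU'$: an unbroken chain $A_2',\dots,A_s'$ in $\bcP'$ would lift to sets $A_2,\dots,A_s$ in the original partitions whose complements-in-$\cU'$ match, and prepending $A_1$ gives a chain in $\bcP$ whose complement is exactly $\overline{\bigcup A_i'}\subseteq\cU'$ — this complement is nonempty and one checks it is preserved in $\cP_1$ (since $A_1$ is a whole class of $\cP_1$, the complement lies inside another union of classes, hence is preserved) and preserved in each $\cP_i$ by the lifting; contradiction. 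So by induction $|\bcP'|\le|\cU'|-2 = |\cU|-|A_1|-2$. It remains to bound how many partitions of $\bcP\setminus\{\cP_1\}$ can restrict to a \emph{trivial} partition of $\cU'$ — equivalently, how many $\cP\in\bcP$ have the property that $\cU\setminus A_1$ is contained in a single class of $\cP$. For such a $\cP$, taking that class as $B$ we have $\overline{B}\subseteq A_1$; the plan is to iterate the same analysis \emph{inside} $A_1$: these ``bad'' partitions, restricted to $A_1$, form a collection of partitions of $A_1$ that (again by the no-unbroken-chain hypothesis, now viewing $A_1$ as the complement-side) cannot be too numerous, giving at most $|A_1|-1$ of them, and in fact a sharper count of $|A_1|$ after accounting for $\cP_1$ itself possibly being among them. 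Adding up: $|\bcP| \le 1 + (|\cU|-|A_1|-2) + (|A_1|-1) = |\cU|-2$, as desired.

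The main obstacle I anticipate is making the ``bad partition'' count tight and clean: the naive restriction argument may lose a constant, and the bound $|\cU|-1$ in the statement is exactly tight (a sunflower-type example with all partitions of the form $\{\{i\},\cU\setminus\{i\}\}$ for $i=1,\dots,|\cU|-1$ has no unbroken chain), so there is no slack to waste. The delicate point is choosing $\cP_1$ and $A_1$ wisely — I expect one wants $A_1$ to be an \emph{inclusion-minimal} class appearing anywhere in $\bcP$, so that recursing into $A_1$ deals with a genuinely smaller and well-controlled subproblem, and recursing into $\cU\setminus A_1$ inherits the no-unbroken-chain property without degenerating. A second subtlety is bookkeeping the multiset structure (partitions may repeat) and ensuring that ``distinct elements of $\bcP$'' in the chain definition is respected under restriction; this is routine but must be stated carefully. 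I would organize the final write-up so that the restriction-preserves-no-unbroken-chain claim is isolated as a sub-claim, proved once, and then applied twice — to $\cU\setminus A_1$ and to $A_1$ — to close the induction.
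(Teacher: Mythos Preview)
There is a genuine gap in your lifting step. You claim that if the restricted collection $\bcP'$ over $\cU' = \cU\setminus A_1$ has an unbroken chain $A_2',\dots,A_s'$, then prepending $A_1$ to the lifted classes $A_2,\dots,A_s$ yields an unbroken chain in $\bcP$. For this you need the complement $S = \overline{A_1\cup\cdots\cup A_s} = \cU'\setminus\bigcup_{i\ge 2}A_i'$ to be preserved in $\cP_1$, i.e.\ to lie inside a \emph{single} class of $\cP_1$. Your justification---that $S$ lies inside $\cU\setminus A_1$, a union of classes of $\cP_1$---is not enough: if $\cP_1$ has three or more classes, $S$ may straddle several of them. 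Concretely, take $\cU=\{1,2,3,4\}$, $\cP_1=\{\{1\},\{2\},\{3,4\}\}$ with $A_1=\{1\}$, and $\cP_2=\{\{1,4\},\{2,3\}\}$. Restricting $\cP_2$ to $\cU'=\{2,3,4\}$ gives $\{\{4\},\{2,3\}\}$, and $A_2'=\{4\}$ is an unbroken chain in $\bcP'$ (its complement $\{2,3\}$ is a class). But lifting to $A_1=\{1\}$, $A_2=\{1,4\}$ gives complement $\{2,3\}$, which $\cP_1$ \emph{splits} via the class $\{2\}$. (Here $\bcP$ does have a different unbroken chain, namely $\{2,3\}$ from $\cP_2$ alone, but your argument does not produce it.) This is not the obstacle you anticipated; the bad-partition count is secondary to the fact that the restriction does not inherit ``no unbroken chain'' in the way you assert.

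Your tightness example is also incorrect: with $\cP_i=\{\{i\},\cU\setminus\{i\}\}$ for $i\in[k-1]$, the chain $A_i=\{i\}$ has complement $\{k\}$, a singleton, hence trivially preserved in every partition---so this collection \emph{does} have an unbroken chain, exactly as the lemma predicts. A correct tight example showing $k-1$ cannot be weakened to $k-2$ is to take $k-2$ copies of the all-singletons partition of $[k]$.

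The paper also inducts on $|\cU|$, but through a different dichotomy. Either every nonempty proper $S\subsetneq\cU$ is split by at least $|S|-1$ partitions of $\bcP$ (the \emph{splitting property}), in which case an iterative relabelling procedure directly constructs a chain whose complement is a singleton; or some $S$ with $|S|\ge 2$ is split by at most $|S|-2$ partitions, in which case at least $|\cU|-|S|+1$ partitions preserve $S$, and one \emph{merges} $S$ into a single element and applies induction to those partitions over the quotient universe of size $|\cU|-|S|+1$. This merging quotient is exactly what sidesteps the preservation issue that breaks your restriction to $\cU\setminus A_1$.
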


\smallskip\noindent
{\bf Theorem \ref{thm-strong-rank}.} (Bound for simple, strongly minimal identities). 
Let $|\FF|>d$. The non-nucleus rank of 
a simple and strongly minimal $\sps(k,d)$ identity over $\FF$ is at most $\sg_{k-1}(\FF,d)$.
\begin{proof} (of Theorem~\ref{thm-strong-rank}) 
Let $C=\sum_{i\in[k]}T_i$ be a simple and strongly minimal $\sps(k,d)$ identity over $\FF$,
and let $K$ be the nucleus provided by Theorem \ref{thm-nucleus}. As $|\FF|>d$ we can assume (wlog by Lemma
\ref{lem-monic-forms}) the existence of a truncation operator on $L^c_K(T_1)$.
We will show that the rank of $\trun(L^c_K(T_1))$ is at most $\sg_{k-1}(\FF,d)$. 
By the matching property of the nucleus, $\trun(L^c_K(T_1))$ together
with $K$ span $L(C)$. Therefore, a non-nucleus rank bound of the former suffices
to bound the non-nucleus rank of $L(C)$.

Assuming that the rank of $\trun(L^c_K(T_1))$ 
is greater than $\sg_{k-1}(\FF,d)$, as in Lemma~\ref{lem-split},
we invoke $\sg_{k-1}(\trun(L^c_K(T_1)))$ to get $\{\ell_1,\ell_2,\cdots,\ell_{k-1}\}$.
Associated with each of these, we have the partition $\pa(\ell_i')$.
There are $k-1$ partitions in the collection $\bcP:=$ $\{\pa(\ell_i')|i\in[k-1]\}$, which are all non-trivial by the 
simplicity of $C$. Lemma~\ref{lem-unbroken} tells us that $\bcP$ has an unbroken chain, while
Lemma~\ref{lem-split} says that $\bcP$ has none. This contradiction implies the rank of 
$\trun(L^c_K(T_1))$ is at most $\sg_{k-1}(\FF,d)$, thus finishing the proof. 
\end{proof}

\subsubsection{The combinatorial proof of Lemma \ref{lem-unbroken}}

Intuitively, when the partitions in $\bcP$ have many classes then an unbroken chain should be easy to
find, for example, when $(k-1)$ partitions in $\bcP$ are all equal to $\{\{1\},\ldots,\{k\}\}$ then there is an easy
unbroken chain, namely $\{1\},\ldots,\{k-1\}$.  On the other hand, when the partitions in $\bcP$ contain
few classes then we can effectively decrease the universe and apply induction. Most of this subsection would deal
with the former case. Let us first define the splitting property.

\begin{definition}[Splitting property]
Let $\bcP$ be a collection of partitions of $\cU$. 
Suppose for all non-empty $S \subset \cU$, $S$ is split by at least $(|S|-1)$ partitions of $\bcP$.
Then $\bcP$ is said to have the \emph{splitting property}.
\end{definition}

\begin{claim} \label{clm-singleton} 
Let $\bcP$ be a collection of at least $(k-1)$ non-trivial partitions of $[k]$. 
If $\bcP$ has the splitting property then there is a chain $A_1,\cdots,A_{k-1}$ in $\bcP$ such that
$\overline{\bigcup_{i \leq k-1} A_i} = \{k\}$. (In particular, $\bcP$ has an unbroken chain.)
\end{claim}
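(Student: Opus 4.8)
The plan is to build the chain $A_1,\dots,A_{k-1}$ greedily, peeling off one element of $[k]$ at a time, while using the splitting property at each stage to guarantee a usable class. Concretely, I would construct a nested sequence $[k]=S_0 \supsetneq S_1 \supsetneq \cdots \supsetneq S_{k-1}=\{k\}$ together with partitions $\cP_1,\dots,\cP_{k-1}$, all distinct elements of $\bcP$, and classes $A_i\in\cP_i$ such that $S_i = S_{i-1}\setminus A_i$ (equivalently $\overline{\bigcup_{j\le i}A_j} = S_i$ once we arrange that all the $A_j$ are disjoint and their union is $[k]\setminus S_i$). To make this work I would insist inductively that $k\in S_i$ for all $i$, so that the final set $S_{k-1}$, having size $1$ by a counting argument, is exactly $\{k\}$.

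The key step is the inductive move: given $S_{i-1}$ with $|S_{i-1}|\ge 2$ and $k\in S_{i-1}$, and given that partitions $\cP_1,\dots,\cP_{i-1}$ have already been used, I need to find a fresh partition $\cP_i\in\bcP\setminus\{\cP_1,\dots,\cP_{i-1}\}$ and a class $A_i\in\cP_i$ with $A_i\subseteq S_{i-1}$, $A_i\neq S_{i-1}$, $k\notin A_i$, and $A_i\neq\emptyset$. To get this, apply the splitting property to the set $S_{i-1}$: it is split by at least $|S_{i-1}|-1$ partitions of $\bcP$. If a partition $\cP$ splits $S_{i-1}$, then $\cP$ has a class $X$ with $\emptyset\neq X\cap S_{i-1}\subsetneq S_{i-1}$; replacing $X$ by $X\cap S_{i-1}$ gives a nonempty proper subset of $S_{i-1}$, and since $X\cap S_{i-1}$ is strictly smaller than $S_{i-1}$ we may, if it happens to contain $k$, instead take the union of the other classes of $\cP$ restricted to $S_{i-1}$ (which is $S_{i-1}\setminus(X\cap S_{i-1})$, also nonempty and proper) — one of these two choices omits $k$. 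The remaining task is to show such a splitting partition has not all been used up: at step $i$ we have used $i-1$ partitions, and $S_{i-1}$ has size $k-(i-1)$, so the splitting property supplies $\ge k-(i-1)-1 = k-i$ partitions splitting $S_{i-1}$; since $k-i \ge 1 > i-1$ fails in general, I instead argue more carefully by insisting that the already-used $\cP_1,\dots,\cP_{i-1}$ need not split $S_{i-1}$, or simply keep a running count ensuring enough fresh splitters remain. The cleanest bookkeeping is: at each stage choose $\cP_i$ among the $\ge |S_{i-1}|-1$ splitters of $S_{i-1}$ that is not among the finitely many already chosen, which is possible precisely when $|S_{i-1}|-1 > i-1$, i.e. $|S_{i-1}| > i$; since $|S_{i-1}| = k-(i-1) = k-i+1 > i$ exactly when $i < (k+1)/2$, this naive count only carries us halfway, so the real content is to do better.

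I expect the main obstacle to be exactly this counting gap: the splitting property alone, applied naively to the shrinking sets $S_{i-1}$, does not immediately yield $k-1$ distinct splitting partitions in sequence. The fix I would pursue is to choose the classes $A_i$ so that each newly used partition $\cP_i$ is forced to be one that was \emph{not} needed for splitting any later set — for instance, process the construction so that $\cP_i$ is chosen to split $S_{i-1}$ but we track which partitions split which sets and invoke a Hall-type / marriage argument: the bipartite graph between the $k-1$ "stages" (sets $S_{i-1}$) and the partitions in $\bcP$, with an edge when the partition splits that set, satisfies a Hall condition because every $S$ of a given size is split by $\ge |S|-1$ partitions and the sets along our chain are nested of decreasing size. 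Establishing that Hall condition — i.e. that any collection of these nested sets has enough distinct splitters — is the crux, and it should follow from the splitting property applied to the \emph{smallest} set in the collection together with the fact that a partition splitting a small set may or may not split larger ones. Once the system of distinct representatives exists, reading it off in the order of decreasing set size, and choosing within each $\cP_i$ the class (or union of classes) avoiding $k$ as above, produces the desired chain $A_1,\dots,A_{k-1}$ with $\overline{\bigcup_{i\le k-1}A_i}=\{k\}$; unbrokenness of this chain is then immediate since its complement is the singleton $\{k\}$, which is preserved by every partition.
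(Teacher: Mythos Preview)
Your proposal has a genuine gap. The greedy construction runs into the counting problem you acknowledge, and your suggested Hall-type repair does not close it. Two concrete issues:

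\textbf{(1) The $A_i$'s must be classes.} In several places you slide from ``a class $X$'' to ``$X\cap S_{i-1}$'' or ``the union of the other classes restricted to $S_{i-1}$'' as your candidate for $A_i$. Neither of these is in general a class of $\cP_i$, and the definition of a chain requires each $A_i\in\cP_i$. What \emph{is} true is that if $\cP$ splits $S_{i-1}\ni k$, then some single class of $\cP$ meets $S_{i-1}$ and avoids $k$; but that class need not be contained in $S_{i-1}$, and more importantly it need not contain the particular element $a_i:=S_{i-1}\setminus S_i$ you are trying to peel off.

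\textbf{(2) The SDR on nested sets does not yield the conclusion.} Suppose you fix a nested chain $S_0\supset S_1\supset\cdots\supset S_{k-1}=\{k\}$ with $|S_i|=k-i$ and set up the bipartite graph ``stage $i$ $\sim$ $\cP$'' iff $\cP$ splits $S_{i-1}$. Hall's condition does hold here (note, by the way, that splitting a subset always implies splitting a superset, the reverse of what you wrote; so the neighborhood of any $J\subseteq[k-1]$ equals the splitters of the \emph{largest} set $S_{\min(J)-1}$, which has $\ge k-\min(J)\ge |J|$ of them). But even granted the SDR, you only know that $\cP_i$ splits $S_{i-1}$; the class $A_i\in\cP_i$ you can extract meets $S_{i-1}\setminus\{k\}$ somewhere, not necessarily at $a_i$. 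So $\bigcup_i A_i$ need not cover all of $[k-1]$, and $\overline{\bigcup_i A_i}=\{k\}$ does not follow.

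The fix is to put Hall on the right bipartite graph. Take left vertices $i\in[k-1]$ and right vertices $\cP\in\bcP$, with an edge when $\cP$ splits the \emph{two-element} set $\{i,k\}$. For any $I\subseteq[k-1]$, every partition that splits $I\cup\{k\}$ must separate $k$ from some $i\in I$ and hence splits $\{i,k\}$; the splitting property gives at least $|I|$ such partitions, so Hall's condition holds. The resulting matching gives distinct $\cP_1,\dots,\cP_{k-1}$ with $\cP_i$ splitting $\{i,k\}$; take $A_i\in\cP_i$ to be the class containing $i$ (so $k\notin A_i$), and then $\bigcup_i A_i\supseteq[k-1]$ while $k\notin\bigcup_i A_i$, giving $\overline{\bigcup_i A_i}=\{k\}$ as required.

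This corrected Hall argument is in fact considerably shorter than the paper's own proof, which obtains the same labelling $\cP_i\leftrightarrow\{i,k\}$ through an intricate iterative relabelling procedure (phases and rounds, with promotions between sets $E_1,E_2,\ldots$ and repeated swapping of partitions in and out of a pool). So your instinct to reach for a marriage theorem was good; you just aimed it at the wrong bipartite graph.
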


We defer its proof and, instead, first show why this claim would suffice.

\begin{proof} (of Lemma~\ref{lem-unbroken}) 
We will prove this by induction
on the universe size $k$. For the base case, suppose $k=3$ and $\bcP=$ $\{\cP_1,\cP_2,\ldots\}$. 
So we have at least two partitions. If any partition (say $\cP_1$) contains exactly $2$ sets,
it must be a pair and a singleton (say $\cP_1=\{\{1,2\},\{3\}\}$). 
But then $\{1,2\}$ is itself an unbroken chain in $\bcP$. So, all the partitions
can be assumed to consist only of singletons. But then we can take the set, say, $\{1\}$
from $\cP_1$ and, say, $\{2\}$ from $\cP_2$ to get an unbroken chain.

Now for the induction step. Suppose $\bcP$ has at least $(k-1)$ partitions.
We assume that the claim is true for
universes of size upto $(k-1)$. If $\bcP$ has the splitting property,
then we are done by Claim~\ref{clm-singleton}. If not, then for some subset $S \subset\cU$ of
size at least $2$, $S$ is split in at most $(|S|-2)$ partitions. Let the collection of partitions
in $\bcP$ that preserve $S$ be ${\bf\bcP'}$. So ${\bf \bcP'}$ contains
at least $(k-1)-(|S|-2)$ $=(k-|S|+1)$ partitions. Merge the elements
of $S$ into a new element, to get a new universe $\cU'$ of size $(k-|S|+1)$. 
The partitions in ${\bf \bcP'}$ are valid partitions of $\cU'$, and still maintain their structure.
We now have a universe of size $k-|S|+1<k$, and at least $k-|S|+1$ partitions.
By the induction hypothesis, there is an unbroken chain in ${\bf \bcP'}$.
Observe that it is (under the natural correspondence) still an unbroken chain in the original collection
$\bcP$, and we are done.
\end{proof}

\begin{proof} (of Claim~\ref{clm-singleton})
We will label the partitions in $\bcP$ in such a way that its first $(k-1)$ elements, 
$\cP_1,\cdots,\cP_{k-1}$ satisfy : $\cP_{i}$ splits $\{i,k\}$, for all $i\in[k-1]$. Thus, 
there is a set $A_{i} \in \cP_{i}$
that contains $i$ but not $k$. Naturally, $\overline{\bigcup_{i \leq k} A_i} = \{k\}$.

We will construct this labelling through an iterative process.
In the $i$th phase, we will 
find $\cP_i$. At the end of this phase, we will
have $\cP_1,\cdots,\cP_{i}$ with the desired property 
and the remaining {\em pool} $\bcP$ of remaining partitions. 
We warn the reader that this labelling is very dynamic, so
during the $i$th phase, we may change the labels
of $\cP_1,\cdots,\cP_{i-1}$ by moving them to $\bcP$
and labelling new partitions with older labels.
At any stage, we have the \emph{labelled}
partitions and the {\em unlabelled} partitions $\bcP$.
Before the beginning of the first phase, $\bcP$ is just the given collection
of all permutations.

{\bf [Phase $1$]}
The first phase is easy to understand.
By the splitting property,
there is some partition that splits $\{1,k\}$.
We set this to $\cP_1$. 

{\bf [Phase $i$]}
The $i$th phase, $i\geq2$, is a rather involved process. We describe the
various sets associated with it and explain them. By the beginning of this phase,
we have already determined $\cP_1,\cdots,\cP_{i-1}$. The \emph{covered}
elements are just $[i-1]$. We maintain
a partition $E_1,\cdots,E_{i-1}$ of the covered elements. 
We set $E_0 = \{i,k\}$.
Corresponding to each set $E_j$, we have a set
of partitions $\cC_j$ ($:=\{\cP_b|b\in E_j\}$). We fix $\cC_0=\emptyset$.
Note that the $\cC_j$'s form a partition of the labelled partitions.
We set $E_{\leq j} = \bigcup_{0 \leq l \leq j} E_l$.
We get a similar set of partitions $\cC_{\leq j} = \bigcup_{0 \leq l \leq j} \cC_l$.
We will \emph{always} maintain that $\cP_j$ splits
$\{j,k\}$.

There will be various rounds in a phase.
To aid understanding, we will describe the first 
and second round in detail.

{\bf [Phase $i$, Round $1$]}
We now explain the first round.
Initially, we set $E_1 = [i-1]$. 
This is because, as of now, we know nothing
about the elements in the set containing $k$ in
the various $\cP_j$'s. 
Note that \emph{at any stage}, if we have
a partition in $\bcP$ that splits $\{i,k\}$,
we can set this to $\cP_i$ and we are done.

Now, for every element $b$ in $E_1$ (currently, it is $[i-1]$)
check if there is a partition in $\bcP$ that
splits $\{i,k,b\}$. If so, call this a {\em success} for $b$. 
Note that $\{i,k\}$ is not split in any partition $\bcP$.
Now, we can label \emph{this}
partition as $\cP_b$ and move the old one to the pool $\bcP$.
All the labelled partitions still have their desired property.
If the old partition splits $\{i,k\}$, then we
are done (since this is now in $\bcP$). So, we can assume
that (even after this switching) that $\{i,k\}$ is preserved
in all of $\bcP$. 
For our (new) $\cP_b$, we know that $\{i,k\}$
is preserved. So we have some extra information about it.
This is represented by ``promoting" $b$ from $E_1$ to $E_2$.
This just involves removing $b$ from $E_1$ and putting it in $E_2$.
Let us repeat this for all elements in $E_1$ until
we have a maximal set $E_2$, containing all successful
elements. Note that when $\bcP$ changes because of the
switching, we check all elements in $E_1$
again for successes.

We are now at the end of this round and have the following information.
$\{E_1, E_2\}$ is a partition of $[i-1]$. 
For any successful element $b \in E_2$, $\cP_b$ preserves
$\{i,k\}$. So, the labelled partitions $\cC_2$ preserve
$\{i,k\} = E_0$. 
For every failure $b \in E_1$, \emph{every} partition in $\bcP$
preserves $\{i,k,b\}$. In other words,
every partition in $\bcP$ preserves $E_1 \cup \{i,k\} = E_{\leq 1}$. 
Successes create the new $E_2$, while failures increase the size of the
set preserved by $\bcP$. 

Let us understand this a little more.
Suppose \emph{all} elements are eventually successful, so $E_2 = [i-1]$.
Therefore, all labelled partitions preserve $\{i,k\}$. But so
do all partitions in $\bcP$.
So $\{i,k\}$ is preserved in \emph{all} partitions,
contradicting the splitting property. There must
be some failures. 
Suppose everything is a failure, so $E_1$ is still $[i-1]$. 
The set $E_{\leq 1}$ has size $i+1$. But the only partitions that split $E_{\leq 1}$
are the labelled ones since $\bcP$ preserves $E_{\leq 1}$.
There are only $i-1$ labelled partitions so this contradicts
the splitting property. So there are some successes and some 
failures and $E_1, E_2$ form a non-trivial partition. In some
sense, we made ``progress".

{\bf [Phase $i$, Round $2$]}
We move to the next round. For every $b \in E_2$, we check
if $E_{\leq 1} \cup \{b\}$ is split in any partition of $\bcP$. If we get
a success, then we set this partition to be the new $\cP_b$. 
We ``promote" $b$ from $E_2$ to a new set $E_3$.
We need to shift this old partition (call it $\cP$) 
to our pool $\bcP$. But we want to ensure that $E_{\leq 1}$ is preserved in all of $\bcP$,
and this may not happen for $\cP$. So, first we check if $\{i,k\}$
is preserved in $\cP$. If not, we are done. Assume otherwise. We start
checking if $\{i,k,c\}$ is preserved, for all $c \in E_1$.
If it is so for all $c$, then we know that $E_{\leq 1}$ is preserved in
$\cP$. So, we maintain our condition about $\bcP$, and we continue
to the next $b$. If not (this is the interesting part!),
then we have found a partition that separates $c$ from $\{i,k\}$. 
Note that the reason why $c$ belongs to $E_1$, is, because we were unable
(in the previous round) to find such a partition. So, we label
$\cP$ as $\cP_c$. We ``promote" $c$ from $E_1$ to $E_2$.
The old $\cP_c$ is moved to the pool $\bcP$, so we repeat the 
above procedure for this partition as well. So, either we maintain
the invariant that $E_{\leq 1}$ is preserved in all of $\bcP$, or we promote
elements from $E_1$ to $E_2$. If, at some stage, there are no elements
in $E_1$, then we are done. (Why? Because every labelled partition now
preserves $\{i,k\}$, by the splitting property, there must be a partition in $\bcP$ splitting
this.) For all the failures $b \in E_1$, we know that $E_{\leq 1} \cup \{b\}$ is preserved
in $\bcP$. All successes are promoted to $E_2$. So at the end we have
the partition $E_1, E_2, E_3$ of $[i-1]$. All of $\bcP$
preserves $E_{\leq 2}$. The partitions $\cC_3$ preserve $E_{\leq 1}$ and
those in $\cC_2$ preserve $E_{\leq 0}$. If $E_3$ is empty,
then $E_{\leq 2}$ is of size $i+1$.
All of $\bcP$ preserves $E_{\leq 2}$ so the splitting property
is violated. If $E_1$ is empty, then also
we are done.

Let us give a formal proof by describing the invariant at the end
of a round.

\begin{claim} \label{clm-round} By the end of the ($i$th phase-) $j$th round, 
suppose we do not find the right $\cP_i$.
Then we can construct a partition of $[i-1]$, $E_1,\cdots,E_{j+1}$,
where $E_{j+1}$ is non-empty and the following hold:
the partitions $\cC_l$ preserve $E_{\leq l-2}$, for all $2\le l\le(j+1)$, and the unlabelled
partitions $\bcP$ preserve $E_{\leq j}$.
\end{claim}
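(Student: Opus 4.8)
The plan is to prove Claim~\ref{clm-round} by induction on the round number $j$ within the fixed $i$th phase, formalizing the walk-through of Rounds~$1$ and~$2$ given above. Throughout, the standing assumption matches the hypothesis of the claim: no partition currently in the pool $\bcP$ splits $\{i,k\}$ --- the instant one does, we set $\cP_i$ to it and stop, so there is nothing to prove. For the base case $j=1$ I would simply unwind Round~$1$. We start with $E_1=[i-1]$ and $E_0=\{i,k\}$; processing each $b\in E_1$, we ask whether some pool partition splits $\{i,k,b\}$. Because every pool partition preserves $\{i,k\}$, such a partition in fact separates $b$ from $\{i,k\}$; we relabel it $\cP_b$ (its displaced predecessor goes back to the pool, and if that predecessor split $\{i,k\}$ we are already done, so we may keep assuming the pool preserves $\{i,k\}$) and promote $b$ to $E_2$. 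Saturating gives a partition $E_1,E_2$ of $[i-1]$ in which every $\cP_b$ with $b\in E_2$ preserves $E_{\le 0}=\{i,k\}$ (so $\cC_2$ preserves $E_{\le 0}$) and every pool partition preserves $\{i,k,b\}$ for each remaining $b\in E_1$, hence preserves $E_{\le 1}=\{i,k\}\cup E_1$. Finally $E_2\neq\emptyset$: otherwise $|E_{\le 1}|=i+1$, yet only the $i-1$ labelled partitions can split $E_{\le 1}$ (the pool preserves it), contradicting the splitting property, which demands at least $i$ splitters.

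For the inductive step $j\to j+1$, assume the invariant after Round~$j$: $E_1,\dots,E_{j+1}$ partition $[i-1]$ with $E_{j+1}\neq\emptyset$, $\cC_l$ preserves $E_{\le l-2}$ for $2\le l\le j+1$, and the pool preserves $E_{\le j}$. In Round~$j+1$ we scan $b\in E_{j+1}$: if a pool partition splits $E_{\le j}\cup\{b\}$, then since the pool preserves $E_{\le j}$ that partition separates $b$ from $E_{\le j}$ and preserves $E_{\le j}=E_{\le(j+2)-2}$, so we relabel it $\cP_b$, promote $b$ into a new set $E_{j+2}$, and $\cC_{j+2}$ inherits the required preservation. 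The one delicate operation is returning the displaced old $\cP_b$ to the pool without destroying ``the pool preserves $E_{\le j}$''; this is the \emph{cascade}. Given a partition $\cP$ to re-insert, find the least $l\in\{0,\dots,j\}$ for which $\cP$ fails to preserve $E_{\le l}$ (if none, $\cP$ enters the pool directly). If $l=0$ then $\cP$ splits $\{i,k\}$ and we are done; if $l\ge1$ then, since $\cP$ preserves $E_{\le l-1}$ but not $E_{\le l}=E_{\le l-1}\sqcup E_l$, it must separate some $c\in E_l$ from $E_{\le l-1}$, so we relabel $\cP$ as $\cP_c$ (valid, as it preserves $E_{\le l-1}=E_{\le(l+1)-2}$), promote $c$ from $E_l$ to $E_{l+1}$, and rerun the cascade on the newly displaced old $\cP_c$. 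If $E_1$ ever empties, every labelled partition preserves $\{i,k\}$, so the splitting property produces a pool partition splitting $\{i,k\}$ and we are done. The cascade terminates because each non-terminating step strictly raises $\sum_{a\in[i-1]}(\text{level of }a)$, which is bounded (no level exceeds $j+2$ during the round). Moving an element up one level only shrinks the relevant initial segments $E_{\le m}$ and swaps a partition preserving a larger set for one preserving exactly the smaller required set, so all invariants ``$\cC_l$ preserves $E_{\le l-2}$'' survive, using that preservation of a set implies preservation of its subsets. At the end of the round the failing $b\in E_{j+1}$ witness that the pool preserves $E_{\le j}\cup E_{j+1}=E_{\le j+1}$, the relabellings keep $\cC_l$ preserving $E_{\le l-2}$ for all $2\le l\le j+2$, and $E_{j+2}\neq\emptyset$ by the same count ($|E_{\le j+1}|=i+1$, preserved by the pool, only $i-1$ labelled partitions to split it).

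The main obstacle is the cascade. One must verify that every relabelling is legitimate (the re-inserted partition really does separate the promoted element from the appropriate $E_{\le l-1}$), that the whole family of invariants ``$\cC_l$ preserves $E_{\le l-2}$'' is stable under the reshuffling of elements among the $E_l$, that the pool invariant is only transiently violated and is restored before the next $b$ is handled, and that the cascade halts. Once the cascade is pinned down, the two ``escape'' events --- a pool partition splitting $\{i,k\}$, or $E_1$ becoming empty --- are exactly the situations excluded by the claim's hypothesis, and the splitting-property counting closes each round while forcing the freshly created set to be non-empty.
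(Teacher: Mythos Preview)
Your proposal is correct and follows essentially the same approach as the paper: induction on the round number, with the inductive step carried out by processing elements of $E_{j+1}$, swapping pool partitions into labelled positions, and handling the displaced partition via a cascading re-insertion through the levels. Your formalization is in fact slightly tighter than the paper's---you phrase the cascade via the least level $l$ at which the hanging partition fails to preserve $E_{\le l}$, and you give an explicit potential (the sum of element levels) to justify termination, whereas the paper simply says ``there can only be a finite number of promotions''---but the underlying argument is identical.
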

\begin{proof} We prove by induction on $j$. We have already proven this 
for $j=1,2$. Assuming this is true upto $j$, we will show this for $j+1$.
The round repeatedly ``processes" elements of $E_{j+1}$.
Processing $b \in E_{j+1}$ involves checking if all
partitions in $\bcP$ preserve $E_{\leq j} \cup \{b\}$. If they do, then $b$
is a failure. If $\cP \in \bcP$ splits $E_{\leq j} \cup \{b\}$, then
we ``swap" it with $\cP_b$, i.e. $\cP$ is now the old $\cP_b$ and is
denoted a \emph{hanging partition}. The element $b$ is promoted
from $E_{j+1}$ to the new set $E_{j+2}$. How to deal with the hanging partition $\cP$?
We first check if it splits $E_0$. If so, we have found $\cP_i$. Otherwise,
we check if it preserves $E_0 \cup \{c\}$, for all $c \in E_1$.
If it splits $\{i,k,c\}$, then we swap $\cP$ with $\cP_c$. We promote
$c$ from $E_1$ to $E_2$. The old $\cP_c$ becomes the new hanging partition $\cP$.
If $\cP$ preserves $E_{\leq 1}$, then we move on to $E_2$.
In general, if $\cP$ preserves $E_{\leq l}$,
then we check if $\cP$ preserves all $E_{\leq l} \cup \{c\}$, for $c \in E_{l+1}$.
If $\cP$ splits $E_{\leq l} \cup \{c\}$, we swap $\cP$ with $\cP_c$ and
promote $c$ from $E_{l+1}$ to $E_{l+2}$. Note that the sets $E_{\leq p}$ (for any $p$)
can only decrease on such a promotion. So still the partitions
in $\cC_p$ preserve $E_{\leq p-2}$.
The old $\cP_c$ becomes the new hanging partition $\cP$
and we repeat this process. If, on the other hand, $\cP$ preserves all $E_{\leq l} \cup \{c\}$,
then $\cP$ preserves $E_{\leq l+1}$. So we repeat this process
with $E_{l+2}$, and so on. If we end up with $\cP$
preserving $E_{\leq j}$, then
we can safely move $\cP$ into $\bcP$. Otherwise, we have made a promotion and
we deal with a new hanging partition.
Note that when $\bcP$ changes, we again process all elements in $E_{j+1}$.
There can only be a finite number of promotions,
so this round must end. We end up with $E_0,\cdots,E_{j+2}$,
with $\cC_l$ preserving $E_{\leq l-2}$.
All the failures are still in $E_{j+1}$, and $\bcP$
preserves all $E_{\leq j} \cup \{c\}$, $\forall c \in E_{j+1}$.
So $\bcP$ now preserves $E_{\leq j} \cup E_{j+1} = E_{\leq j+1}$. 
Note that if $E_{j+2}$ is empty,
we have a contradiction. This is because $E_{\leq j+1}$
is of size $i+1$ and there are at most $i-1$ partitions splitting it.
\end{proof}

Now we show that in this phase $i\geq2$ there can be at most $i$ rounds before
we get the desired $\cP_i$.

\begin{claim} \label{clm-nonempty} Suppose $E_1,\cdots,E_j$
is a partition of $[i-1]$ such that $C_l$ preserves $E_{\leq l-2}$
and $\bcP$ preserves $E_{\leq j}$. Then all $E_l$'s are non-empty.
\end{claim}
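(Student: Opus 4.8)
The plan is to argue by contradiction against the splitting property of the whole collection of partitions (which, being fixed throughout the labelling process, retains the splitting property assumed in Claim~\ref{clm-singleton}). Suppose some block $E_l = \emptyset$ with $1 \le l \le j$, and put $S := E_{\le l-1} = E_0 \cup E_1 \cup \cdots \cup E_{l-1}$. Since $E_0 = \{i,k\}$, the set $S$ is non-empty, and the point of the proof is to exhibit $S$ as a non-empty proper subset of $[k]$ that is split by too few partitions.

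First I would check $S$ is a \emph{proper} subset of $[k]$: since we are in phase $i \le k-1$, the pool $\bcP$ still holds at least $(k-1)-(i-1) = k-i \ge 1$ partitions, all non-trivial, and by hypothesis $\bcP$ preserves $E_{\le j} \supseteq S$; a non-trivial partition cannot have the whole universe sitting inside one class, so $E_{\le j} \ne [k]$ and hence $S \subsetneq [k]$. Next I would bound the number of partitions that can split $S$, using that preservation passes to subsets: every partition in $\bcP$ preserves $E_{\le j} \supseteq S$, so none of them splits $S$; for $m \ge l+1$ the family $\cC_m$ preserves $E_{\le m-2} \supseteq E_{\le l-1} = S$, so none of those split $S$ either; and $\cC_l = \{\cP_b : b \in E_l\} = \emptyset$. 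Hence the only partitions that can split $S$ lie in $\cC_1 \cup \cdots \cup \cC_{l-1}$, whose total size is $\sum_{m=1}^{l-1}|E_m| = |E_{\le l-1}| - |E_0| = |S| - 2$. The splitting property, however, requires the non-empty proper set $S$ to be split by at least $|S|-1$ partitions, which is the desired contradiction; so all the $E_l$ are non-empty.

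There is no deep obstacle here; the care points are bookkeeping. One must use the invariant in the precise form that $\cC_m$ preserves $E_{\le m-2}$ (not $E_{\le m}$) — this is exactly what makes $\cC_m$ preserve $S$ only once $m \ge l+1$, pushing the count of potential splitters of $S$ down to $|S|-2$ rather than $|S|-1$. One must also track the arithmetic that the pool $\bcP$ is never empty during a phase, so that its non-triviality can be invoked to rule out $S = [k]$. Once Claim~\ref{clm-nonempty} is in place, it is immediate that the partition $E_1,\dots,E_{j+1}$ produced by Claim~\ref{clm-round} consists of $j+1$ non-empty blocks of $[i-1]$, forcing $j+1 \le i-1$ and hence at most $i$ rounds in phase $i$, as needed to finish Claim~\ref{clm-singleton}.
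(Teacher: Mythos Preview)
Your proof is correct and follows essentially the same route as the paper's: assume some $E_l=\emptyset$, set $S=E_{\le l-1}$, observe that every partition outside $\cC_{\le l-1}$ (namely those in $\bcP$ or in $\cC_m$ for $m\ge l+1$, with $\cC_l$ itself empty) preserves $S$, and then count $|\cC_{\le l-1}|=|S|-2$ to contradict the splitting property. The only difference is that you explicitly verify $S\subsetneq[k]$ via the non-emptiness of the pool $\bcP$, a point the paper leaves implicit; this is a harmless (and arguably welcome) bit of extra care rather than a different argument.
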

\begin{proof} Suppose $E_l$ is empty, for some $l\in[j]$. So $C_l$ is also
empty. Any partition that is not in $C_{\leq l-1}$
is either in $\bcP$ or in $C_p$, for some $p \geqq l+1$ (if it exists). All these 
partitions preserve $E_{\leq l-1}$. Thus, the only partitions splitting $E_{\leq l-1}$
are those in $C_{\leq l-1}$.
Since $|C_{\leq l-1}| = |E_{\leq l-1}| - 2$, we contradict the splitting property.
\end{proof}

The sets $E_1,\cdots,E_j$ form a partition of $[i-1]$. The above
claim tells us that we can run at most $i-1$ rounds to completion.
Hence, if we do not find $\cP_i$ by $i-1$ rounds, then, by Claim~\ref{clm-round},
we will find it in the $i$th round. This completes the proof.
\end{proof}

\subsection{The general case}\label{sec-gen-case}

Now, we deal with simple, minimal identities and remove
the strong minimality condition. This will come at a cost
of an extra $k$ factor in the rank bound. First, we recall the definition of gcd and simple
parts of a general $\sps$ circuit, as given in older works \cite{DS06,SS09}.

\begin{definition}[Gcd \& Simple part]
Let $C=\sum_{i\in[k]}T_i$ be a $\sps(k,d)$ circuit over a field $\FF$. The {\em gcd} of $C$ is
defined to be the usual gcd of the polynomials $T_i$-s, i.e. $\gcd(C):=\gcd(T_i|i\in[k])$.

The {\em simple} part of $C$ is the $\sps(k,d')$  circuit, $\simp(C):=C/\gcd(C)$, where
$d':=d-\degr(\gcd(C))$. 
\end{definition}

The following will be shown to be a consequence of Theorem~\ref{thm-strong-rank}.

\smallskip\noindent
{\bf Theorem \ref{thm-rank}.} (Final bound). Let $|\FF|>d$. The rank of a simple, minimal
$\sps(k,d)$, independent-fanin $k'$, identity is at most $2k^2 + (k-k')\cdot\sg_{k'}(\FF,d)$.
\begin{proof} 
Let circuit $C$ be $T_1+\cdots+T_k = 0$. Wlog let $T_1,\cdots,T_{k'}$ be a 
linear basis for $T_1,\cdots,T_k$. Obviously, we have $1< k' <k$ (first by simplicity and second 
by zeroness). By Theorem \ref{thm-nucleus}, there exists a nucleus $K$ wrt the set $\cI:=[k']$. The rank 
of $K$ is at most $2k^2$. 
So, it remains to bound the non-nucleus rank of $C$ by $(k-k')\cdot\sg_{k'}(\FF,d)$.

As $T_1,\cdots,T_{k'}$ form a basis, for each $i\in[k'+1,k]$, there exists $\alpha_{i,j}$-s in
$\FF$ such that we have a zero circuit
$D_i := \sum_{j\in[k']}\alpha_{i,j}T_j + T_i = 0$. Define $N_i$ to be the set of $j$-s for which
$\alpha_{i,j}\ne0$. Thus,
\begin{equation}\label{eqn-thm-rank}
\forall i\in[k'+1,k],\ D_i = \sum_{j\in N_i}\alpha_{i,j}T_j + T_i = 0
\end{equation}
Since $\{\alpha_{i,j}T_j\ |\ j\in N_i\}$ are $|N_i|$ linearly independent terms, we get
that $D_i$ is a strongly minimal $\sps(|N_i|+1,d)$ identity, for all $i\in[k'+1,k]$. 
By nucleus properties, $\{K_j|j\in N_i\}$ are linearly independent polynomials, implying that 
the polynomials $\{K_j/g_i|j\in N_i\}$ are also linearly independent,   
where $g_i:=M(L_K(\gcd(D_i)))$. Thus, the linear space $K$ remains a nucleus of the new identity 
$\simp(D_i)$, showing at the same time that it is strongly minimal. We conclude that $\simp(D_i)$
is a simple, strongly minimal $\sps(k_i,d_i)$ identity with nucleus $K$ (although of $\rk<2k^2$), 
$k_i\le(k'+1), d_i\le d$,
for all $i\in[k'+1,k]$. Theorem \ref{thm-strong-rank} bounds the non-nucleus (non-$K$ to be precise) 
rank of each of these identities by $\sg_{k'}(\FF,d)$.

Suppose a linear form $\ell|\gcd(D_i)$ for all $i\in[k'+1,k]$. Then $\ell$ divides $T_j$
for all $j\in\bigcup_{i\in[k'+1,k]}N_i$ $\cup[k'+1,k]$. Consider the case 
$\bigcup_{i\in[k'+1,k]}N_i=[k']$, it means that $\ell$ divides every term in $C$, 
contradicting simplicity. Thus, in that case every linear form $\ell$ of $C$ appears in at least 
one of the circuits $\{\simp(D_i)|i\in[k'+1,k]\}$, whose total non-nucleus rank we have already bounded
by $(k-k')\cdot\sg_{k'}(\FF,d)$, so we will be done.

The case, left to handle, is when : $S:=\bigcup_{i\in[k'+1,k]}N_i\subsetneq[k']$. This means,
by summing over $i$ in Equation (\ref{eqn-thm-rank}), $\sum_{i\in[k'+1,k]}T_i=$ 
$\sum_{s\in S}\beta_sT_s$, for some $\beta$-s in $\FF$.
Substituting this in the equation $C=0$ we get, 
$$C=C_{[k']}+C_{[k'+1,k]}\ =\ \sum_{i\in[k']}T_i+\sum_{s\in S}\beta_sT_s\ =\ 0.$$
As $S$ is a proper subset of $[k']$, the above equation could only mean that a nontrivial 
combination of $T_i$ $(i\in[k'])$ is vanishing, contradicting the linear independence of those
polynomials. Thus, $S=[k']$. This completes the proof.
\end{proof}

\section{Sylvester-Gallai Rank Bounds for any $\FF$} \label{sec-sg}

We wish to bound $\sgkb{\FF,m}$, for \emph{any} field $\FF$. We will prove the
following theorem, which can be seen as the first attempt ever to give a Sylvester-Gallai Theorem for 
{\em all} fields. It is convenient to think of a set of vectors $S$ in $\FF^n$ as 
\emph{multiple-free}: this means that no two vectors
in $S$ are scalar multiples of each other.

\begin{theorem}[High dimension Sylvester-Gallai for any field]\label{thm-sgkbd}
Suppose $k \in {\NN}^{>1}$ and $S$ is an $SG_k$-closed set of vectors in $\FF^n$
of rank $r \ge 9k$. Then, $|S| \ge 2^{r/9k}$. In other words,  for every $m\in\NN^{>1}$,
$SG_k(\FF,m)\leq 9k\lg m$.
\end{theorem}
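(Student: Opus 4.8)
The plan is to prove the contrapositive form stated last: it suffices to show that any $\sg_k$-closed, multiple-free $S\subseteq\FF^n$ of rank $r$ has $|S|\ge 2^{r/9k}$ whenever $r\ge 9k$ (for smaller $r$ the trivial bound $|S|\ge\rk(S)=r$ already forces $r\le 9k\lg|S|$ since $\lg|S|\ge 1$). I would argue by induction on $k$. The base case $k=2$ is essentially free: below each one-dimensional subspace $\FF u$ with $u\in S$ there sits exactly one vector of $S$ by multiple-freeness, so appending one basis vector of $S$ at a time and invoking $\sg_2$-closure doubles the count at every step, giving $|S|\ge 2^{r-1}$.

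The engine for general $k$ is a \emph{doubling lemma}. Suppose $W\subseteq W'$ are subspaces, both spanned by vectors of $S$, with $W'=W\oplus U$ where $U=\lrsp(u_1,\ldots,u_{k-1})$ for some $u_1,\ldots,u_{k-1}\in S$ that are independent modulo $W$, and $U$ is \emph{tight}: $S\cap U=\{u_1,\ldots,u_{k-1}\}$. I claim $|S\cap W'|\ge 2|S\cap W|$. For each $p\in S\cap W$ the vectors $u_1,\ldots,u_{k-1},p$ are $k$ independent vectors of $S$, so $\sg_k$-closure puts a further vector $q\in S$ in their $k$-dimensional span. Writing $q=\sum_i c_i u_i+cp$, tightness of $U$ forces $c\ne 0$ (else $q\in S\cap U$), whence the $U$-component $\sum_i c_i u_i$ of $q$ is nonzero and $q\in(S\cap W')\setminus W$. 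Moreover $p\mapsto q$ is injective: if $p,p'$ yield the same $q$ then $cp-c'p'\in U\cap W=\{0\}$, so $p,p'$ are scalar multiples and hence equal. This exhibits an injection $S\cap W\hookrightarrow(S\cap W')\setminus W$, proving the lemma.

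I would then build a flag $W_0\subsetneq W_1\subsetneq\cdots$ inside $\lrsp(S)$, starting — if $S$ is not already $\sg_{k-1}$-closed — from a tight $(k-1)$-dimensional $W_0$ spanned by vectors of $S$ (such a $W_0$ is exactly a witness that $S$ fails $\sg_{k-1}$-closure), and at each step appending a tight $(k-1)$-dimensional $U_i$ of $S$-vectors independent modulo $W_i$. Each step raises $\rk(W_i)$ by $k-1$ and, by the doubling lemma, at least doubles $|S\cap W_i|$; if the flag runs until its rank is close to $r$ we obtain $|S|\ge 2^{r/(k-1)-O(1)}$, comfortably beyond the target. The one thing that can go wrong is that the flag \emph{gets stuck}: at some $W_i$, every $(k-1)$ vectors of $S$ independent modulo $W_i$ span a subspace containing $\ge k$ vectors of $S$ (it is ``over-packed''). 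But in precisely that case the image $\bar S$ of $S$ in $\FF^n/W_i$, made multiple-free, is $\sg_{k-1}$-closed of rank $r-\rk(W_i)$: lifting $k-1$ independent $\bar v_j\in\bar S$ to $v_j\in S$ independent modulo $W_i$, the over-packing supplies an extra $q\in S$ in $\lrsp(v_1,\ldots,v_{k-1})$, and since $\lrsp(v_1,\ldots,v_{k-1})\cap W_i=\{0\}$ together with multiple-freeness, the image $\bar q$ is a genuinely new point of $\bar S$ in $\lrsp(\bar v_1,\ldots,\bar v_{k-1})$. The inductive hypothesis then gives $|S|\ge|\bar S|\ge 2^{(r-\rk(W_i))/9(k-1)}$; combining this with the bound $|S|\ge 2^{i}$ coming from the $i$ doublings already performed (and, in the case that $S$ was $\sg_{k-1}$-closed to begin with, a direct appeal to the $(k-1)$-case) yields $|S|\ge 2^{r/9k}$ after optimizing over the possible value of $\rk(W_i)$.

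The step I expect to be the real obstacle is exactly this stuck case and its bookkeeping: one has to extract enough simultaneously from the doublings already carried out \emph{and} from the $\sg_{k-1}$-closed quotient, and the constant $9$ — rather than something close to $1$, as in a clean run of the flag — is the slack needed to make this two-sided estimate close for every $k$. A secondary nuisance is that the inductive hypothesis may only be invoked when the residual rank $r-\rk(W_i)$ is at least $9(k-1)$, with the leftover small-rank situations absorbed into the trivial bound $|S|\ge\rk(S)$.
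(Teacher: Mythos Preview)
Your doubling lemma is correct and is essentially the paper's ``image'' argument. Your stuck-case observation --- that if every $(k-1)$-tuple of $S$-vectors independent modulo $W_i$ spans a subspace meeting $S$ in $\ge k$ points, then the image of $S$ in $\FF^n/W_i$ is $\sg_{k-1}$-closed --- is also correct.

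The gap is in the final bookkeeping: the inductive combination does \emph{not} close with constant $9k$. Suppose the flag runs $i$ steps and then gets stuck, so $\rk(W_i)=(i{+}1)(k{-}1)$. You have $\lg|S|\ge i$ from the doublings, and $\lg|S|\ge\bigl(r-(i{+}1)(k{-}1)\bigr)/9(k{-}1)$ from the inductive hypothesis applied to the quotient. The adversary chooses $i$, so you only get the minimum over $i$ of the maximum of these two quantities, which occurs where they balance: $i^\ast\approx r/\bigl(9(k{-}1)+(k{-}1)\bigr)=r/\bigl(10(k{-}1)\bigr)$. For this to exceed $r/(9k)$ you would need $9k\ge 10(k{-}1)$, i.e.\ $k\le 10$; for $k\ge 11$ the inductive step fails. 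Solving the recursion honestly gives $c_k\approx c_{k-1}+(k{-}1)$, hence $c_k=\Theta(k^2)$: your argument proves $\sg_k(\FF,m)=O(k^2\log m)$, but not $9k\log m$.

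The paper avoids this loss by \emph{not} inducting on $k$. Instead it shows that, under the contradiction hypothesis $|S|<2^{r/9k}$, the flag \emph{never} gets stuck while its rank stays below $r/9$. The mechanism is a separate lemma: fix a basis of $S$ in $S$ itself; if \emph{every} coordinate subspace $S_I=S\cap\lrsp(e_i:i\in I)$ with $|I|\ge r'/4$ were $\sg_{k-1}$-closed, then already $|S_{[r']}|\ge 2^{r'/8k}$ (proved by a coordinate-support lemma --- some vector of $S$ has $\ge r/(k{-}1)$ nonzero coordinates in any $S$-basis --- plus a probabilistic averaging over random $I$). Since $|S|<2^{r/9k}\le 2^{r'/8k}$, some $S_I$ is not $\sg_{k-1}$-closed, and that $I$ supplies a tight $(k{-}1)$-tuple orthogonal to the current $T$. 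The flag is thus guaranteed to run all the way to rank $r/9$, giving $\ge r/(9k)$ doublings with no recursive loss.
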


\noindent {\bf Remark.} This bound is not tight. Over $\FF_p$ , the best construction
we can come up with is an $SG_k$-closed set 
with rank around $k+\log_p m$. Consider $\FF_p^{k-1+r}$, for some $r > 0$ and $p\nmid(k-1)$.
Take the set $S_1\subset\FF_p^{k-1+r}$ of vectors $e_1,e_2,\cdots,e_{k-1}$,
$\frac{1}{k-1}\cdot\sum_{i\in[k-1]} e_i$,
where $e_i$ is just the unit vector in the $i$th direction. Then take
the set $S_2\subset\FF_p^{k-1+r}$ of all (non-zero) vectors which have zeroes in the first
$(k-1)$ coordinates. Extend this to $\FF_p^{k+r}$ by putting a $1$
in the new coordinate. This gives a set of non-multiple
distinct vectors in $\FF_p^{k+r}$.
Observe that $S_1$ is $SG_{k-1}$-closed 
and $S_2$ is $SG_2$-closed. Hence $S_1 \cup S_2$ is 
$SG_k$-closed. The size of $S$ is $(k + p^r-1)$ and the rank
is just $k+r$.\\


In some sense, bounds for $\sg_2(\FF,m)$ are already implicit in known theorems
(used to prove lower bounds for LDCs). Concretely, Corollary 2.9 of~\cite{DS06}
can be interpreted as a proof that $\sg_2(\FF,m) = O(\log m)$. This is an extension
of theorems in~\cite{GKST02} that prove this for $\FF_2$. In the context
of $\sg_2$, these proofs
can be interpreted as a ``doubling trick". In essence, each time we
want to increase the rank of an $\sg_2$-closed set by $1$, we are forced
to double the number of vectors. A naive attempt to implement this
for $\sg_k$ does not work. 

Roughly speaking, we want to argue that if we want to increase the rank
of an $\sg_k$-closed set by $(k-1)$, then the size of the set must double.
But, when $k\geq3$, this is not true! It is possible to increase the rank by $(k-1)$
by adding a very small number of vectors. So we have a sort of two-pronged approach.
If the size does not increase much, even though the rank increases,
then we show that the set has some very special $\sg$ properties. Namely, {\em many} small
subspaces of the set are $\sg_{k-1}$-closed. Even though these subspaces
can intersect very heavily, we are still able to argue that the set
must now be very large. 

We will require two auxiliary claims. The first claim is probably of independent interest,
but the second is tied to our current approach.

\begin{claim} \label{clm-coord} Suppose $k \in {\NN}^{>1}$ and $S$ is $SG_k$-closed.
Let vectors $e_1,e_2,\cdots,e_r$ be elements of $S$ that form a basis for $S$. Hence, every element in $S$
is represented by an $r$-tuple of coordinates in $\FF$. There exists some element
in $S$ whose representation has at least $r/(k-1)$ non-zero coordinates.
\end{claim}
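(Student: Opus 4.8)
Suppose $k \in \NN^{>1}$ and $S$ is $SG_k$-closed. Let $e_1,\ldots,e_r \in S$ be a basis of $S$. Then some element of $S$ has at least $r/(k-1)$ nonzero coordinates in this basis.

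\emph{Proof strategy.} The plan is to induct on $k$; the heart of the argument is a reduction, using a vector $v\in S$ of maximum support, to the sub-configuration of $S$ living on the coordinates outside $\mathrm{supp}(v)$. So fix $v\in S$ whose coordinate vector with respect to the basis $e_1,\ldots,e_r$ has the most nonzero entries; write $I:=\mathrm{supp}(v)$ and $s:=|I|$, and note $s\ge 1$ since each $e_i$ lies in $S$. Suppose, for contradiction, that $s<r/(k-1)$. As $s\ge 1$ and $k\ge 2$ this forces $r>s(k-1)\ge s+(k-2)$, hence $r-s\ge k-1$; so the set $T:=[r]\setminus I$ has $|T|=r-s\ge k-1$ and in particular contains all the basis vectors $e_j$, $j\in T$. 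Let $S_T:=S\cap\lrsp(e_j\mid j\in T)$ be the set of vectors of $S$ whose support lies inside $T$.

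The crucial step is to read off a Sylvester-Gallai property of $S_T$ from that of $S$. Take any $k-1$ linearly independent vectors $u_1,\ldots,u_{k-1}\in S_T$; since their supports lie in $T$, which is disjoint from $I$, the $k$ vectors $v,u_1,\ldots,u_{k-1}$ are linearly independent, so $\sg_k$-closedness of $S$ forces at least $k+1$ vectors of $S$ into $\lrsp(v,u_1,\ldots,u_{k-1})$. I would first show that the only one of them with a nonzero $v$-coefficient is $v$ itself: if $x=\alpha v+\sum_i\gamma_i u_i\in S$ with $\alpha\ne 0$, then $x$ agrees with $\alpha v$ on the coordinates in $I$, so $I\subseteq\mathrm{supp}(x)$; maximality of $s$ then forces $\mathrm{supp}(x)=I$, and since $\sum_i\gamma_i u_i$ is supported on the disjoint set $T$ this forces every $\gamma_i=0$, whence $x$ is a scalar multiple of $v$ and so $x=v$ by multiple-freeness of $S$. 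Consequently at least $k$ of the $\ge k+1$ guaranteed vectors have zero $v$-coefficient, hence lie in $\lrsp(u_1,\ldots,u_{k-1})\subseteq\lrsp(e_j\mid j\in T)$, i.e.\ in $S_T$. When $k=2$ this is already absurd --- it would produce two distinct scalar multiples of $u_1$ inside $S$ --- which settles the base case; for $k\ge 3$ it says exactly that $S_T$ is $\sg_{k-1}$-closed.

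To conclude (now $k\ge 3$), note that $S_T$ contains $\{e_j\mid j\in T\}$, a basis of $\lrsp(S_T)$ lying inside $S_T$, so $\rk(S_T)=|T|=r-s$. Applying the claim inductively to $S_T$ with this basis, I obtain a vector $u\in S_T$ whose number of nonzero coordinates is at least $\rk(S_T)/\big((k-1)-1\big)=(r-s)/(k-2)$. But $u\in S$, so this number is at most $s$; hence $s(k-2)\ge r-s$, that is $s(k-1)\ge r$, contradicting $s<r/(k-1)$. Therefore the maximum support satisfies $s\ge r/(k-1)$, which is the claim.

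The step I expect to be the main obstacle is establishing that $S_T$ is $\sg_{k-1}$-closed: one has to be certain that none of the ``extra'' vectors forced into $\lrsp(v,u_1,\ldots,u_{k-1})$ by $\sg_k$-closedness can secretly involve $v$, and this is exactly where both the maximality of $s$ and the multiple-freeness of $S$ are used. The accompanying support bookkeeping --- disjointness of $I$ and $T$, and the implication that $\mathrm{supp}(x)=I$ makes $x$ a multiple of $v$ --- is routine but should be written out carefully; everything else is a short induction on $k$.
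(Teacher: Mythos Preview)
Your proof is correct, but it takes a different route from the paper's. The paper argues directly, without induction: it greedily selects $v_1,v_2,\ldots$ from $S$ with pairwise disjoint supports (each $v_j$ maximizing $|N(v_j)|$ subject to $N(v_j)\cap\bigcup_{l<j}N(v_l)=\emptyset$), continuing until the supports cover $[r]$. If $j\ge k$ such vectors were needed, then $v_1,\ldots,v_k$ are independent, and the extra $(k{+}1)$th vector of $S$ in their span would have support strictly containing $N(v_{i_1})$ (for the least index $i_1$ appearing with nonzero coefficient) yet disjoint from all earlier $N(v_l)$ --- contradicting the greedy choice of $v_{i_1}$. Hence $j\le k-1$, and the largest piece of the resulting partition of $[r]$ has size at least $r/(k-1)$.

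Your argument is essentially the inductive unwinding of this: you pick only the first greedy vector $v=v_1$, and instead of continuing the greedy process you observe that the restriction $S_T$ to the complementary coordinates is $SG_{k-1}$-closed and recurse. The core step --- that any vector in $\lrsp(v,u_1,\ldots,u_{k-1})\cap S$ with nonzero $v$-component must equal $v$ by maximality of $|\mathrm{supp}(v)|$ and multiple-freeness --- is the same in both proofs. What your approach buys is the intermediate structural fact that $S_T$ is $SG_{k-1}$-closed, which is independently natural (and indeed Claim~\ref{clm-subsets} later exploits a hypothesis of this flavor); what the paper's approach buys is a single-shot argument with no induction and only one appeal to $SG_k$-closure.
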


\begin{proof} Consider any vector $v \in S$. Let $v = \sum_{i=1}^r \alpha_i e_i$, for $\alpha_i \in \FF$.
We denote by $N(v)$ the index set $\{i | \alpha_i \neq 0\}$. In other words,
$N(v)$ is the set of indices for which the corresponding coordinates of $v$
are non-zero. Let $v_1 \in S$ be some vector that maximizes $|N(v_1)|$. Choose some $v_2 \in S$
such that $N(v_2) \cap N(v_1) = \emptyset$ and $|N(v_2)|$ is maximized. Such a $v_2$ exists as long as
$|N(v_1)|<r$, as in this case we can get $v_2$ from $\lrsp(e_i|i\notin N(v_1))$. 
Iteratively, choose $v_j \in S$ such that $N(v_j)$ is disjoint to $\bigcup_{l < j} N(v_l)$
and $|N(v_j)|$ is maximized. As long as $\bigcup_{l \leq j} N(v_l) \neq [r]$,
we can always choose the next $v_{j+1}$, again, from $\lrsp(e_i|i\notin\bigcup_{l \leq j} N(v_l))$. 
We keep choosing $v_j$'s until we cover all
coordinates. At the end, $\bigcup_{l \leq j} N(v_j) = [r]$. Note that the
sets $N(v_1),\cdots,N(v_j)$ form a partition of $[r]$.

Suppose $j \geq k$. Then take the vectors $v_1,\cdots,v_k$. They are certainly
linearly independent, since they are defined on a disjoint set of coordinates. By the $SG_k$-closure
of $S$, some non-trivial linear combination of these vectors exists in $S$. Suppose some
non-zero combination of $v_{i_1},v_{i_2},\cdots$, denoted by $v$, is in $S$ (where $i_1 < i_2 \cdots$).
Note that $N(v) \supset N(v_{i_1})$ and $N(v)$ is disjoint to $\bigcup_{l < i_1} N(v_l)$.
That contradicts that choice of $v_{i_1}$. Hence $j < k$. Because the sets
$N(v_l)$ form a partition of $[r]$, $|N(v_1)|\geq r/(k-1)$. That completes the proof.
\end{proof}

\begin{claim} \label{clm-subsets} Suppose $k \in {\NN}^{>2}$. 
Consider a set of linearly independent vectors
$e_1,\cdots,e_{r'}$ in $S$. Let $I \subseteq [r']$ and $|I|\geq r'/4$. Let the set $E_I$ be $\{e_i | i \in I\}$ 
and $S_I$ be the set $S \cap \lrsp(E_I)$.
If for all such $I$, $S_I$ is $SG_{k-1}$-closed, then $|S_{[r']}| \geq 2^{r'/8k}$.
\end{claim}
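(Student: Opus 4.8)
\textbf{Proof plan (Claim~\ref{clm-subsets}).}
The plan is to run a \emph{doubling argument}, exactly in the spirit of the $\sg_2$ ``doubling trick'' recalled above, but performed $\Theta(k)$ coordinates at a time. The heart of it is the following \emph{doubling step}: whenever $J\subseteq[r']$ satisfies $r'/4\le|J|\le r'-2(k-1)$, there is a set $J'$ with $J\subsetneq J'\subseteq[r']$, $|J'\setminus J|\le 2(k-1)$ and $|S_{J'}|\ge 2|S_J|$. Granting this, I would start from an arbitrary $J_0\subseteq[r']$ with $|J_0|=\lceil r'/4\rceil$ (so $S_{J_0}\supseteq E_{J_0}\ne\emptyset$, and $S_{J_0}$ is $\sg_{k-1}$-closed by the hypothesis of the claim) and apply the doubling step repeatedly. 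Since each application enlarges $J$ by at most $2(k-1)$ coordinates while keeping $|J|\ge r'/4$, it can be carried out at least $\big\lfloor(r'-\lceil r'/4\rceil)/2(k-1)\big\rfloor\ge r'/8k$ times before $|J|$ would exceed $r'$ (using $3/(8(k-1))\ge 1/(8k)$; for $r'$ below an absolute constant times $k$ the bound $2^{r'/8k}$ is less than $2$ and follows from $S_{[r']}\supseteq\{e_1,\dots,e_{r'}\}$). As $S_{J'}\subseteq S_{[r']}$ at every stage, this yields $|S_{[r']}|\ge 2^{r'/8k}$.

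For the doubling step itself I would first prove a ``clean'' version and then patch a degeneracy. Choose $A\subseteq[r']\setminus J$ with $|A|=2(k-1)$ and put $J'=J\cup A$; since $|J'|\ge r'/4$, the set $S_{J'}=S\cap\lrsp(E_{J'})$ is $\sg_{k-1}$-closed. Fix a $(k-2)$-subset $B\subseteq A$. For every $v\in S_J$ (note $v\ne 0$), the vectors $\{v\}\cup\{e_b:b\in B\}$ are $k-1$ linearly independent members of $S_{J'}$, so by $\sg_{k-1}$-closure $\lrsp(v,E_B)$ contains at least $k$ vectors of $S$, hence a vector $w\in S\cap\lrsp(v,E_B)$ lying outside $\{v\}\cup\{e_b:b\in B\}$. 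Write $w=\alpha v+p$ with $p\in\lrsp(E_B)$: if $\alpha\ne 0$ then, since $w$ is not a scalar multiple of $v$ (multiple-freeness), $p\ne 0$, so $w\in\lrsp(E_{J'})\setminus\lrsp(E_J)$, i.e.\ $w\in S_{J'}\setminus S_J$; and projecting $w$ onto $\lrsp(E_J)$ along $\lrsp(E_A)$ gives $\alpha v\ne 0$, whose unique proportional representative in the multiple-free set $S_J$ is $v$, so $v\mapsto w$ is injective. Thus if, for some single choice of $B$, one has $\alpha\ne 0$ for \emph{all} $v\in S_J$, then $|S_{J'}|\ge|S_J|+|S_J|=2|S_J|$ and the step is done.

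The remaining case --- and the main obstacle --- is the \emph{degeneracy} $\alpha=0$: then the only vector of $S$ in $\lrsp(v,E_B)$ off $\lrsp(E_B)$ is $v$ itself, and since $S\cap\lrsp(v,E_B)$ is $\sg_{k-1}$-closed of rank $k-1$ it must contain at least $k-1$ vectors of $S$ inside $\lrsp(E_B)$; in particular $\lrsp(E_B)$ then carries a vector of $S$ beyond $\{e_b:b\in B\}$. The plan to kill this is two-pronged. (i) If, for the chosen $A$, some $(k-2)$-subset $B$ satisfies $S\cap\lrsp(E_B)=\{e_b:b\in B\}$, then no $v$ is degenerate for that $B$ and we are done; and one should be able to choose $A$ so that such a $B$ exists, because otherwise (ii) every $(k-2)$-subset of $A$ spans a subspace carrying an extra vector of $S$, which --- by a counting argument over the $\binom{2(k-1)}{k-2}$ subsets --- already forces $S_{J'}\supseteq S\cap\lrsp(E_A)$ to be large, and moreover lets us re-run the argument of the previous paragraph with these extra vectors playing the role of the $e_b$'s, again attaching an injective family of fresh vectors to $S_J$. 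Making this exchange airtight --- i.e.\ pinning down, for each $v$, a $(k-1)$-element subset of $S_{J'}$ whose span is guaranteed to contain a vector of $S$ that genuinely mixes $v$ with the newly added coordinates --- is the one delicate point; the rest is routine projection bookkeeping together with the multiple-free hypothesis.
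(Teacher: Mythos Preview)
Your doubling step has a real gap at the degeneracy (ii), and it is not just a matter of bookkeeping. When you apply $\sg_{k-1}$-closure of $S_{J'}$ to $\{v\}\cup E_B$, the extra vector $w$ may sit entirely inside $\lrsp(E_B)\subseteq\lrsp(E_A)$, and then it carries no information about $v$ whatsoever. Your proposed remedies do not close this:
\begin{itemize}
\item The counting argument in (ii) is far too weak. A single vector $u\in S$ with support of size $2$ inside $A$ already makes \emph{every} $(k-2)$-subset $B\supseteq\mathrm{supp}(u)$ dirty, and one checks that $\binom{2k-2}{k-2}/\binom{2k-4}{k-4}=\frac{(2k-2)(2k-3)}{(k-2)(k-3)}\approx 4$, so a \emph{constant} number of such vectors suffices to dirty every $(k-2)$-subset of $A$. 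Thus $|S\cap\lrsp(E_A)|$ can be $O(k)$ while (ii) holds, which is nowhere near the $|S_J|$ you need for doubling at late stages.
\item The ``re-run'' idea---replacing the $e_b$'s by these extra vectors---hits the identical obstruction: the new $(k-1)$-tuple still lies in $\lrsp(E_A)$, and $\sg_{k-1}$-closure can again be witnessed by yet another vector inside $\lrsp(E_A)$, with no termination in sight. Nothing forces the extra vector to mix with $v$.
\end{itemize}
In short, the adversary can arrange all the $\sg_{k-1}$-witnesses for $\{v\}\cup E_B$ to live in the fixed $(2k-2)$-dimensional space $\lrsp(E_A)$, decoupled from $S_J$, and your step produces no new vectors attached to the $v$'s.

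The paper's proof avoids this entirely by a different mechanism. It first invokes Claim~\ref{clm-coord}: since each $S_I$ (for $|I|\ge r'/4$) is $\sg_{k-1}$-closed, it contains a vector $v_I$ whose support in $E_I$ has size at least $|I|/(k-2)\ge r'/4(k-2)$. It then runs a probabilistic argument: choose $q=2^{r'/8k}$ random subsets $I_1,\dots,I_q\subseteq[r']$ (each coordinate included with probability $1/2$), and show via Chernoff plus the support bound that with positive probability every pair $(I_a,I_b)$ is ``good'' (namely $|I_a|\ge r'/4$ and $v_{I_a}\notin S_{I_b}$), forcing the $v_{I_a}$'s to be pairwise distinct. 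The wide-support property of $v_I$ is exactly what your approach lacks: a vector with $\Omega(r'/k)$ nonzero coordinates cannot be absorbed into a random half of the coordinates except with probability $2^{-\Omega(r'/k)}$, and this is what drives the count. Your argument never produces or uses such wide vectors, which is why the degeneracy bites.
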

\begin{proof} 
By Claim~\ref{clm-coord}, for every such $I$, 
there is a $v_I \in S$
such that $v_I$ has at least $|I|/(k-2)$ non-zero coordinates (wrt basis $E_I$). Fix these $v_I$, for
each such $I$. As before, we denote the set of indices corresponding to non-zero coordinates of $v$ 
by $N(v)$. We describe a random
process to generate a subset of $[r']$. Simply pick each element in $[r']$
independently with probability $1/2$. Let $I$ and $J$ be two sets 
generated independently this way. A ``good" event occurs if $|I|\geq r'/4$ and
$v_I \notin S_J$. We will call this the {\em good event for $(I,J)$}.
How can a good event not happen? Either 
$|I|< r'/4$, or, if $|I|\geq r'/4$, then $v_I \in S_J$.
What is the probability of a ``bad" event (call this $\cE(I,J)$) happening?
This is:
$$ \pr[|I|< r'/4]\ +\ \pr[|I|\geq r'/4]\cdot\pr[v_I \in S_J \ \big| \ |I|\geq r'/4]  $$
Since $\EX[|I|] = r'/2$, the probability that $|I| < r'/4$ is at most
$e^{-r'/8}$ (by a Chernoff bound, refer to notes \cite{oDCher} for the exact form used). 
Obviously, $\pr[|I|\geq r'/4] < 1$.
Now assume that $|I|\geq r'/4$. So $|N(v_I)| \geq r'/4(k-2)$. For $v_I$ to
be in $S_J$, $J$ must contain $N(v_I)$. By the random construction of $J$,
the probability of this is at most $2^{-r'/4(k-2)}$. Thus, the probability
of a bad event is at most $e^{-r'/8} + 2^{-r'/4(k-2)} \leq 2\cdot2^{-r'/4(k-1)}$.

Now, let us choose $q = 2^{r'/8k}$ subsets of $[r']$ independently
through this random distribution. Call these $I_1,I_2,\cdots,I_q$.
For indices $1\le a< b \leq q$, let $X_{ab}$ be the indicator random variable 
for the event $\cE(I_a,I_b)$. What is the expected total number
of bad events?
$$ \EX\left[\sum_{1\le a< b \leq q} X_{ab}\right] = \sum_{1\le a< b \leq q} \EX[X_{ab}]
< \frac{q^2}{2} \pr[\cE(I_1,I_2)] \leq 2^{r'/4k} \cdot 2^{-r'/4(k-1)} < 1 $$
By the probabilistic method, there exist subsets of $[r']$,
$I_1,I_2,\cdots,I_q$ such that no event $\cE(I_a,I_b)$ happens.
This means all sets $I_a$ are of size at least $r'/4$.
Furthermore, for $1\le a< b \leq q$, $v_{I_a} \notin S_{I_b}$. This
means that $v_{I_a} \neq v_{I_b}$. Therefore, there must
be at least $q$ distinct vectors in $S_{[r']}$.
\end{proof}

\begin{proof} (of Theorem~\ref{thm-sgkbd}) It will be convenient to assume that $|S| < 2^{r/9k}$,
and arrive at a contradiction. Let $T \subset S$ be a subset of rank $t\leq r/9$. 
We can construct a basis of $S$, $e_1,\cdots,e_{r'},e_{r'+1},\cdots,e_r$, using elements of $S$ 
such that :
$r' \geq 8r/9$, $\{e_1,\cdots,e_{r'}\}$ are orthogonal to $T$, and $\{e_{r'+1},\cdots,e_r\}$
spans $T$. If $k>2$ then by applying 
Claim~\ref{clm-subsets} to $e_1,\cdots,e_{r'}$ we get some 
$I\subseteq[r']$ such that $|I|\geq r'/4\geq$ $2r/9\geq 2k$, and $S_I$ is \emph{not} $SG_{k-1}$-closed 
(recall, $S_I:=S\cap\lrsp(e_i|i\in I)$).
There exist linearly independent $v_1,v_2,\cdots,v_{k-1} \in S_I$ such that no non-trivial
combination of these are in $S_I$ (and hence in $S$). On the other hand, if $k=2$ then define $v_1:=e_1$. 
Note that $v_1,v_2,\cdots,v_{k-1}$ are all orthogonal to $T$.

Consider any $v \in T$. By the $SG_k$-closure of $S$,
there exists some linear combination of $\{v,v_1,\cdots,v_{k-1}\}$ in $S$.
Call this the \emph{image} of $v$. Take two different $v,v' \in T$.
We argue that their images are distinct. Let the image of $v$ 
be $\alpha v + \sum_{i \leq k-1} \alpha_i v_i$ and the image
of $v'$ be $\beta v' + \sum_{i \leq k-1} \beta_i v_i$. 
Note that $\alpha\beta\ne0$ by the way we have chosen $v_1,v_2,\cdots,v_{k-1}$.
If these images are equal, then $\alpha v - \beta v' = \sum_{i \leq k-1} (\beta_i - \alpha_i)v_i$.
Since $v$ and $v'$ are not multiples of each other, the right hand side
is non-zero. The left hand side is a vector that is spanned by $e_{r'+1},\cdots,e_r$
but the right hand side is spanned by $e_1,\cdots,e_{r'}$. Contradiction.
Hence, all images are distinct.

Starting from any subset $T \subset S$ of rank $t\leq r/9$,
there exist linearly independent $v_1,\cdots,v_{k-1}$ $\in S$ such that
$S \cap \lrsp\{v_1,\cdots,v_{k-1},e_{r'+1},\cdots,e_r\}$
contains at least $2|T|$ vectors. The rank of this intersection is $t + (k-1)$.
Starting from $T$ being just one vector, we keep repeating this process.
This can go on for $h$ iterations, where $h$ is the smallest integer
such that $1 + (k-1)h \geq r/9$. So, $h = \lceil (r/9 - 1)/(k-1) \rceil$ which is at least $r/9k$, since
$r\geq 9k$.
We have $|S| \geq 2^h \geq 2^{r/9k}$. Contradiction.
\end{proof}

\section{Conclusion}

In this work we developed the strongest methods, to date, to study depth-$3$ identities.
%
The ideal methods hinge on a classification of zerodivisors of the ideals
generated by gates of a $\sps$ circuit (eg. Lemmas \ref{lem-non-zd}, \ref{lem-I-match} and 
\ref{lem-cancel}). That is useful in proving an ideal version of Chinese remaindering tailor-made
for $\sps$ circuits, which is in turn useful to show a connection between all the gates involved
in an identity. As a byproduct, it shows the existence of a low rank {\em nucleus identity} $C'$ 
sitting inside {\em any} given $\sps(k,d)$ identity $C$ (when $C$ is not minimal, $C'$ can still
be defined but it might not be homogeneous). This nucleus identity is quite mysterious and it
might be useful for PIT to understand (or classify) it further. For example, {\bf can the 
rank bound for the nucleus identity be improved to $O(k)$?} 

We generalize the notion of Sylvester-Gallai configurations to {\em any} field
and define a parameter $\sg_k(\FF,m)$ associated with field $\FF$. This number seems to be a
fundamental property of a field, and as we show, is very closely related to $\sps$ identities. 
It would be interesting to obtain bounds for $\sg_k(\FF,m)$ for different $\FF$. For example,
as also asked by \cite{KSar09}, {\bf can we nontrivially bound the number $\sg_k(\FF,m)$ for interesting 
fields: $\CC$, 
finite fields with large characteristic, or even $p$-adic fields?}
Other than the bounds for $\RR$, all that was known before
is $\sg_2(\CC,m)\leq 3$ \cite{EPS03}. 
We shed (a little) light on high dimension $\sg$ rank bounds by showing $\sg_k(\FF,m) = O(k\log m)$.
We conjecture: $\sg_k(\FF,m)$
is $O(k)$ for zero characteristic fields, while $O(k+\log_pm)$ for fields of characteristic $p>1$.

We also prove a property of a general collection $\bcP$ of partitions of a universe $\cU$, namely,
if $\bcP$ has at least $|\cU|-1$ partitions then it has an unbroken chain. It is tight and gives an idea 
of how a Sylvester-Gallai configuration in the non-nucleus part of a $\sps(k,d)$ identity ``spreads'' around. 

Finally, we ask: {\bf Can the rank bound for simple minimal real $\sps(k,d)$ identities be improved to 
$O(k)$?} The best constructions known, since \cite{DS06}, have rank $4(k-2)$. Likewise, over other fields,
our upper bound of $O(k^2\log d)$ still leaves some gap in understanding the exact dependence on $k$.

\section*{Acknowledgements}

We are grateful to Hausdorff Center for Mathematics, Bonn
for the kind support, especially, hosting the second author  
when part of the work was done. The first author thanks
Nils Frohberg for several detailed discussions that clarified 
the topic of incidence geometry and Sylvester-Gallai theorems.
We also thank Malte Beecken, Johannes Mittmann and Thomas Thierauf
for several interesting discussions.

\bibliographystyle{alpha}
\bibliography{pit-refs}

\begin{thebibliography}{KMSV09}

\bibitem[AB03]{AB03}
M.~Agrawal and S.~Biswas.
\newblock Primality and identity testing via {C}hinese remaindering.
\newblock {\em Journal of the ACM}, 50(4):429--443, 2003.
\newblock (first appeared in FOCS 1999).

\bibitem[Agr05]{A05}
M.~Agrawal.
\newblock Proving lower bounds via pseudo-random generators.
\newblock In {\em Proceedings of the 25th Annual Foundations of Software
  Technology and Theoretical Computer Science (FSTTCS)}, pages 92--105, 2005.

\bibitem[Agr06]{A06}
M.~Agrawal.
\newblock Determinant versus permanent.
\newblock In {\em Proceedings of the 25th {I}nternational {C}ongress of
  {M}athematicians (ICM)}, volume~3, pages 985--997, 2006.

\bibitem[AM07]{AM07}
V.~Arvind and P.~Mukhopadhyay.
\newblock The monomial ideal membership problem and polynomial identity
  testing.
\newblock In {\em Proceedings of the 18th International Symposium on Algorithms
  and Computation (ISAAC)}, pages 800--811, 2007.

\bibitem[AS09]{AS09}
M.~Agrawal and R.~Saptharishi.
\newblock Classifying polynomials and identity testing.
\newblock Technical report, IIT Kanpur,
  http://www.cse.iitk.ac.in/$\sim$manindra/survey/Identity.pdf, 2009.

\bibitem[AV08]{AV08}
M.~Agrawal and V.~Vinay.
\newblock Arithmetic circuits: A chasm at depth four.
\newblock In {\em Proceedings of the 49th Annual Symposium on Foundations of
  Computer Science (FOCS)}, pages 67--75, 2008.

\bibitem[BE67]{BE67}
W.~Bonnice and M.~Edelstein.
\newblock Flats associated with finite sets in {$P^d$}.
\newblock {\em Niew. Arch. Wisk.}, 15:11--14, 1967.

\bibitem[CK00]{CK00}
Z.~Chen and M.~Kao.
\newblock Reducing randomness via irrational numbers.
\newblock {\em SIAM J. on Computing}, 29(4):1247--1256, 2000.
\newblock (first appeared in STOC 1997).

\bibitem[DS06]{DS06}
Z.~Dvir and A.~Shpilka.
\newblock Locally decodable codes with 2 queries and polynomial identity
  testing for depth 3 circuits.
\newblock {\em SIAM J. on Computing}, 36(5):1404--1434, 2006.
\newblock (first appeared in STOC 2005).

\bibitem[EPS06]{EPS03}
N.~Elkies, L.~Pretorius, and K.~Swanepoel.
\newblock Sylvester-gallai theorems for complex numbers and quaternions.
\newblock {\em Discrete and computational geometry}, 35(3):361--373, 2006.

\bibitem[GKST02]{GKST02}
O.~Goldreich, H.~Karloff, L.~Schulman, and L.~Trevisan.
\newblock Lower bounds for linear locally decodable codes and private
  information retrieval.
\newblock In {\em Proceedings of the 17th Annual Computational Complexity
  Conference (CCC)}, pages 175--183, 2002.

\bibitem[Han65]{Han65}
S.~Hansen.
\newblock A generalization of a theorem of {S}ylvester on the lines determined
  by a finite point set.
\newblock {\em Mathematica Scandinavia}, 16:175--180, 1965.

\bibitem[KI04]{KI04}
V.~Kabanets and R.~Impagliazzo.
\newblock Derandomizing polynomial identity tests means proving circuit lower
  bounds.
\newblock {\em Computational Complexity}, 13(1):1--46, 2004.
\newblock (first appeared in STOC 2003).

\bibitem[KMSV09]{KMSV09}
Z.~Karnin, P.~Mukhopadhyay, A.~Shpilka, and I.~Volkovich.
\newblock Deterministic identity testing of depth 4 multilinear circuits with
  bounded top fan-in.
\newblock Technical Report TR09-116, ECCC,
  http://eccc.hpi-web.de/report/2009/116/, 2009.

\bibitem[KS01]{ks01}
A.~Klivans and D.~A. Spielman.
\newblock {Randomness efficient identity testing of multivariate polynomials}.
\newblock In {\em Proceedings of the 33rd Symposium on Theory of Computing
  (STOC)}, pages 216--223, 2001.

\bibitem[KS07]{KS07}
N.~Kayal and N.~Saxena.
\newblock Polynomial identity testing for depth 3 circuits.
\newblock {\em Computational Complexity}, 16(2):115--138, 2007.
\newblock (first appeared in CCC 2006).

\bibitem[KS08]{KSh08}
Z.~Karnin and A.~Shpilka.
\newblock Deterministic black box polynomial identity testing of depth-3
  arithmetic circuits with bounded top fan-in.
\newblock In {\em Proceedings of the 23rd Annual Conference on Computational
  Complexity (CCC)}, pages 280--291, 2008.

\bibitem[KS09a]{KSh09}
Z.~S. Karnin and A.~Shpilka.
\newblock Reconstruction of generalized depth-3 arithmetic circuits with
  bounded top fan-in.
\newblock In {\em Proceedings of the 24th Annual Conference on Computational
  Complexity (CCC)}, pages 274--285, 2009.

\bibitem[KS09b]{KSar09}
N.~Kayal and S.~Saraf.
\newblock Blackbox polynomial identity testing for depth 3 circuits.
\newblock In {\em Proceedings of the 50th Annual Symposium on Foundations of
  Computer Science (FOCS)}, 2009.

\bibitem[LV98]{LV98}
D.~Lewin and S.~Vadhan.
\newblock Checking polynomial identities over any field: Towards a
  derandomization?
\newblock In {\em Proceedings of the 30th Annual Symposium on the Theory of
  Computing (STOC)}, pages 428--437, 1998.

\bibitem[O'D09]{oDCher}
R.~O'Donnell.
\newblock Lecture 9: The {C}hernoff bound.
\newblock Technical report, CMU,
  http://www.cs.cmu.edu/$\sim$odonnell/prob/lecture9.pdf, 2009.

\bibitem[Sax08]{S08}
N.~Saxena.
\newblock Diagonal circuit identity testing and lower bounds.
\newblock In {\em Proceedings of the 35th Annual International Colloquium on
  Automata, Languages and Programming (ICALP)}, pages 60--71, 2008.

\bibitem[Sax09]{S09}
N.~Saxena.
\newblock Progress on polynomial identity testing.
\newblock {\em Bulletin of the European Association for Theoretical Computer
  Science (EATCS)- Computational Complexity Column}, (99):49--79, 2009.

\bibitem[Sch80]{Sch80}
J.~T. Schwartz.
\newblock Fast probabilistic algorithms for verification of polynomial
  identities.
\newblock {\em Journal of the ACM}, 27(4):701--717, 1980.

\bibitem[Shp09]{Sh09}
A.~Shpilka.
\newblock Interpolation of depth-3 arithmetic circuits with two multiplication
  gates.
\newblock {\em SIAM J. Comput.}, 38(6):2130--2161, 2009.
\newblock (first appeared in STOC 2007).

\bibitem[SS09]{SS09}
N.~Saxena and C.~Seshadhri.
\newblock An almost optimal rank bound for depth-3 identities.
\newblock In {\em Proceedings of the 24th Annual Conference on Computational
  Complexity (CCC)}, pages 137--148, 2009.

\bibitem[SV09]{SV09}
A.~Shpilka and I.~Volkovich.
\newblock Improved polynomial identity testing for read-once formulas.
\newblock In {\em Proceedings of the 13th International Workshop on
  Randomization and Computation (RANDOM)}, pages 700--713, 2009.

\bibitem[Zip79]{Z79}
R.~Zippel.
\newblock Probabilistic algorithms for sparse polynomials.
\newblock In {\em Proceedings of the International Symposium on Symbolic and
  Algebraic Manipulation (EUROSAM)}, pages 216--226, 1979.

\end{thebibliography}

\appendix
  
\section{Technical, Algebraic Lemmas}

We denote the polynomial ring $\FF[x_1,\ldots,x_n]$ by $R$. 

\begin{lemma}[Monic forms]\label{lem-monic-forms}
Let $|\FF|>d$ and $C$ be a $\sps(k,d)$ identity, over $\FF$, with nucleus $K$.
Let $y_0\in L(R)^*$ and $U$
be a subspace of $L(R)$ such that $L(R)=\FF y_0\oplus U\oplus K$.
Then there exists an invertible linear transformation $\tau:L(R)\rightarrow L(R)$
that fixes $K$ and :
\begin{itemize}
\item[1)] $\tau(C)$ is also a $\sps(k,d)$ identity with nucleus $K$ and the same simplicity,
minimality properties. 
\item[2)] Every form in $L^c_K(\tau(C))=$ $\tau(L^c_K(C))$ is monic wrt $y_0$.
\end{itemize}
\end{lemma}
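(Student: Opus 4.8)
The plan is to take $\tau$ to be a generic shear in the $y_0$-direction. Fix a basis $\{y_1,\ldots,y_s\}$ of $U$ (so $s=\rk(U)$) and a basis $\{z_1,\ldots,z_t\}$ of $K$, so that $\{y_0,y_1,\ldots,y_s,z_1,\ldots,z_t\}$ is a basis of $L(R)$. For a tuple $\vec{c}=(c_1,\ldots,c_s)\in\FF^s$, let $\tau_{\vec{c}}:L(R)\to L(R)$ be the linear map fixing $y_0$ and every $z_i$ and sending $y_j\mapsto y_j+c_jy_0$. Each $\tau_{\vec{c}}$ is unipotent, hence invertible, fixes $K$ pointwise, and extends to an automorphism of $R=\FF[x_1,\ldots,x_n]$ (a linear change of variables); so $\tau_{\vec{c}}(C)=\sum_{i\in[k]}\tau_{\vec{c}}(T_i)$ is again a sum of $k$ degree-$d$ multiplication terms. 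I will fix a suitable $\vec{c}$ at the end and set $\tau:=\tau_{\vec{c}}$.

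The first step is a reduction: it suffices to make every form of $\tau(L^c_K(T_1))$ monic wrt $y_0$. Since $\tau$ fixes $K$ setwise, $L^c_K(\tau(C))=\tau(L^c_K(C))$; and $\tau$ respects similarity mod $K$, so the $K$-matchings $\pi_i$ between $T_1$ and $T_i$ guaranteed by Theorem~\ref{thm-nucleus} transport to $K$-matchings $\pi_i':=\tau\circ\pi_i\circ\tau^{-1}$ between $\tau(T_1)$ and $\tau(T_i)$. Each $\pi_i'$ restricts to a bijection $\tau(L^c_K(T_1))\to L^c_K(\tau(T_i))$, and since $\pi_i'(\ell')\in\FF^*\ell'+K$ while $K$-elements have zero $y_0$-coordinate in $L(R)=\FF y_0\oplus U\oplus K$, the form $\pi_i'(\ell')$ is monic iff $\ell'$ is. Hence, once $\tau(L^c_K(T_1))$ is monic, so is every $L^c_K(\tau(T_i))$, and therefore $L^c_K(\tau(C))=\bigcup_{i\in[k]}L^c_K(\tau(T_i))$. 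This reduction is the one nonroutine point — it cuts the number of linear conditions we must avoid from $kd$ down to $|L^c_K(T_1)|\le d$, which is exactly where both the nucleus matching property and the hypothesis $|\FF|>d$ come in — and I expect it (with the ensuing count) to be the crux.

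To choose $\vec{c}$, write $\ell\in L^c_K(T_1)$ as $\ell=\alpha y_0+\sum_{j=1}^s\beta_jy_j+v$ with $\alpha\in\FF$ and $v\in K$; since $\ell\notin K$ we have $(\alpha,\beta_1,\ldots,\beta_s)\ne\vec{0}$. A direct computation gives $\tau_{\vec{c}}(\ell)=\bigl(\alpha+\sum_j c_j\beta_j\bigr)y_0+\sum_j\beta_jy_j+v$, which is non-monic exactly when $\alpha+\sum_j c_j\beta_j=0$. If all $\beta_j=0$ this is impossible (then $\alpha\ne0$); otherwise the bad tuples $\vec{c}$ form an affine hyperplane of $\FF^s$, of size $|\FF|^{s-1}$. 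Summing over the at most $d$ forms of $L^c_K(T_1)$, there are at most $d\,|\FF|^{s-1}<|\FF|^s$ bad tuples (the case $s=0$ is trivial: then $L(R)=\FF y_0\oplus K$, so any $\ell\notin K$ already has nonzero $y_0$-coefficient). Picking $\vec{c}$ outside all these hyperplanes and setting $\tau:=\tau_{\vec{c}}$ gives property~(2) by the reduction.

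It remains to verify property~(1). As $\tau$ is an invertible linear change of variables it extends to an automorphism of $R$, and injectivity yields at once: since $C=0$ also $\tau(C)=\sum_i\tau(T_i)=0$, so $\tau(C)$ is a $\sps(k,d)$ identity; for every $S\subseteq[k]$, $\sum_{i\in S}\tau(T_i)=0\iff\sum_{i\in S}T_i=0$, so minimality — and, since linear independence of any subfamily of $\{T_i\}$ is preserved, the independent-fanin and strong minimality (when relevant) — is unchanged; and a nonzero form divides every $\tau(T_i)$ iff its $\tau^{-1}$-image divides every $T_i$, so simplicity is unchanged. For the nucleus: $\tau$ fixes $K$, so $\rk(K)$ is unchanged, $L_K(\tau(T_i))=\tau(L_K(T_i))$ and hence $M(L_K(\tau(T_i)))=\tau(K_i)$, which are linearly independent by injectivity of $\tau$; and the $\pi_i'$ above are the required $K$-matchings between $\tau(T_1)$ and $\tau(T_i)$. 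So $K$ is a nucleus of $\tau(C)$, and the lemma follows.
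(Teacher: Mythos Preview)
Your proof is correct and follows essentially the same approach as the paper: reduce to making $L^c_K(T_1)$ monic via the nucleus matching property, then choose a $K$-fixing linear map avoiding at most $d$ bad hyperplanes (you do this by a direct union bound on shears $y_j\mapsto y_j+c_jy_0$, the paper by Schwartz--Zippel on the product $\prod_\ell \vec{\alpha_\ell}^T\vec{Y}$ to pick the first column of a general $A$), and finally note that an invertible linear change of variables preserves zeroness, simplicity, minimality, and the nucleus. The only differences are cosmetic --- your transformation family is the unipotent shears rather than all of $\mathrm{GL}$ on $\FF y_0\oplus U$, and your verification that $K$ remains a nucleus of $\tau(C)$ is spelled out more explicitly than the paper's one-line appeal to invertibility.
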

\begin{proof}
Let $r:=\rk(\FF y_0\oplus U)$.
Fix a basis $\{y_0,\ldots,y_{r-1}\}$ of $\FF y_0\oplus U$ and let $\vec{y}$ denote the column 
vector $[y_0 \cdots y_{r-1}]^T$.  
Let $\ell\in L^c_K(T_1)$.  Then there is a unique nonzero (column) vector $\vec{\alpha_\ell}\in\FF^r$ 
and a $v_\ell\in K$,
such that $\ell=\vec{\alpha_\ell}^T\cdot\vec{y}$ $+v_\ell$. We intend $\tau$ to be a linear transformation 
that fixes each element in $K$ and maps $\vec{y}$ to $A\vec{y}$ where $A\in\FF^{r\times r}$.
Such a $\tau$ will map $\ell$ to $\tau(\vec{\alpha_\ell}^T\cdot\vec{y})+v_\ell=$ 
$\vec{\alpha_\ell}^T\cdot\tau(\vec{y})+v_\ell=$ $\vec{\alpha_\ell}^T A\vec{y}+v_\ell$. 
To make $\tau(\ell)$ monic in 
$y_0$ we need to choose $A$ such that the first
coordinate in $\vec{\alpha_\ell}^T A$ is nonzero, i.e. $\vec{\alpha_\ell}^T A_{*1}\ne0$ where $A_{*1}$ 
is the first column of $A$. Thus, we want an $A$ such that 
$\prod_{\ell\in L^c_K(T_1)}\vec{\alpha_\ell}^T A_{*1}\ne0$. 

Now the nonzero multivariate polynomial $f(\vec{Y}):=$ $\prod_{\ell\in L^c_K(T_1)}\vec{\alpha_\ell}^T\vec{Y}$ 
has degree at most $d<|\FF|$. Hence, by the Schwartz-Zippel lemma \cite{Sch80, Z79} there exists a point 
$\vec{Y}\in\FF^r$ at which $f$ is nonzero. We can fix $A_{*1}$ to be that point. 
This fixes just one column of $A$ to a nonzero vector and we can arbitrarily fix the rest such that $A$ is an
invertible matrix. Thus, the corresponding invertible $\tau$ makes each $\ell\in L^c_K(T_1)$ monic in $y_0$. 
Since
$\tau$ fixes the nucleus $K$, matching property of the nucleus tells us that every form in $L^c_K(\tau(C))=$ 
$\tau(L^c_K(C))$ is monic in $y_0$.

Since $\tau$ is an invertible linear transformation, it is actually an automorphism of $L(R)$ and, 
in particular, the zeroness, simplicity and minimality properties of $C$ are invariant under it.
\end{proof}

An {\em ideal} $I$ of $R$
with generators $f_i, i\in[m]$, is the set $\{\sum_{i\in[m]}q_if_i|q_i\text{'s}\in R\}$
and is denoted by the notation $\ideal{f_1,\ldots,f_m}$. For any $f\in R$, the three notations
$f\equiv0 (\mod I)$, $f\equiv0 (\mod f_1,\ldots,f_m)$ and $f\in I$, mean the same.

An $f\in R$ is called a {\em zerodivisor} of an ideal $I$ (or mod $I$) if $f\notin I$
and there exists a $g\in R\setminus I$ such that $fg\in I$. 

Let $u, v\in R$. It is easy to see that if $u$ is nonzero mod $I$ and is a {\em non}-zerodivisor 
mod $I$ then: $uv\in I$ iff $v\in I$. This can be seen as some sort of a ``cancellation
rule'' for non-zerodivisors. We show such a cancellation rule in the case of ideals arising
in $\sps$ circuits.

\begin{lemma}[Non-zerodivisor]\label{lem-non-zd}
Let $f_1,\ldots,f_m$ be multiplication terms generating an ideal $I$, let $\ell\in L(R)$ and
$g\in R$.
If $\ell\notin\radsp(I)$ then: $\ell g\in I$ iff $g\in I$.
\end{lemma}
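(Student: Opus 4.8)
The plan is to deduce the nontrivial direction ($\ell g\in I\Rightarrow g\in I$) from the elementary fact that a single variable is a non-zerodivisor in a polynomial ring over an arbitrary commutative ring; the converse $g\in I\Rightarrow\ell g\in I$ is immediate from the definition of an ideal.

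First I would change coordinates to normalize $\ell$. Since $\ell\notin\radsp(I)$, I can extend a basis of the linear space $\radsp(I)$ by $\ell$ and then to a full basis $\{y_1,\ldots,y_n\}$ of $L(R)$ in which $y_n=\ell$ and $\radsp(I)\subseteq\lrsp(y_1,\ldots,y_{n-1})$. Sending $y_i\mapsto x_i$ defines an $\FF$-algebra automorphism $\tau$ of $R$ (an invertible linear substitution of variables), which preserves ideal membership and carries multiplication terms to multiplication terms. Under $\tau$, the equivalence ``$\ell g\in I$ iff $g\in I$'' is the same as the corresponding equivalence for $\tau(\ell)=x_n$, the ideal $\tau(I)=\ideal{\tau(f_1),\ldots,\tau(f_m)}$, and $\tau(g)$; moreover each $\tau(f_i)$ is a multiplication term all of whose linear forms lie in $\lrsp(x_1,\ldots,x_{n-1})$, so $\tau(f_i)\in R':=\FF[x_1,\ldots,x_{n-1}]$. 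Hence I may assume from the start that $\ell=x_n$ and $f_1,\ldots,f_m\in R'$.

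Now write $R=R'[x_n]$ and let $I'$ be the ideal of $R'$ generated by $f_1,\ldots,f_m$. Then $I=I'R$ is exactly the set of polynomials $\sum_{j\ge0}h_j x_n^j$ with every $h_j\in I'$, and $R/I$ is naturally isomorphic to the polynomial ring $(R'/I')[x_n]$. In any polynomial ring $A[t]$ the variable $t$ is a non-zerodivisor: if $t\cdot\sum_j a_j t^j=0$ then comparing coefficients forces every $a_j=0$. Applying this with $A=R'/I'$, the image of $\ell=x_n$ in $R/I$ is a non-zerodivisor, so $\ell g\in I$ forces $g\in I$, completing the proof.

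The argument is routine and I do not expect a genuine obstacle; the only place warranting care is the bookkeeping in the coordinate change — verifying that $\tau$ really does carry the given generators into $R'$ (so that $\radsp$ is respected), and that the identification $R/I\cong(R'/I')[x_n]$ is legitimate, i.e. that $I'R\cap R'=I'$ and that $I'R$ consists precisely of the polynomials in $x_n$ whose coefficients all lie in $I'$.
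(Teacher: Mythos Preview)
Your proof is correct and follows essentially the same approach as the paper: both apply an invertible linear change of variables sending $\ell$ to $x_n$ and $\radsp(I)$ into $\lrsp(x_1,\ldots,x_{n-1})$, and then use that $x_n$ is a non-zerodivisor modulo an ideal generated by polynomials free of $x_n$. The paper carries out the last step by explicitly expanding $g'$ and the $q_i'$ in powers of $x_n$ and comparing coefficients, whereas you package this as the isomorphism $R/I\cong (R'/I')[x_n]$ together with the fact that $t$ is a non-zerodivisor in any $A[t]$ --- but these are the same argument.
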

\begin{proof}
Assume $\ell\notin\radsp(I)$. If $I=\{0\}$ then the lemma is of course true. So let us
assume that $I\ne\{0\}$ and $\rk(\radsp(I))=:r\in[n-1]$. As $\ell\notin\radsp(I)$ there exists
an invertible linear transformation $\tau:L(R)\rightarrow L(R)$ that maps each 
form of $\radsp(I)$ to
$\lrsp(x_1,\ldots,x_r)$ and maps $\ell$ to $x_n$. Now suppose that $\ell g\in I$. 
This means that there are $q_1,\ldots,q_m\in R$ such that 
$\ell g =$ $\sum_{i=1}^m q_if_i$. Apply $\tau$ on this to get:
\begin{equation}\label{eqn-non-zd}
x_ng'=\sum_{i=1}^m q_i'\tau(f_i). 
\end{equation}
We know that $\tau(f_i)$'s are free of $x_n$. Express $g', q_i'$-s as polynomials wrt $x_n$, say 
\begin{align}
g'   &= \sum_{j\geq0}a_jx_n^j, \text{ where } a_j\in\FF[x_1,\ldots,x_{n-1}] \\  
q_i' &= \sum_{j\geq0}b_{i,j}x_n^j, \text{ where } b_{i,j}\in\FF[x_1,\ldots,x_{n-1}]
\end{align}
Now for some $d\geq1$ compare the coefficients of $x_n^d$ on both sides of Equation 
(\ref{eqn-non-zd}). We get $a_{d-1}=\sum_{i=1}^m b_{i,d}\tau(f_i)$, thus 
$a_{d-1}$ and $a_{d-1}x_n^{d-1}$ are in $\ideal{\tau(f_1),\ldots,\tau(f_m)}$. Doing this
for all $d\geq1$, we get $g'\in\ideal{\tau(f_1),\ldots,\tau(f_m)}$, hence 
$g=\tau^{-1}(g')\in\ideal{f_1,\ldots,f_m}=I$. This finishes the proof.
\end{proof}

All the ideals arising in this work are {\em homogeneous}, i.e. their generators are 
homogeneous polynomials. These ideals have some nice properties, as shown below.
Degree $\degr(\cdot)$ refers to the total degree unless there is a subscript specifying
the variable as well. 

\begin{lemma}[Homogeneous ideals]\label{lem-homo-ideal}
Say, $f_1,\ldots,f_m,g$ are homogeneous polynomials in $R$. Then,
 
1) If $\degr(g)<\degr(f_m)$ then: $g\in\ideal{f_1,\ldots,f_m}$ iff $g\in\ideal{f_1,\ldots,f_{m-1}}$.  

2) If $\degr(g)=\degr(f_m)$ then: $g\in\ideal{f_1,\ldots,f_m}$ iff $\exists a\in\FF$, 
$(g+af_m)\in\ideal{f_1,\ldots,f_{m-1}}$.
\end{lemma}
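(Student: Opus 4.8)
The plan is to use the graded structure of $R=\bigoplus_{\delta\ge0}R_\delta$, where $R_\delta$ is the $\FF$-space of homogeneous polynomials of degree $\delta$ (including $0$): an ideal generated by homogeneous polynomials is a \emph{graded} ideal, so membership can be tested one degree at a time. Concretely, for a polynomial $h$ write $h_{[\delta]}$ for its degree-$\delta$ homogeneous component; the map $h\mapsto h_{[\delta]}$ is $\FF$-linear and multiplicative in the graded sense, $(uv)_{[\delta]}=\sum_{\delta_1+\delta_2=\delta}u_{[\delta_1]}v_{[\delta_2]}$. The reverse implications in both parts are immediate ($\ideal{f_1,\ldots,f_{m-1}}\subseteq\ideal{f_1,\ldots,f_m}$ in part~1, and $g=(g+af_m)-af_m$ in part~2), so I only need the forward direction.

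First I would take $g=\sum_{i=1}^m q_if_i$ with $q_i\in R$ and set $\delta:=\degr(g)$. Applying $(\cdot)_{[\delta]}$ to both sides, and using that each $f_i$ is homogeneous of degree $\degr(f_i)$ and that $g_{[\delta]}=g$, I would obtain
\[
g\;=\;\sum_{i=1}^{m}\tilde q_i\, f_i,\qquad \tilde q_i:=(q_i)_{[\delta-\degr(f_i)]},
\]
with the convention that a homogeneous component of negative degree is $0$. This is the single substantive step; everything else is reading off consequences.

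For part~1, the hypothesis $\delta<\degr(f_m)$ forces $\tilde q_m=0$, so $g=\sum_{i=1}^{m-1}\tilde q_i f_i\in\ideal{f_1,\ldots,f_{m-1}}$. For part~2, the hypothesis $\delta=\degr(f_m)$ makes $\tilde q_m=(q_m)_{[0]}$ a scalar, say $-a\in\FF$, and then $g+af_m=\sum_{i=1}^{m-1}\tilde q_i f_i\in\ideal{f_1,\ldots,f_{m-1}}$, which is exactly the asserted statement for this choice of $a$.

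I do not expect a genuine obstacle: the only care needed is with degenerate conventions — treating components of negative degree as $0$, and the edge cases where some $f_i$ is the zero polynomial or a nonzero constant (in the latter case the ideal is all of $R$ and both statements hold trivially). The conceptual content is entirely the observation that homogeneous generators produce a graded ideal, whence ideal membership of a homogeneous element can be certified in a single degree.
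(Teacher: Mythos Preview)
Your proof is correct and is essentially identical to the paper's: both extract the degree-$\delta$ homogeneous component of an equation $g=\sum_i q_i f_i$ to replace each $q_i$ by its degree-$(\delta-\degr(f_i))$ part, then read off both conclusions. The paper's write-up is terser (it notes that the single equation $g=\sum_i [q_i]_{d-d_i}f_i$ proves both parts at once), but the argument is the same.
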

\begin{proof}
Say, $g\in\ideal{f_1,\ldots,f_m}$. Then, by definition, there exist $q$'s in $R$ such that,
\begin{equation}\label{eqn-homo-ideal}
g=\sum_{i=1}^m q_if_i.
\end{equation}
Let $d:=\degr(g)$.
If we compare the monomials of degree $d$ on both sides of Equation (\ref{eqn-homo-ideal})
then the LHS gives $g$. In the RHS we see that an $f_i$ of degree $d_i$ contributes $[q_i]_{(d-d_i)}f_i$,
where $[q]_j$ is defined to be the sum of the degree $j$ terms of $q$ (and, zero if $j<0$). Thus,
$g=\sum_{i=1}^m [q_i]_{d-d_i}f_i$. This equation proves both the properties at once.
\end{proof}

We show below that a congruence of two multiplication terms modulo an ideal, generated by terms, leads to
a matching via the radical-span. 
\begin{definition}[$L_U(\cdot), L^c_U(\cdot)$]
For a multiplication term $f$ and a subspace $U\subseteq L(R)$ define
$L_U(f):=L(f)\cap U$ and $L^c_U(f):=L(f)\setminus U$.
\end{definition}

\begin{lemma}[Congruence to Matching]\label{lem-I-match}
Let $I$ be an ideal generated by multiplication terms $\{f_1,\ldots,f_m\}$ and define $U:=\radsp(I)$. 
Let $g,h$ be multiplication terms
such that $g\equiv h\not\equiv0\ (\mod I)$. Then there is a $U$-matching between 
$L_U(g), L_U(h)$ and one between $L^c_U(g), L^c_U(h)$.
\end{lemma}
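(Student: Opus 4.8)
The plan is to reduce both required matchings to a single counting statement and then prove that statement by separating the forms of $g$ and $h$ lying in $U:=\radsp(I)$ from those lying outside. First, normalize coordinates by an invertible linear transformation (as in the proof of Lemma~\ref{lem-non-zd}) so that $U=\lrsp(x_1,\dots,x_r)$, $r:=\rk(U)$. By definition $U$ is spanned by the forms occurring in the generators $f_1,\dots,f_m$ of $I$, so each $f_i$ is a polynomial in $x_1,\dots,x_r$ alone; hence $I=I_0R$, where $I_0:=\ideal{f_1,\dots,f_m}$ is an ideal of $R_0:=\FF[x_1,\dots,x_r]$, and $R/I=A[x_{r+1},\dots,x_n]$ with $A:=R_0/I_0$. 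Write $R':=\FF[x_{r+1},\dots,x_n]$, and let $\bar f$ denote the residue of $f$ in $R/I$. For a multiplication term $f$ and a \emph{class} $\mathcal C$ --- either $U$ itself, or a coset $\FF^*\ell+U$ with $\ell\notin U$ --- let $\mu_{\mathcal C}(f)$ count the forms of $f$ (with repetition) that lie in $\mathcal C$. A $U$-matching between $L_U(g)$ and $L_U(h)$ exists iff $\mu_U(g)=\mu_U(h)$ --- any two forms of $U$ differ by an element of $U$, so once the cardinalities agree any bijection is a $U$-matching --- and a $U$-matching between $L^c_U(g)$ and $L^c_U(h)$ exists iff $\mu_{\mathcal C}(g)=\mu_{\mathcal C}(h)$ for every non-$U$ class $\mathcal C$. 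Hence it suffices to prove $\mu_{\mathcal C}(g)=\mu_{\mathcal C}(h)$ for all $\mathcal C$. Factor $g=c_g\,g_0g_1$ and $h=c_h\,h_0h_1$ with $c_g,c_h\in\FF^*$, $g_0:=M(L_U(g))$, $g_1:=M(L^c_U(g))$, and similarly for $h$.

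Next I would handle the non-$U$ classes via the projection $\pi_0\colon R\to R'$ sending $x_1,\dots,x_r\mapsto 0$. Since $I_0\subseteq\ideal{x_1,\dots,x_r}$, we get $I\subseteq\ker\pi_0$, so $\pi_0(g)=\pi_0(h)$. For a linear form $\ell$, $\pi_0(\ell)=0$ iff $\ell\in U$, and for $\ell,\ell'\notin U$, $\ell$ is similar to $\ell'$ mod $U$ iff $\pi_0(\ell)$ and $\pi_0(\ell')$ are proportional in $R'$. If $g_0=1$, then $\pi_0(g)=c_g\prod_{\ell\in L(g)}\pi_0(\ell)\neq 0$ in the UFD $R'$, so $\pi_0(h)\neq 0$, forcing $h_0=1$; comparing linear factorizations in $\pi_0(g)=\pi_0(h)$ then gives $\mu_{\mathcal C}(g)=\mu_{\mathcal C}(h)$ for every $\mathcal C$, and we are done. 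In general, in $A[x_{r+1},\dots,x_n]$: every form of $L^c_U(g)$ has a nonzero part in $x_{r+1},\dots,x_n$, so $\bar g_1$ has degree exactly $\degr(g_1)$ in $x_{r+1},\dots,x_n$, its top-degree component being $\pi_0(g_1)\in R'$; and since $g\notin I$ forces $\bar g_0\neq 0$ in $A$, multiplication by $\bar g_0$ does not lower that degree, the top-degree component becoming $\bar g_0\cdot\pi_0(g_1)$, which is nonzero because $\pi_0(g_1)$ has a coefficient in $\FF^*$. Comparing the top-degree components (in $x_{r+1},\dots,x_n$) of $\bar g=\bar h$ thus yields $\degr(g_1)=\degr(h_1)$ and $\bar g_0\,\pi_0(g_1)=\bar h_0\,\pi_0(h_1)$ in $A[x_{r+1},\dots,x_n]$.

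The last --- and hardest --- step is to cancel the $U$-parts: from $\bar g_0\,\pi_0(g_1)=\bar h_0\,\pi_0(h_1)$, with $\bar g_0,\bar h_0\neq 0$ and $\pi_0(g_1),\pi_0(h_1)\neq 0$, I want to conclude both that $\pi_0(g_1)$ is proportional to $\pi_0(h_1)$ over $\FF$ --- which, by the factor comparison above, gives $\mu_{\mathcal C}(g)=\mu_{\mathcal C}(h)$ for the non-$U$ classes --- and that $\degr(g_0)=\degr(h_0)$, i.e.\ $\mu_U(g)=\mu_U(h)$ (this last is immediate from $\degr(g_1)=\degr(h_1)$ once $\degr(g)=\degr(h)$, which holds in every application of the lemma). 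This is the main obstacle, and it is precisely where the hypothesis $g\not\equiv 0\not\equiv h\ (\mod I)$ is used: the forms composing $g_0,h_0$ lie in $U=\radsp(I)$ and hence are \emph{zerodivisors} modulo $I$ (and $A$ may carry nilpotents), so they cannot be cancelled outright. The plan is an induction on $\degr(g)$: in the inductive step, having already shown $g_0=1\Leftrightarrow h_0=1$, one peels a single form $\ell^*\in L_U(g)$ off $g$ together with a form of $L_U(h)$ off $h$, using $I=I_0R$, the fact that forms outside $U$ are non-zerodivisors modulo $I$ (Lemma~\ref{lem-non-zd}), and the combinatorial structure of $I_0$ (its associated primes are generated by linear forms), to reduce to a pair of multiplication terms of strictly smaller degree; the base case $g_0=h_0=1$ is the case already treated. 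Carrying out this peeling --- forcing the $U$-parts of $g$ and $h$ to agree class by class although they are zerodivisors --- is the technical heart, and once it is done both $U$-matchings follow at once.
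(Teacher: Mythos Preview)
Your framework is sound and actually leads to a cleaner argument than the paper's, but you stop just short of the finish and then propose an unnecessary induction. The ``hardest step'' is in fact immediate once you compare coefficients monomial by monomial. In $A[x_{r+1},\dots,x_n]$ you have $\bar g_0\,\pi_0(g_1)=\bar h_0\,\pi_0(h_1)$ with $\bar g_0,\bar h_0\in A\setminus\{0\}$ constants and $\pi_0(g_1),\pi_0(h_1)\in R'=\FF[x_{r+1},\dots,x_n]$. For any monomial $m$ in the $x_{>r}$ variables, writing $c_m:=[m]\pi_0(g_1)\in\FF$ and $c'_m:=[m]\pi_0(h_1)\in\FF$, the coefficient of $m$ on each side gives $c_m\bar g_0=c'_m\bar h_0$ in $A$. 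If $c_m\neq0$ but $c'_m=0$ this forces $g_0\in I_0$, contradicting $g\notin I$; so $\pi_0(g_1)$ and $\pi_0(h_1)$ have the same monomial support. If two ratios $c'_m/c_m$ and $c'_{m'}/c_{m'}$ differ, subtracting the corresponding relations yields a nonzero $\FF$-scalar times $\bar h_0$ equal to zero, i.e.\ $h_0\in I_0$, again a contradiction. Hence all ratios agree, $\pi_0(h_1)=\lambda\,\pi_0(g_1)$ for some $\lambda\in\FF^*$, and unique factorization in $R'$ gives the class-by-class count on $L^c_U$. Finally $\deg g=\deg h$ is not merely true ``in every application'': it follows from the hypotheses, since $I$ is homogeneous and $g,h$ are homogeneous with $g-h\in I$, $g\notin I$. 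No peeling, no associated primes.

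For comparison, the paper argues quite differently. It first shows $\deg g_0=\deg h_0$ directly: assuming $\deg h_0<\deg g_0$, one has $h\in\langle I,g_0\rangle$, drops the non-$U$ factors of $h$ via Lemma~\ref{lem-non-zd} to get $h_0\in\langle I,g_0\rangle$, and then Lemma~\ref{lem-homo-ideal} forces $h_0\in I$. Then for each fixed $\ell\notin U$ it shows the counts in the class $\FF^*\ell+U$ agree by a separate contradiction: send $U\to\lrsp(x_1,\dots,x_r)$ and $\ell\to x_n$, reduce $h_0h_1\in\langle I,g_1\rangle$, and compare the $x_n^d$-coefficient to push the auxiliary $q$ back into $I$. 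Your single top-degree comparison in $R/I=A[x_{r+1},\dots,x_n]$ handles all non-$U$ classes at once and is more conceptual; the paper's per-class argument is more hands-on but avoids ever naming the quotient ring.
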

\begin{proof}
Define $g_0:=M(L_U(g))$ and $h_0:=M(L_U(h))$. Suppose the list $L_U(g)$ is larger than the 
list $L_U(h)$. By the congruence we have $h\in\ideal{I,g_0}$. As $\radsp(I,g_0)=U$,
by Lemma \ref{lem-non-zd} we can drop the non-$U$ forms of $h$ to get $h_0\in\ideal{I,g_0}$.
As $\ideal{I,g_0}$ is a homogeneous ideal and $\degr(h_0)<\degr(g_0)$ we get by Lemma 
\ref{lem-homo-ideal} that $h_0\in I$. But this means $h\in I$, which contradicts the 
hypothesis. Thus, $\degr(h_0)\geq\degr(g_0)$ and by symmetry we get them infact equal.
Thus, the lists $L_U(g), L_U(h)$ are of equal size, which trivially $U$-matches
them. 

We will show that for any $\ell\in L(R)\setminus U$, the number of 
forms that are similar to $\ell$ mod $U$ in $L^c_U(g)$ is equal to that in $L^c_U(h)$.
This fact will prove the lemma as it shows that every form in $L^c_U(g)$ can be 
$U$-matched to a distinct form in $L^c_U(h)$. 

Pick an $\ell\in L(R)\setminus U$. Let $g_1$ be the product of the forms that are similar to $\ell$ 
mod $U$ in $L^c_U(g)$ (if none exist then set $g_1=1$), similarly define $h_1$ from $h$. 
Suppose $\degr(h_1)<\degr(g_1)=:d$.
By the congruence we have $h\in\ideal{I,g_1}$. As $\radsp(I,g_1)=U\oplus\FF\ell$,
by Lemma \ref{lem-non-zd} we can drop the non $\lrsp(U,\ell)$ forms of $h$ to get 
\begin{equation}\label{eqn-I-match}
h_0h_1\in\ideal{I,g_1}.
\end{equation} 
Define $r:=\rk(U)$ which has to be $>0$, as otherwise $I=\ideal{1}$ contradicting $h\notin I$. 
Pick an invertible linear transformation $\tau:L(R)\rightarrow L(R)$ such that forms in $U$ are mapped
inside $\lrsp(x_1,\ldots,x_r)$ and $\ell\mapsto x_n$. Apply $\tau$ on Equation (\ref{eqn-I-match}) to 
get 
\begin{equation}\label{eqn-I-match-2}
h_0'h_1'=\sum_{i=1}^m q_if_i'+qg_1', 
\end{equation}
where $h_0'$ and $f_i'$-s are in $\FF[x_1,\ldots,x_r]$; $h_1'$ is $1$ or is a polynomial with 
$\degr_{x_n}\in[d-1]$; 
$g_1'$ is a polynomial with $\degr_{x_n}=d$; and $q\text{'s}\in R$. With these conditions if we
compare the coefficients of $x_n^d$ on both sides of Equation (\ref{eqn-I-match-2}) then we 
get $q\in\ideal{f_1',\ldots,f_m'}$, hence $\tau^{-1}(q)\in\ideal{f_1,\ldots,f_m}=I$. Thus, applying
$\tau^{-1}$ on Equation (\ref{eqn-I-match-2}) we get $h_0h_1\in I$, so $h\in I$, contradicting the 
hypothesis. Thus, 
$\degr(h_1)\geq\degr(g_1)$ and by symmetry we get them infact equal. This shows the number of 
forms that are similar to $\ell$ mod $U$ in $L^c_U(g)$ is equal to that in $L^c_U(h)$, finishing the 
proof.
\end{proof}

One pleasant consequence of $K$-matching all the multiplication terms in an identity is that we get
a smaller identity, using linear forms solely from $K$, called the {\em nucleus identity}. To see that we 
use a metric associated with matchings, first introduced in \cite{SS09}.

\begin{definition}[Scaling factor]
Let $K$ be a subspace of $L(R)$ and $L_1, L_2$ be two lists of linear forms in $L(R)\setminus K$. 
Let $\pi$ be a $K$-matching between $L_1, L_2$. Then for every $\ell\in L_1$, there is a {\em unique}
$c_\ell\in\FF^*$ such that $\pi(\ell)\in c_\ell\ell+K$ (if there is another $d\in\FF$ with 
$\pi(\ell)\in d\ell+K$, then $(c_\ell-d)\ell\in K$, implying $\ell\in K$, a contradiction). 

We define the {\em scaling factor of $\pi$}, $\scal(\pi):=\prod_{\ell\in L_1}c_\ell$.  
\end{definition}

\begin{lemma}[Nucleus identity]\label{lem-nucleus}
Suppose $C=\sum_iT_i$ is a $\sps(k,d)$ identity and $K$ is a subspace of $L(R)$ such that 
$T_1, T_i$ are $K$-matched, for all $i\in[k]$. Then the terms 
$M(L_K(T_i))$, for $i\in[k]$, are all of the same degree, say $d'$, and form a $\sps(k,d')$ identity 
$\sum_{i\in[k]}\alpha_i M(L_K(T_i))$, for some $\alpha_i\in\FF^*$.
\end{lemma}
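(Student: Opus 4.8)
The plan is to project $C$ onto the nucleus $K$ by extracting, term by term, the part of lowest order with respect to a grading that measures the ``$K$-content'' of a monomial, and then to read off the scalars $\alpha_i$ from the scaling factors of the matchings $\pi_i$.

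First I would dispose of the degree statement. A $K$-matching $\pi_i$ between $T_1,T_i$ is a bijection of lists with $\pi_i(\ell)\in\FF^*\ell+K$, and $\FF^*\ell+K=K$ exactly when $\ell\in K$; hence $\pi_i$ restricts to bijections $L_K(T_1)\to L_K(T_i)$ and $L^c_K(T_1)\to L^c_K(T_i)$, so $|L_K(T_i)|=|L_K(T_1)|=:d'$ for every $i$, and each $K_i:=M(L_K(T_i))$ has degree exactly $d'$. Next I would fix a complement $L(R)=W\oplus K$, bases $\{w_1,\dots,w_s\}$ of $W$ and $\{v_1,\dots,v_t\}$ of $K$, identify $R=\FF[w_1,\dots,w_s,v_1,\dots,v_t]$, and grade $R$ by the total degree in the $v$-variables (call it the \emph{$K$-degree}); note $R$ is an integral domain whose $K$-degree-$0$ part is the domain $\FF[w_1,\dots,w_s]$. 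Writing $T_i=c_iK_iP_i$ with $c_i\in\FF^*$ and $P_i:=M(L^c_K(T_i))$, observe that $K_i$ is homogeneous of $K$-degree $d'$ while $P_i$ has minimum $K$-degree $0$. Writing each $\ell\in L^c_K(T_1)$ as $\ell=w_\ell+v_\ell$ with $w_\ell\in W\setminus\{0\}$ (nonzero since $\ell\notin K$), the $K$-degree-$0$ part of $P_1$ is $\bar P_1:=\prod_{\ell\in L^c_K(T_1)}w_\ell\ne0$. For each $i$, letting $c_{\ell,i}\in\FF^*$ be the unique scalar with $\pi_i(\ell)\in c_{\ell,i}\ell+K$, the $W$-component of $\pi_i(\ell)$ equals $c_{\ell,i}w_\ell$; since $\pi_i$ bijects lists we have $P_i=\prod_{\ell\in L^c_K(T_1)}\pi_i(\ell)$, so its $K$-degree-$0$ part is $\prod_{\ell}c_{\ell,i}w_\ell=\gamma_i\bar P_1$, where $\gamma_i:=\scal(\pi_i|_{L^c_K(T_1)})=\prod_{\ell}c_{\ell,i}\in\FF^*$.

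Finally I would extract the lowest-$K$-degree part of the identity $0=C=\sum_ic_iK_iP_i$: the $K$-degree-$d'$ homogeneous component of $c_iK_iP_i$ is $c_iK_i\cdot(\text{$K$-degree-}0\text{ part of }P_i)=c_i\gamma_iK_i\bar P_1$, so comparing $K$-degree-$d'$ parts gives $\bar P_1\cdot\sum_{i\in[k]}c_i\gamma_iK_i=0$. Since $R$ is a domain and $\bar P_1\ne0$, we conclude $\sum_{i\in[k]}\alpha_iK_i=0$ with $\alpha_i:=c_i\gamma_i\in\FF^*$; as each $\alpha_iK_i$ is a nonzero scalar times a product of $d'$ linear forms, this is a $\sps(k,d')$ identity, as required. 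The only load-bearing step is the multiplicativity observation: the matching controls the forms only one at a time modulo $K$, so I need that taking the product over $L^c_K(T_1)$ still aligns the lowest-order parts of all the $P_i$ up to the single scalar $\gamma_i$ — the grading by $K$-degree is exactly the device that turns a product of terms of the shape ``$\ell+(\text{something in }K)$'' into a clean equality of homogeneous elements in the domain $\FF[w_1,\dots,w_s]$.
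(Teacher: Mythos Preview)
Your proof is correct and takes essentially the same approach as the paper's: decompose $L(R)=K\oplus W$, use the matchings to align the $W$-parts of the $T_i$, and then extract the leading-in-$W$ piece of the identity $C=0$. The only cosmetic difference is that the paper extracts the coefficient of a single lex-leading $W$-monomial of $M(L^c_K(T_1))$ (obtaining $\sum_i\alpha_iK_i=0$ directly), whereas you extract the whole bottom-$K$-degree graded component and then cancel the common factor $\bar P_1$ using that $R$ is a domain.
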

\begin{proof}
Since $T_1, T_i$ are $K$-matched, we get from the definition of matching that terms $M(L_K(T_1))$,
$M(L_K(T_i))$ have the same degree $d'\geq0$. Furthermore, $M(L^c_K(T_1))$
and $M(L^c_K(T_i))$ are also $K$-matched, call this induced
matching $\pi_i$. As all the forms in $L^c_K(T_1)$ are outside $K$, the scaling factor $\scal(\pi_i)$ is
well defined, for all $i\in[k]$. 

Fix a subspace $U$ such that $L(R)=K\oplus U$ and let $r:=\rk(K)$. Fix an invertible linear 
transformation $\tau:L(R)\rightarrow L(R)$ that maps $K$ to $\lrsp(x_1,\ldots,x_r)$. It follows that
for any form $\ell\in L^c_K(T_1)$, $\tau(\ell)$ is a form with a nonzero coefficient wrt some $x_i$, 
$i>r$ (otherwise $\tau(\ell)\in\lrsp(x_1,\ldots,x_r)$, thus $\ell\in K$, a contradiction). Call the
largest such $i$, $j_\ell$. If we look at the product (note: it is over a list so it could have repeated factors), 
\begin{equation}
\alpha_1:=\prod_{\ell\in L^c_K(T_1)}[x_{j_\ell}]\tau(\ell)
\end{equation}
($[\vec{x}^{\vec{i}}]f$ gives the coefficient of the monomial $\vec{x}^{\vec{i}}$ in $f$), then
it is the coefficient of $\prod_{\ell\in L^c_K(T_1)}x_{j_\ell}$ in $\tau(M(L^c_K(T_1)))$, in other words,
$\alpha_1$ is its leading coefficient wrt lexicographic  ordering of variables. Note that, for $i\in[k]$, $\pi_i$ 
still $\tau(K)$-matches $\tau(L^c_K(T_1)), \tau(L^c_K(T_i))$ with the same scaling factor (if 
$\pi_i(\ell)\in c_\ell\ell+K$ then $\tau(\pi_i(\ell))\in c_\ell\tau(\ell)+\tau(K)$). This means that the leading 
coefficient of $\tau(M(L^c_K(T_i)))$ is $\scal(\pi_i)\cdot\alpha_1=:\alpha_i$, for all $i>1$. Thus, we have 
pinpointed the coefficient of $\prod_{\ell\in L^c_K(T_1)}x_{j_\ell}$ in $\tau(M(L^c_K(T_i)))$ as 
$\alpha_i$, for all $i\in[k]$. Now compare the coefficients of $\prod_{\ell\in L^c_K(T_1)}x_{j_\ell}$ in 
the identity $\tau(C)=0$. This gives $\sum_{i\in[k]}\alpha_i\cdot\tau(M(L_K(T_i)))$ $=0$. Applying the 
inverse of $\tau$, we get the nucleus identity.
\end{proof}

In Lemma \ref{lem-non-zd} we have already come across a cancellation rule for non-zerodivisors. 
Here we see a situation in which it is stronger.

\begin{lemma}[Cancellation]\label{lem-cancel}
Let $K$ be some subspace of $L(R)$ and let $\ell_1,\ldots,\ell_m\in L(R)\setminus K$ be
linearly independent modulo $K$. Let $f_1,\ldots,f_m$ be multiplication terms similar to powers of
$\ell_1,\ldots,\ell_m$ respectively modulo $K$ (i.e. each form in $f_i$ is in $(\FF^*\ell_i+K)$). 
Let $\ell\in L(R)^*$ such that for some $s\in[m]$, $\ell\in\FF\ell_s+K$. Then,  
for any polynomial $f\in R$, 
$$\ell f\in\ideal{f_1,\ldots,f_m} \text{ iff } f\in\ideal{f_1,\ldots,\frac{f_s}{\gcd(f_s,\ell)},\ldots,f_m}.$$
\end{lemma}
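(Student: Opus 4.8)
The plan is to dispose of the trivial implication directly and then prove the substantive one by splitting on whether $\ell$ is (up to a scalar) one of the linear forms occurring in $f_s$. For the ``if'' direction, if $f\in I':=\ideal{f_1,\dots,f_s/\gcd(f_s,\ell),\dots,f_m}$, then $\ell f$ is an $R$-combination of the $\ell f_i$ ($i\neq s$, each in $\ideal{f_i}$) and of $\ell\cdot(f_s/\gcd(f_s,\ell))$, which equals $f_s$ or $\ell f_s$ and is in either case a multiple of $f_s$; hence $\ell f\in\ideal{f_1,\dots,f_m}=:I$. Conversely, suppose $\ell f\in I$. \textbf{Case 1: $\gcd(f_s,\ell)=\ell$.} Write $f_s$ as a unit times $\ell g'$, where $g':=f_s/\gcd(f_s,\ell)$ is again a multiplication term, so $I'=\ideal{f_1,\dots,g',\dots,f_m}$. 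From $\ell f=q\,\ell g'+\sum_{i\neq s}q_i f_i$ we obtain $\ell(f-qg')\in J:=\ideal{f_i\mid i\neq s}$. Every form of every $f_i$ with $i\neq s$ lies in $\FF^*\ell_i+K$, so $\radsp(J)\subseteq\lrsp(\ell_i\mid i\neq s)+K$, whereas $\ell$ is a scalar multiple of a form $c'\ell_s+v'$ of $f_s$ with $c'\in\FF^*$ and the $\ell_i$ are independent modulo $K$; thus $\ell\notin\radsp(J)$, and Lemma~\ref{lem-non-zd} gives $f-qg'\in J$, hence $f\in J+\ideal{g'}=I'$.

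\textbf{Case 2: $\gcd(f_s,\ell)=1$.} Here $I'=I$, so the claim is exactly that $\ell$ is a non-zerodivisor modulo $I$. This is \emph{not} a consequence of Lemma~\ref{lem-non-zd}: $\ell$ may well lie in $\radsp(I)$ --- for instance with $m=1$, $f_1=(\ell_1+z_1)(\ell_1+z_2)$ for $z_1,z_2\in K$ and $\ell=z_1-z_2$. My remedy is a change of coordinates. I would choose an invertible linear map $\tau$ of $L(R)$ --- an $\FF$-algebra automorphism of $R$, hence harmless for ideal membership --- carrying a basis of $K$ to $z_1,\dots,z_r$, each $\ell_i$ to $y_i$, and extending arbitrarily by $t_1,\dots,t_{n-r-m}$. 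Set $A:=\FF[z_1,\dots,z_r,t_1,\dots,t_{n-r-m}]$, so that $R=A[y_1,\dots,y_m]$. After rescaling factors, $\tau(f_i)$ equals a unit times $g_i:=\prod_{j=1}^{d_i}(y_i+w_{ij})$ with $w_{ij}\in\lrsp(z_1,\dots,z_r)\subseteq A$, whence $\tau(I)=\ideal{g_1,\dots,g_m}$; likewise $\tau(\ell)=c\,y_s+v$ with $c\in\FF$ and $v\in\lrsp(z_1,\dots,z_r)\subseteq A$. Because each $g_i$ is monic in its own variable $y_i$ and these variables are distinct, $R/\tau(I)$ is the tensor product over $A$ of the rings $A[y_i]/(g_i)$, and is therefore a \emph{free} $A$-module of rank $N:=\prod_i d_i$.

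The crux is then a determinant computation. Multiplication by $\tau(\ell)$ on $R/\tau(I)$ is $A$-linear, and since $\tau(\ell)$ involves no $y_i$ with $i\neq s$, its $A$-determinant equals $\det_A\big(\tau(\ell)\mid A[y_s]/(g_s)\big)^{N/d_s}$; as $g_s$ is monic, the base of this power is the resultant $\prod_{j=1}^{d_s}\tau(\ell)(-w_{sj})=\prod_{j=1}^{d_s}(v-c\,w_{sj})$. A factor $v-c\,w_{sj}$ vanishes only if $\tau(\ell)$ is a scalar multiple of $y_s+w_{sj}$, i.e.\ only if $\ell$ is a scalar multiple of one of the forms of $f_s$ --- excluded by $\gcd(f_s,\ell)=1$. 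Hence this determinant is a nonzero element $D$ of the integral domain $A$. Now if $\tau(\ell)\cdot\bar h=0$ in $R/\tau(I)$, multiplying by the classical adjoint of ``multiply by $\tau(\ell)$'' yields $D\bar h=0$; since $R/\tau(I)$ is free, hence torsion-free, over $A$ and $D\neq 0$, we get $\bar h=0$. Thus $\tau(\ell)$ is a non-zerodivisor modulo $\tau(I)$, equivalently $\ell$ is one modulo $I$, which finishes the ``only if'' direction and the lemma.

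\textbf{Expected main obstacle.} The only genuine difficulty is Case 2: the radical-span criterion of Lemma~\ref{lem-non-zd} really does fail, and one is forced to exploit the finer fact that, in suitable coordinates, $R/I$ is a finite free module over the polynomial ring in the ``nucleus variables''. The determinant/adjoint argument over that base ring carries the whole load; by contrast, the coordinate change and the unique-factorization bookkeeping for linear forms are routine.
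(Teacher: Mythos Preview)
Your proof is correct, but it takes a genuinely different route from the paper's.

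The paper argues uniformly, without any case split on $\gcd(f_s,\ell)$. It writes $\ell f=\sum_i q_i f_i$ and chooses a representation minimizing the set $J=\{j\neq s:\ell\nmid q_j\}$. If $J=\emptyset$, the conclusion is read off directly (this subsumes both of your cases at once). If $J\neq\emptyset$, one picks $j^*\in J$, reduces modulo the auxiliary ideal $\ideal{\ell,f_s,\{f_j\mid j\in J\setminus\{j^*\}\}}$, and applies Lemma~\ref{lem-non-zd} (the forms of $f_{j^*}$ avoid that ideal's radical-span) to force $q_{j^*}$ into the ideal; substituting back contradicts the minimality of $J$. So the only tool is Lemma~\ref{lem-non-zd}, wielded against a carefully chosen ideal, plus an extremal choice of coefficients.

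Your approach instead explains \emph{structurally} why $\ell$ is regular modulo $I$ in the hard case: after the change of variables, $R/\tau(I)\cong\bigotimes_A A[y_i]/(g_i)$ is free over the polynomial ring $A$, multiplication by $\tau(\ell)=cy_s+v$ acts through a single tensor factor, and its $A$-determinant is the product $\prod_j(v-cw_{sj})$, nonzero exactly because $\ell$ is not similar to any factor of $f_s$. The Cayley--Hamilton/adjugate trick over the domain $A$ then gives injectivity. This is more commutative algebra than the paper invokes, but it pinpoints where the hypothesis $\ell\in\FF\ell_s+K$ is used (to land $\tau(\ell)$ in one tensor factor) and why the ``$\ell\in\radsp(I)$'' obstruction to Lemma~\ref{lem-non-zd} is illusory here. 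The paper's proof is shorter and more self-contained; yours is more conceptual and reusable.
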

\begin{proof}
Suppose $\ell f\in\ideal{f_1,\ldots,f_m}$. Then, by definition, there exist $q$'s in $R$ such that,
\begin{equation}\label{eqn-cancel}
\ell f=\sum_{i=1}^m q_if_i.
\end{equation}
Additionally assume these $q_i$-s to be such that the set 
$J:=\{j\in[m]\setminus\{s\}\ |\ \ell\nmid q_j\}$ is the smallest possible.
If $\ell|q_i$, for all $i\in[m]\setminus\{s\}$, then $\ell$ has to divide $q_sf_s$. This means that
$\ell$ has to divide $q_s\gcd(\ell,f_s)$, thus we get, 
$$f=\sum_{i\in[m]\setminus\{s\}}\frac{q_i}{\ell}f_i\ +\ 
\frac{q_s\gcd(\ell,f_s)}{\ell}\cdot\frac{f_s}{\gcd(f_s,\ell)}$$ 
and we are done. 

So the remaining case is when the set $J:=$ $\{j\in[m]\setminus\{s\}\ |\ \ell\nmid q_j\}$ is nonempty.
Fix an element $j^*\in J$. Consider ideal $I:=\ideal{\{\ell,f_s\}\cup \{f_j|j^*\ne j\in J\}}$. Reducing
Equation (\ref{eqn-cancel}) modulo $I$ we get, $q_{j^*}f_{j^*}\equiv0 (\mod I)$. Note that 
$\radsp(I)\subseteq K+\lrsp(\{\ell_j|j^*\ne j\in[m]\})$ while each form in $L(f_{j^*})$ is in
$(\FF^*\ell_{j^*}+K)$ disjoint from $\radsp(I)$, thus by Lemma \ref{lem-non-zd} we can drop $f_{j^*}$ 
from the last congruence and get $q_{j^*}\in I$. This means 
$q_{j^*}f_{j^*}\in\ideal{\{\ell f_{j^*},f_s\}\cup \{f_j|j^*\ne j\in J\}}$. We plug this in the
$j^*$-th summand of Equation (\ref{eqn-cancel}) and after simplifications get 
(verify that the $[m]\setminus(\{s\}\cup J)$ summands are unaffected):
\begin{align*}\label{eqn-cancel-2}
\ell f &= \sum_{i=1}^m q_if_i \\
&= q_s'f_s + (\ell q_{j^*}')f_{j^*} + \sum_{j\in J\setminus\{j^*\}}q_j'f_j +
\sum_{j\in[m]\setminus(\{s\}\cup J)}q_jf_j
\end{align*}
Notice that for $j\in[m]\setminus(\{s\}\cup J)$, $\ell$ divides $q_j$, thus the above equation contradicts
the assumed minimality of $J$. This shows that $J$ was empty to begin with, thus finishing the proof. 
\end{proof}

%
%
%

\end{document}